\newcommand{\cmark}{\ding{51}}%
\newcommand{\xmark}{\ding{55}}%
\newtheorem{thm}{Theorem}[section]
\newtheorem{assumption}[thm]{Assumption}
\newtheorem{prop}[thm]{Proposition}
\newtheoremstyle{remark2}{1ex}{1ex}%
      {}
      {}
      {\bf}
      {.}
      {5pt}
      {\thmname{#1}\thmnumber{ #2}\thmnote{ \slshape{(#3)}}} 
\theoremstyle{remark2} 
\newtheorem{rem}[thm]{Remark}
\newtheorem{defn}[thm]{Definition}
\newtheorem{example}[thm]{Example}
\renewenvironment{proof}[1][\textbfseries\proofname]{\par
   \pushQED{\qed}%
   \normalfont \topsep6\p@\@plus6\p@\relax
   \trivlist
   \item[\hskip\labelsep
     #1\@addpunct{:}]\ignorespaces
}{%
   \popQED\endtrivlist\@endpefalse
}
\tikzset{
  edge node/.code={%
      \expandafter\def\expandafter\tikz@tonodes\expandafter{\tikz@tonodes #1}}}
\tikzset{
 iff/.style={implies-implies,double equal sign distance, thick, draw},
 imp/.style={-implies,double equal sign distance, thick, draw},
 mutex/.style={
   <->, thick, draw,
   edge node={node [fill=white, inner sep=1pt, sloped, allow upside down, auto=false]{$\boldsymbol\oplus$}}},
}
\tikzset{
}
\newcommand{\diff}{\mathrm{d}}
\newcommand{\va}{\bm a}
\newcommand{\vd}{\bm d}
\newcommand{\vg}{\bm g}
\newcommand{\vx}{\bm x}
\newcommand{\vy}{\bm y}
\newcommand{\vz}{\bm z}
\newcommand{\vr}{\bm r}
\newcommand{\vh}{\bm h}
\newcommand{\vt}{\bm t}
\newcommand{\mX}{\bm X}
\newcommand{\mU}{\bm U}
\newcommand{\mS}{\bm S}
\newcommand{\mT}{\bm T}
\newcommand{\mV}{\bm V}
\newcommand{\mI}{\bm I}
\newcommand{\vzero}{\bm 0}
\newcommand{\mOmega}{\bm \varOmega}
\newcommand{\VaR}{\operatorname{VaR}}
\newcommand{\CoVaR}{\operatorname{CoVaR}}
\newcommand{\CoES}{\operatorname{CoES}}
\newcommand{\MES}{\operatorname{MES}}
\newcommand{\ES}{\operatorname{ES}}
\newcommand{\OS}{\operatorname{OS}}
\newcommand{\SR}{\operatorname{SR}}
\newcommand{\E}{\operatorname{E}}
\newcommand{\Var}{\operatorname{Var}}
\newcommand{\p}{\operatorname{P}}
\newcommand{\D}{\,\mathrm{d}}
\DeclareMathOperator*{\interior}{int}
\DeclareMathOperator*{\conv}{conv}
\newcommand{\F}{\mathcal{F}}
\newcommand{\R}{\mathbb{R}}
\newcommand{\A}{\mathsf{A}}
\renewcommand{\O}{\mathsf{O}}
\newcommand{\eps}{\varepsilon}
\renewcommand{\a}{\alpha}
\renewcommand{\b}{\beta}
\renewcommand{\P}{\mathcal P}
\newcommand{\one}{\mathds{1}}
\renewcommand{\rm}{\normalfont \rmfamily}
\renewcommand{\bf}{\normalfont \bfseries}
\newcommand{\lorder}{\preceq}
\newcommand{\slorder}{\prec}
\newcommand{\lex}{\lorder_{\mathrm{lex}}}
\newcommand{\gex}{\succeq_{\mathrm{lex}}}
\newcommand{\slex}{\slorder_{\mathrm{lex}}}
\def\be{\begin{equation} \label}
\def\ee{\end{equation}}
\numberwithin{equation}{section} 
\newcommand{\Comments}{1}
\newcommand{\mynote}[2]{\ifnum\Comments=1\textcolor{#1}{#2}\fi}
\newcommand{\mytodo}[2]{\ifnum\Comments=1%
  \todo[linecolor=#1!80!black,backgroundcolor=#1,bordercolor=#1!80!black]{#2}\fi}
\begin{document}

%

\title{Backtesting Systemic Risk Forecasts using Multi-Objective Elicitability\footnote{We are grateful to Immanuel Bomze for suggesting to consider multivariate scoring functions equipped with general orders. Furthermore, we would like to thank Timo Dimitriadis, R\"udiger Frey, Christoph Hanck, Jana Hlavinov\'a, Kurt Hornik, Marc-Oliver Pohle, Birgit Rudloff and Johanna F. Ziegel for detailed comments and valuable discussions. Of course, all errors and and opinions expressed in this article are solely the authors' responsibility. Yannick Hoga gratefully acknowledges support of the German Research Foundation (DFG) through grant HO 6305/2-1.}}

\author{Tobias Fissler\thanks{Vienna University of Economics and Business (WU), Department of Finance, Accounting and Statistics, Welthandelsplatz 1, 1020 Vienna, Austria, 
		e-mail:  \href{mailto:tobias.fissler@wu.ac.at}{tobias.fissler@wu.ac.at}.} \and Yannick Hoga\thanks{University of Duisburg-Essen, Faculty of Economics and Business Administration, Universit\"atsstra\ss e 12, D-45117 Essen, Germany, tel.~+49\,201\,1834365, 
		e-mail: \href{mailto:yannick.hoga@vwl.uni-due.de}{yannick.hoga@vwl.uni-due.de}.}
}
\date{\today}
\maketitle

\begin{abstract}
\noindent
Systemic risk measures such as CoVaR, CoES and MES are widely-used in finance, macroeconomics and by regulatory bodies. Despite their importance, we show that they fail to be elicitable and identifiable. This renders forecast comparison and validation, commonly summarised as `backtesting', impossible.
The novel notion of \emph{multi-objective elicitability} solves this problem. Specifically, we propose Diebold--Mariano type tests utilising two-dimensional scores equipped with the lexicographic order.
We illustrate the test decisions by an easy-to-apply traffic-light approach. We apply our traffic-light approach to DAX~30 and S\&P~500 returns, and infer some recommendations for regulators.\\

\noindent \textbf{Keywords:} Backtest; (Conditional) Elicitability; Forecasting; Identifiability; 
Lexicographic Order; Multi-objective Optimisation; Systemic Risk\\

\noindent \textbf{JEL classification:} C18 (Methodological Issues), C52 (Model Evaluation, Validation, and Selection), C58 (Financial Econometrics)\\

\end{abstract}

\begin{bibunit}

\section{Motivation}\label{Motivation}

Regulating financial institutions in isolation is often not sufficient to prevent financial crises due to the interdependent risks these institutions face. In particular, their losses commonly exhibit a pronounced comonotonic behaviour in the extreme tails: When one financial institution, or the market as a whole, is in distress, other institutions are much more prone to being at risk as well.
The U.S.~subprime mortgage crisis of 2008--2009, the European sovereign debt crisis of 2010--2011 and 
the Covid-19 crash of 2020 have forcefully demonstrated this fact and also the need to assess the \emph{systemic} nature of risk. As a consequence of these crises, a huge literature on measuring systemic risk has emerged over the last decade \citep{GK11,ChenIyengarMoallemi2013,AB16, Aea17, BE17, FeinsteinRudloffWeber2017}. 

Systemic risk measures are important in various contexts. First, they are important in banking regulation under the Basel framework of the \citet{BCBSBF19}, where they are vital in determining which banks are among the globally systemically important banks (G-SIBs). Such G-SIBs are then subjected to higher capital requirements. Second, in finance, systemic risk measures may be used to study spillover effects in the financial system \citep{AB16} or the build-up of asset price bubbles \citep{BRS20}. Third, they may be used to study the linkage between the financial sector and the real economy. Among others, \citet{GKP16} and \citet{BE17} show that an increase in systemic risk is predictive of future declines in real economic activity. All these examples underscore the importance of accurately measuring and predicting systemic risk.

In this paper, we revisit three influential systemic risk measures. First, we consider \citeauthor{AB16}'s \citeyearpar{AB16} conditional value-at-risk (CoVaR) and conditional expected shortfall (CoES) as extensions of the well-known value-at-risk (VaR) and expected shortfall (ES) to the realm of systemic risk. 
If $Y$ are the losses of interest and $X$ the losses of a reference position, $\CoVaR_{\a|\b}(Y|X)$ ($\CoES_{\a|\b}(Y|X)$) is the VaR (ES) of $Y$ at level $\a$, given that $X$ is ``in distress''. Here, we interpret the event that $X$ is in distress as $X$ being larger or equal than its $\b$-quantile, i.e., $\{X\ge \VaR_\b(X)\}$. 
Finally, we consider \citeauthor{Aea17}'s \citeyearpar{Aea17} marginal expected shortfall, $\MES_{\b}(Y|X)$, as the conditional mean of $Y$ given $\{X\ge \VaR_\b(X)\}$. 
Section~\ref{CoVaR, CoES and MES} introduces the exact definitions.

\citet{Bea17} distinguish between the ``source-specific approach'' and the ``global approach'' to systemic risk measurement. The source-specific approach considers individual sources of systemic risk, such as contagion risk or liquidity crises. In contrast, global measures of systemic risk potentially incorporate all mechanisms studied in the source-specific approach. \citet{Bea17} categorize CoVaR, CoES and MES under the global approach.

In practice, forecasting systemic risk measures---such as CoVaR, CoES and MES---requires adequate models for the marginals $X$ and $Y$, and for their dependence structure.
The literature has developed numerous different modelling approaches for this; see \citet{GT13} and \citet{BC19} for forecasting models for CoVaR and CoES, and \citet{BE17} and \citet{Eck18} for MES models.
Due to the importance of systemic risk measures outlined above, it is vital to develop statistical quality assessments of the various models' predictive performances. It is the main aim of this paper to provide such tools, which are referred to as `backtests' in finance. 



Backtests have two main goals. On the one hand, one may wish to assess the \emph{absolute quality} of forecasting models, also called the \emph{calibration}, akin to model validation in statistics. Following the terminology of \cite{FZG16}, we call such procedures ``traditional backtests''. Roughly speaking, they check how well a sequence of risk measure forecasts aligns with corresponding observations of losses.
Traditional backtests rely on the \textit{identifiability} of the underlying risk measure, which ensures the existence of a (possible \textit{multivariate}) function $\mV$ that uniquely ``identifies'' the true report (see Definition \ref{defn:id}).
On the other hand, the presence of several alternative prediction models for a risk measure necessitates ``comparative backtests'' \citep{FZG16} to assess their predictive accuracy relative to each other. This is akin to statistical model selection procedures.
Comparative backtests exploit the \textit{elicitability} of the underlying risk measure. This implies the existence of a \textit{real-valued} loss (or also: scoring) function $S$, which is minimised in expectation by the optimal forecast (see Definition~\ref{defn:univ score}).


Our contributions in this paper are the following:
First, we show that $\CoVaR_{\a|\b}$, $\CoES_{\a|\b}$ and $\MES_\b$ are not identifiable and elicitable as standalone risk measures (Proposition~\ref{cor:negative result}). The practical implication of this is that neither traditional nor comparative backtests can be carried out. In particular, any regulation based solely on these systemic risk measures is pointless, because neither the adequacy of the forecasts can be determined nor can different systemic risk forecasts be sensibly compared (say to a regulatory standard model).

We provide a partial remedy for this drawback by giving joint identification functions for   
$(\VaR_\b(X), \CoVaR_{\a|\b}(Y|X))$, 
$(\VaR_\b(X), \CoVaR_{\a|\b}(Y|X), \CoES_{\a|\b}(Y|X))$, and $(\VaR_\b(X), \MES_\b(Y|X))$ (Theorem~\ref{thm:joint id}). These identification functions can be used for (conditional) calibration tests in the spirit of \cite{NZ17}.
To the best of our knowledge, this entails the first traditional backtest for these systemic risk measures apart from \cite{Banulescu-RaduETAL2019}. We contrast our approach with theirs in detail in Remark~\ref{rem:comparison id func}. 
In particular, they use one-dimensional identification functions for $(\VaR_\b(X), \CoVaR_{\a|\b}(Y|X))$ and $(\VaR_\b(X), \MES_\b(Y|X))$, which fail to be strict in contrast to our two-dimensional identification functions. For the backtest of \cite{Banulescu-RaduETAL2019}, this non-strictness leads to a complete loss of power in identifying certain misspecified systemic risk forecasts (Section \ref{app:Remark 4.5} in the Supplement).
Theoretically, our results are akin to the fact that $\ES_\a(Y)$ is not identifiable on its own, but the pair $(\VaR_\a(Y), \ES_\a(Y))$ is identifiable \citep{FZ16a}.

In stark contrast to the joint elicitability of the pair $(\VaR_\a, \ES_\a)$, however, we show that the pairs $(\VaR_\b,$ $\CoVaR_{\a|\b})$, $(\VaR_\b, \MES_\b)$ and the triplet $(\VaR_\b, \CoVaR_{\a|\b}, \CoES_{\a|\b})$ \emph{fail} to be elicitable (Section~\ref{app:negative results}). So while traditional backtests for the above pairs and the triplet may be constructed by virtue of their identifiability, classical comparative backtests exploiting elicitability are not feasible.

As a remedy to this negative result, we propose the novel concept of \emph{multi-objective elicitability}, which works with \textit{multivariate} scores $\mS$ mapping to $\R^m$ equipped with a certain (partial) order relation $\lorder$. This contrasts sharply with classical $\R$-valued losses $S$. Their prevalence to date is grounded in tradition \citep{Gne11} and the fact that $\R$ is equipped with the canonical (total) order relation $\le$, which allows for straightforward comparisons of losses.
Subsection~\ref{subsec:mo-scores} introduces \emph{multi-objective scores} $\mS$ and the corresponding concepts of \emph{multi-objective consistency and elicitability}.
The terminology stems from the field of \emph{multi-objective optimisation}: 
According to \citet[p.\ v]{Ehrgott2005} it is ``a mathematical theory of optimization under multiple objectives'', and can be encountered in various fields of science, economics, logistics and engineering.
Since this novel concept to forecast evaluation may open up the avenue to a whole field of applications and research (which is underpinned by further instances; see Example \ref{app:examples}),
we give a concise general outline of the theory, using partial orders on $\R^m$ or even infinite-dimensional real vector spaces.

For the systemic risk forecasts we consider here, scores mapping to $\R^2$ equipped with the lexicographic (total) order---described in 
Subsection~\ref{subsec:lexico}---are sufficient. In particular, the performance of different systemic risk forecasts must be ranked with regard to the lexicographic order.
Specifically, Theorem~\ref{thm:mo el} shows that $(\VaR_\b, \CoVaR_{\a|\b})$,
$(\VaR_\b, \CoVaR_{\a|\b}, \CoES_{\a|\b})$ and $(\VaR_\b, \MES_\b)$ are multi-objective elicitable, and it provides classes of strictly multi-objective consistent scores.
We outline in Section~\ref{sec:tests} how these scores can be used for comparative backtests of Diebold--Mariano type. These comparative backtests are different from---in our case infeasible---``standard'' comparative backtests in that they build on the newly introduced notion of multi-objective elicitability (with scores mapping to $\R^2$ equipped with the lexicographic order) instead of the ``standard'' notion of elicitability (with scores mapping to $\R$ equipped with the canonical order $\leq$). In particular, some systemic risk forecast is now preferable to some other forecast when the $\R^2$-valued score of the former is smaller (with regard to the lexicographic order) than that of the latter. Thus, financial institutions may build on this result to improve their prediction models for $(\VaR_\b, \CoVaR_{\a|\b})$, $(\VaR_\b, \CoVaR_{\a|\b}, \CoES_{\a|\b})$ and $(\VaR_\b, \MES_\b)$, which is crucial for an adequate assessment of the diverse risks faced by these institutions.

Since the multi-objective scores of Theorem~\ref{thm:mo el} take values in $\R^2$ equipped with the lexicographic order, some particularities arise for statistical hypothesis tests. While simple ``two-sided'' null hypotheses of equal predictive performance can be tested with a classical Wald-test, particular caution must be taken when testing for superior predictive ability. 
Due to the particularities of the lexicographic order, a straightforward ``one-sided'' composite null hypothesis would be insensitive to the systemic risk measure forecast, ignoring the primary goal of the backtesting procedure.
Therefore, we suggest to use ``one and a half''-sided composite null hypotheses, testing for superior predictive ability in the systemic risk component and equal performance in the auxiliary $\VaR_\b(X)$ component. Section~\ref{sec:tests} provides details, including an adaptation of the Basel framework's traffic-light approach to systemic risk backtests.

An empirical application in Section~\ref{Empirical Application} demonstrates the viability of the comparative backtest. There, we consider daily log-losses of the DAX~30 with daily log-losses of the S\&P~500 as a reference quantity. We compare systemic risk forecasts derived from a benchmark Gaussian copula model with those produced by a $t$-copula model, where in both models the correlation parameter of the copula is driven by generalised autoregressive score (GAS) dynamics \citep{CKL13}.
We find that the predictive performance of the $t$-copula is superior with $p$-values close to $3\%$, which is consistent with its popularity in empirical work. One conclusion from our empirical analysis is that fairly long samples are required to validly distinguish between different forecasts, because the effective sample sizes in comparing systemic risk forecasts are (almost by definition) reduced. Thus, the one year evaluation period for (univariate) VaR and ES forecasts in the Basel framework of the \citet{BCBSBF19} is, in our view, insufficient for systemic risk forecasts.

The paper closes with a discussion and outlook (Section~\ref{Summary and Outlook}).
Besides the parts already mentioned above, 
the Supplement provides proofs for the results of Section~\ref{Scoring functions, identification functions and (conditional) elicitability} (Section~\ref{sec:Proofs Section 3}) and  further background material on multi-objective elicitability (Section \ref{app:mo-el}). 
All other proofs are relegated to Section \ref{app:Proofs}. Section~\ref{app:Monte Carlo Simulations} investigates the finite-sample properties of our comparative backtests in simulations.
The \texttt{R} code to reproduce all numerical experiments is available online.

Throughout the paper, we indicate vectors with bold letters. We highlight the distinction between row and column vectors only when it is essential, and use the symbol $'$ to indicate the transpose of a vector or matrix.

\section{Formal definition of CoVaR, CoES and MES}
\label{CoVaR, CoES and MES}

Fix some non-atomic probability space $(\Omega, \mathfrak A, \p)$ where all random objects are defined.
Using standard notation, let $L^0(\R^d)$, $d=1,2$, be the space of all $\R^d$-valued random vectors on $(\Omega, \mathfrak A, \p)$. 
Furthermore, for $p\in[1,\infty)$, let $L^p(\R^d)\subseteq L^0(\R^d)$ be the collection of random vectors whose components possess a finite $p$th moment.
For $\mX\in L^0(\R^d)$ let $F_{\mX}$ be its joint distribution. Then define for $p\in\{0\}\cup[1, \infty)$ the collection $\F^p(\R^d):= \{F_{\mX}\colon \mX \in L^p(\R^d)\}$.
We overload notation and identify any $F\in \F^0(\R^d)$ with its cumulative distribution function (cdf) $\R^d \to [0,1]$.

Our systemic risk measures of interest---$\CoVaR$, $\CoES$ and $\MES$---are maps from $L^0(\R^2)$ (or $L^1(\R^2)$ for MES) to $ \R^* := (-\infty,\infty]$. They are law-determined, meaning that their values for $(X,Y)$ and $(\tilde X, \tilde Y)$ coincide if $F_{X,Y} = F_{\tilde X, \tilde Y}$. Hence, we can consider them as risk-functionals on $\F^0(\R^2)$ (or $\F^1(\R^2)$). 
Similarly, the popular univariate risk measure $\VaR_\b$, $\b\in[0,1]$, is a law-determined map $L^0(\R)\to[-\infty,\infty]$.
In the rest of the paper, we will frequently overload notation and identify these law-determined risk measures with their induced risk functionals. As such, we will use the terms `risk measure' and `risk functional' interchangeably.

Let $(X,Y)\in L^0(\R^2)$ be a two-dimensional random vector. Here, $Y$ stands for the losses of a position of interest (with the sign convention that positive values are losses and negative values are gains) and $X$ is a univariate reference position or aggregate of a reference system, having the same sign convention.
Denote by $F_{X,Y}$ their joint distribution function and by $F_X$ and $F_Y$ their marginals, respectively.
Recall that for $\b\in[0,1]$ the $\b$-quantile of $F_X$ is the closed interval $q_\b(F_X) = \{x\in\R\colon F_X(x-)\le \b \le F_X(x)\}$, where $F_X(x-) := \lim_{t\uparrow x}F_X(x-)$. Then, $\VaR_\b(X)$ is the lower $\b$-quantile of $F_X$, i.e.,
$\VaR_\b(X):= \VaR_\b(F_X):= \inf q_\b(F_X)$. 
For $\b\in(0,1)$, $\VaR_\b$ is always finite.
Our sign convention is such that the larger the risk measure of a position, the riskier it is deemed. 
Hence, we typically choose a probability level of $\b$ close to 1 for $\VaR_\b$, such as $\b=0.95$ or $\b=0.99$.

\citet{AB16} define $\CoVaR_\b$ as the $\b$-quantile of the conditional distribution function $F_Y(\ \cdot\mid X= \VaR_{\b}(X)) = \p\{Y\leq\cdot\mid X = \VaR_{\b}(X)\}$.
The conditioning event in this definition is problematic for several reasons. First, it may have probability zero (which is the case when $F_X$ is continuous). Second, it does not fully capture the tail-risk of $F_X$. Third, since the roles of $Y$ and $X$ are asymmetric by construction, one may want to consider different probability levels to specify the event of `being in distress'. Thus, we follow \citet{GT13} and \citet{Banulescu-RaduETAL2019} in redefining $\CoVaR_{\a|\b}\colon L^0(\R^2)\to\R$ for $\a\in(0,1)$, $\b\in[0,1)$ as
\begin{equation}\label{eq:CoVaR}
	\CoVaR_{\a|\b}(Y| X):=\CoVaR_{\a|\b}(F_{X,Y}):=\VaR_\a(F_{Y\mid X \geq \VaR_\b(X)}),
\end{equation}
where $F_{Y\mid X \geq \VaR_\b(X)}=\p\{Y\leq\cdot\mid X \geq \VaR_{\b}(X)\}$.
For $\beta=0$, we simply have $\CoVaR_{\a|0}(Y|X) := \VaR_\a(Y)$, and if $\b=\a$ we simply write $\CoVaR_{\a}(Y| X) = \CoVaR_{\a|\a}(Y| X)$.

Since $\CoVaR_{\a|\b}(Y| X)$ is merely a quantile of the distribution $F_{Y\mid X \geq \VaR_\b(X)}$, it inherits the same defects as $\VaR_\a(Y)$. 
That is, it ignores tail risks beyond the quantile level $\a$ and it fails to be coherent, particularly defying the rationale of advantageous diversification effects \citep{Aea99,MS14}.
The well-known \textit{Expected Shortfall} at level $\a\in(0,1)$, $\ES_\a(Y):= \ES_\a(F_Y):=\frac{1}{1-\a}\int_\a^1 \VaR_\gamma(Y) \diff \gamma \in \R^*$, does not suffer from these defects. 
This motivates \citet{AB16} to introduce the conditional Expected Shortfall (CoES).
As for $\CoVaR_{\a|\b}$, we modify the conditioning event and formally introduce for $\a\in(0,1)$, $\b\in[0,1)$,
$\CoES_{\a|\b}\colon L^0(\R^2)\to \R^*$ via
\begin{align}\label{eq:CoES}
&\CoES_{\a|\b}(Y|X):= \CoES_{\a|\b}(F_{X,Y})
:= \frac{1}{1-\a}\int_\a^1 \CoVaR_{\gamma | \beta}(Y|X)\diff \gamma .
\end{align}
If $F_{X,Y}$ is continuous, then
$\CoES_{\a|\b}(Y|X) = \E[Y |  Y\ge \CoVaR_{\a|\b}(Y|X) ,\ X\ge \VaR_\b(X)]$.
Again, $\CoES_{\a|0}(Y|X) = \ES_\a(Y)$, and we write $\CoES_\a(Y|X) := \CoES_{\a|\a}(Y|X)$.

Finally, we consider the marginal Expected Shortfall (MES) of \citet{Aea17}, which measures the expectation of $Y$ when $X$ is in distress, i.e., when $X$ is in its right tail. 
Specifically, we introduce for $\b\in[0,1)$ the map $\MES_\b\colon L^1(\R^2)\to\R$,
\begin{equation}\label{eq:MES}
\MES_{\b}(Y|X) := \MES_{\b}(F_{X,Y}) := \CoES_{0|\b}(Y|X) 
= \int_0^1 \CoVaR_{\gamma|\beta}(Y|X) \diff \gamma .
\end{equation}
Again, $\MES_{0}(Y|X) = \E[Y]$, and $\MES_{\b}(Y|X) = \E[Y |  X\ge \VaR_\b(X)]$ if $F_{X,Y}$ is continuous. 
If $F_X$ is discontinuous, Remark ~\ref{rem:erw} proposes a novel correction term which generalises the three measures considered in this paper.

\section{(Conditional) identifiability and multi-objective elicitability}
\label{Scoring functions, identification functions and (conditional) elicitability}

We present the theory in this section in all generality to serve as a basis for future research on multi-objective scores. Therefore, we work with general functionals which do not necessarily have the interpretation of risk functionals. All proofs are in Section~\ref{sec:Proofs Section 3}.

\subsection{Notation, basic definitions and results}
\label{subsec:notation}

Adopting the decision-theoretic terminology of \cite{Gne11}, we denote by $\A$ an action domain. This is the space of plausible forecasts, which can be finite for categorial forecasts, $\R$ or $\R^k$ for point forecasts, or a set of distributions for probabilistic forecasts. Moreover, let $\O$ be an observation domain---a set where verifying observations materialise---with $\O = \R^d$ as a leading example. 
Denote by $\F^0(\O)$ the set of all probability distributions on $\O$. Let $\F\subseteq  \F'$ be subclasses of $\F^0(\O)$. We consider a general, possibly set-valued functional $\mT\colon\F'\to\P(\A)$ with $\P(\A)$ the power set of $\A$. Later on, $\mT$ will have the interpretation of a risk functional.
Note that $\F'$ is the class of distributions, where our functional $\mT$ is defined on, and $\F\subseteq  \F'$ is the subclass on which $\mT$ will be identifiable/elicitable (see Definitions~\ref{defn:univ score} and~\ref{defn:id}). For instance, for $\mT(F_{X,Y})=(\VaR_\b(F_X),$ $\CoVaR_{\a|\b}(F_{X,Y}) )$, we have $\F'=\F^0(\R^2)$ and $\F$ is given in Theorem~\ref{thm:joint id}~(i)/Theorem~\ref{thm:mo el}~(ii).

We adopt the \emph{selective} notion of forecasts discussed in \cite{FFHR2021} where one is content with correctly specifying a single element $\vt \in \mT(F) \subseteq \A$ as opposed to specifying the entire set $\mT(F)$. (If one is interested in \emph{exhaustive} forecasts---i.e., in forecasts for the whole set $\mT(F)$---one can change the action domain to $\P(\A)$.)
If $\mT$ attains singletons only, we identify the value of $\mT(F)$ with its unique element.
This identification allows us to treat point-valued functionals as set-valued functionals without loss of generality. We mention that the risk functionals to be considered in Section~\ref{Structural results for CoVaR, CoES and MES} are all point-valued.

A function $\vg\colon \A\times \O\to\R^{\mathcal I}$, where $\mathcal I$ is an index set, is called $\F$-\textit{integrable} if for all components $g_i$, $i\in\mathcal I$, it holds that $\int |g_i(\vr,\vy)|\,\diff F(\vy)<\infty$ for all $\vr\in \A$, $F\in\F$.
If $\vg$ is $\F$-integrable, we define the map $\bar \vg\colon \A \times \F \to\R^{\mathcal I}$, $\bar \vg(\vr, F) := \int g_i(\vr,\vy)\,\diff F(\vy)$ for $\vr\in \A$, $F\in\F$.
A similar convention and notation is used for maps $\va\colon\O\to\R^{\mathcal I}$.
We start with the classical definition of elicitability and consistent scoring functions, mapping to $\R$.

\begin{defn}\label{defn:univ score}
An $\F$-integrable map $S\colon \A\times \O\to\R$ is an \emph{$\F$-consistent scoring function} for $\mT$ if $\bar S(\vt,F)\le \bar S(\vr,F)$ for all $\vt\in \mT(F)$, all $\vr\in\A$ and for all $F\in\F$. 
It is a \emph{strictly} $\F$-consistent scoring function for $\mT$
if, additionally, $\bar S(\vt,F)= \bar S(\vr,F)$ implies that $\vr\in \mT(F)$.
$\mT$ is \emph{elicitable on $\F$} if there is a strictly $\F$-consistent scoring function for $\mT$.
\end{defn}

\begin{defn}\label{defn:id}
An $\F$-integrable map $\mV\colon \A\times \O\to\R^m$ is an \emph{$\F$-identification function} for $\mT$ if $\bar \mV(\vt,F)=\vzero$ for all $\vt\in \mT(F)$ and for all $F\in\F$.
It is a \emph{strict} $\F$-identification function for $\mT$ if, additionally, for all $F\in\F$ and for all $\vr\in \A$, $\bar \mV(\vr,F)=\vzero$ implies that $\vr\in \mT(F)$.
$\mT$ is \emph{identifiable on $\F$} if there is a strict $\F$-identification function for $\mT$.
\end{defn}

Suppose in a risk management context that $\mT$ corresponds to the $\VaR_{\a}$-functional and $y_t$ are the observed losses. Subject to mild conditions on $\F$, $\VaR_\a$ is elicitable, where the `pinball loss' $S(r,y) = (\one\{y>r\} - 1+\a)(r-y)$ is strictly $\F$-consistent. This allows to compare competing VaR forecasts $(r_{t,(1)})_{t=1, \ldots, n}$ and $(r_{t,(2)})_{t=1, \ldots, n}$ via their empirical average score differences $\overline{d}_n = \overline{S}_{1n}-\overline{S}_{2n} = \frac{1}{n} \sum_{t=1}^n S(r_{t,(1)},y_t) - S(r_{t,(2)},y_t)$. A negative (positive) sign of $\overline{d}_n$ indicates superiority (inferiority) of $(r_{t,(1)})_{t=1, \ldots, n}$ over $(r_{t,(2)})_{t=1, \ldots, n}$. Identifiability, on the other hand, opens the way to test for calibration by checking, e.g., if the test statistic $\frac{1}{n}\sum_{t=1}^n \mV(r_{t,(i)},y_t)$ is sufficiently close to $\vzero$ or not.
E.g., when $\mT$ corresponds to $\VaR_\a$ checking calibration amounts to checking if the empirical VaR-violation rate is roughly $1-\a$, which can be done in terms of $V(r,y) = \one\{y>r\} - (1-\a)$. These examples demonstrate the importance of elicitability and identifiability for comparing and evaluating (risk) forecasts in practice.

Under regularity conditions, the notions of elicitability and identifiability are equivalent for point-valued functionals mapping to $\R$ \citep{SteinwartPasinETAL2014}.
There are also important functionals which fail to be elicitable and identifiable, most prominently the variance and expected shortfall \citep{Gne11}. 
In such situations, the notions of \emph{conditional} elicitability and \emph{conditional} identifiability can be helpful.
Following the concept presented in \citet[Section 6]{FZ16a}, we slightly adapt \citeauthor{EKT15}'s \citeyearpar{EKT15} original definition of conditional elicitability. We also introduce the corresponding counterpart of conditional identifiability.

\begin{defn}\label{defn:cond el}\label{defn:cond id}
Consider two functionals $\mT_j:\F'\to \P(\A_j)$, $j=1,2$, and let $\F\subseteq \F'$. \begin{enumerate}[\rm (i)]
\item
$\mT_{2}$ is \textit{conditionally elicitable with} $\mT_{1}$ on $\F$, if $\mT_1$ is elicitable on $\F$ and $\mT_2$ is elicitable on 
\(
	\mathcal{F}_{\vr_1}:=\left\{F\in\mathcal{F}\colon\ \vr_1\in \mT_1(F)\right\} 
\)
for any $\vr_1\in \A_1$.
\item
$\mT_{2}$ is \textit{conditionally identifiable with} $\mT_{1}$ on $\F$, if $\mT_1$ is identifiable on $\F$ and $\mT_2$ is identifiable on 
\(
	\mathcal{F}_{\vr_1}:=\left\{F\in\mathcal{F}\colon\ \vr_1\in \mT_1(F)\right\} 
\)
for any $\vr_1\in \A_1$.
\end{enumerate}
\end{defn}

It is easy to see that the variance is conditionally elicitable and conditionally identifiable with the mean, and that $\ES_\a$ is conditionally elicitable and conditionally identifiable with $\VaR_\a$ on appropriate classes of distributions, respectively \citep{EKT15}.
The pairs (mean, variance) and $(\VaR_\a, \ES_\a)$ even turn out to be elicitable and identifiable \citep{FZ16a}. For identifiability, this is an instance of the following proposition, which is already stated without a proof in the discussion of \cite{FZ16a}.

\begin{prop}\label{prop:conditional id}
If $\mT_2$ is conditionally identifiable with $\mT_1$ on $\F$, then the pair $(\mT_1,\mT_2)$ is identifiable on $\F$.
\end{prop}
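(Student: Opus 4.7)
The natural strategy is to build a strict identification function for the pair by vertically concatenating a strict identification function for $\mT_1$ with a parametric family of strict identification functions for $\mT_2$ indexed by the first argument. Concretely, the plan is to let $\mV_1\colon\A_1\times\O\to\R^{m_1}$ be the strict $\F$-identification function for $\mT_1$ that exists by hypothesis, and for each $\vr_1\in\A_1$ let $\mV_2^{\vr_1}\colon\A_2\times\O\to\R^{m_2}$ be a strict $\F_{\vr_1}$-identification function for $\mT_2$, which exists by the conditional identifiability assumption. Then define
\[
\mV\colon(\A_1\times\A_2)\times\O\to\R^{m_1+m_2},\qquad \mV\bigl((\vr_1,\vr_2),\vy\bigr):=\bigl(\mV_1(\vr_1,\vy),\, \mV_2^{\vr_1}(\vr_2,\vy)\bigr),
\]
and identify $(\mT_1,\mT_2)(F)$ with $\mT_1(F)\times\mT_2(F)\subseteq \A_1\times\A_2$.

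To verify the identification property, I would first check $\F$-integrability componentwise, which is immediate from the $\F$-integrability of $\mV_1$ and, for each fixed $\vr_1$, of $\mV_2^{\vr_1}$ on $\F_{\vr_1}\subseteq \F$ (and which extends trivially to $\F$ since integrability is required for every $F\in\F$, not just those in $\F_{\vr_1}$; one may pick an arbitrary $\F$-integrable strict identification function and extend if necessary). Next, for $(\vt_1,\vt_2)\in \mT_1(F)\times\mT_2(F)$, the upper block $\bar\mV_1(\vt_1,F)=\vzero$ by the identification property of $\mV_1$, while $F\in\F_{\vt_1}$ (since $\vt_1\in\mT_1(F)$) and $\vt_2\in\mT_2(F)$ give $\bar\mV_2^{\vt_1}(\vt_2,F)=\vzero$.

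For strictness, assume $\bar\mV\bigl((\vr_1,\vr_2),F\bigr)=\vzero$ for some $(\vr_1,\vr_2)\in\A_1\times\A_2$ and $F\in\F$. The vanishing of the upper block and the strictness of $\mV_1$ yield $\vr_1\in\mT_1(F)$, hence $F\in\F_{\vr_1}$. Then the vanishing of the lower block $\bar\mV_2^{\vr_1}(\vr_2,F)=\vzero$ combined with strictness of $\mV_2^{\vr_1}$ on $\F_{\vr_1}$ forces $\vr_2\in\mT_2(F)$. Hence $(\vr_1,\vr_2)\in(\mT_1,\mT_2)(F)$, completing the proof.

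The argument is essentially bookkeeping once the construction is in place; the only conceptually subtle point is ensuring that picking a strict identification function $\mV_2^{\vr_1}$ for each $\vr_1$ genuinely gives a jointly well-defined map $\mV$ on all of $(\A_1\times\A_2)\times\O$, and that the integrability requirement is uniform across $F\in\F$. Both are handled by observing that only the pointwise behavior at a single $\vr_1$ is ever used inside the expectation, so no coherent selection of the family $\{\mV_2^{\vr_1}\}_{\vr_1\in\A_1}$ (e.g., measurability in $\vr_1$) is required for the identifiability statement itself.
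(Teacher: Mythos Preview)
Your proof is correct and follows essentially the same route as the paper: both construct the strict $\F$-identification function for $(\mT_1,\mT_2)$ by stacking $\mV_1(\vr_1,\vy)$ on top of $\mV_{2,\vr_1}(\vr_2,\vy)$ and then verify the identification property and strictness in the same two-step fashion. If anything, you are slightly more careful than the paper in flagging the $\F$-integrability issue for $\mV_2^{\vr_1}$, which the paper's proof passes over silently.
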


Of course, an analogue to Proposition \ref{prop:conditional id} for (conditional) elicitability would be desirable, and it has been stated as an open problem in the discussion of \cite{FZ16a}.
Unfortunately, the answer is negative: 
While Section \ref{app:cond id and el} establishes the conditional elicitability of $\CoVaR_{\a|\b}$, $(\CoVaR_{\a|\b}, \CoES_{\a|\b})$ and $\MES_\b$ all with $\VaR_\b$, the corresponding pairs and the triplet generally fail to be elicitable; see Section~\ref{app:negative results}.


\subsection{Multi-objective scores, consistency and elicitability}
\label{subsec:mo-scores}

To overcome the structural drawback of elicitability in comparison to identifiability, in particular the lack of an analogue to Proposition \ref{prop:conditional id}, we introduce the novel notion of \emph{multi-objective scoring functions} and the corresponding concepts of \emph{multi-objective consistency} and \emph{multi-objective elicitability}.
It is inspired by the fundamental observation that identification functions are generally multivariate.

To be more precise, the dimension $m$ of the identification function $\mV$ usually coincides with the dimension $k$ of the functional. 
If $k=m=1$ an identification function is often induced by the derivative of a consistent scoring function $S$. Also, the antiderivative of an (oriented) identification function yields a consistent score, thus roughly establishing a one-to-one correspondence between the class of identification functions and the class of consistent scoring functions.
For a $k$-dimensional functional, the gradient of a consistent score is $\R^k$-valued and naturally induces an identification function, subject to smoothness conditions.
However, not every $k$-dimensional identification function possesses an antiderivative for $k\ge2$. This is due to integrability conditions asserting that if it was integrable, the corresponding Hessian of the stipulated antiderivative would need to be symmetric (see also Example~\ref{footnote:double quantile} for an illustration).
This rules out the one-to-one relation in the multivariate setting, giving rise to a \emph{gap} between the class of consistent scoring functions and the one of identification functions.
This gap and its consequences on estimation are discussed in detail by \cite{DFZ2020}.

We sidestep this integrability constraint by introducing the concept of \textit{multivariate} scoring functions.
Indeed, it is this multivariate structure of identification functions which facilitates the straightforward proof of Proposition~\ref{prop:conditional id}. Therefore, we mimic this multi-dimensionality for scores, letting them map to $\R^m$, where usually $m=k$, or even more generally to some real vector space $\R^{\mathcal I}$, where the index set ${\mathcal I}$ may be finite, countable or even uncountable.
In the application of the general theory developed here to the systemic risk measures, it suffices to consider $\R^2$-valued scores; see Theorem~\ref{thm:mo el} in Section~\ref{Structural results for CoVaR, CoES and MES}.
The motivation for defining a classical score as a univariate map to $\R$ (see Definition~\ref{defn:univ score}) is grounded in tradition on the one hand. On the other hand, $\R$ is equipped with the canonical (total) order relation $\le$, which allows for straightforward comparisons of forecasts by checking whether the empirical average score differences satisfy $\overline{d}_n\geq0$ or $\overline{d}_n\leq0$.
Consistency in Definition~\ref{defn:univ score} ultimately relies on the existence of an order. 

Hence, we need to equip $\R^{\mathcal I}$ with a vector partial order $\lorder$ and write $(\R^{\mathcal I}, \lorder)$. 
A binary relation $\lorder$ on $\R^{\mathcal I}$ is a \emph{partial order} if it is 
reflexive ($\forall \vx \in \R^{\mathcal I}$,  $\vx\lorder \vx$),
antisymmetric ($\forall \vx,\vy\in \R^{\mathcal I}$ if $\vx\lorder \vy$ and $\vy\lorder \vx$, then $\vx=\vy$),
transitive ($\forall \vx,\vy, \vz \in \R^{\mathcal I}$ if $\vx\lorder \vy$ and $\vy\lorder \vz$, then $\vx\lorder \vz$).
Two elements $\vx,\vy\in \R^{\mathcal I}$ are \emph{comparable} if $\vx\lorder \vy$ or $\vy\lorder \vx$. A \emph{total} order is a partial order where all elements are comparable.
A partial order $\lorder$ on $\R^{\mathcal I}$ induces a \emph{strict} partial order $\slorder$ on $\R^{\mathcal I}$ as follows. For all $\vx,\vy\in \R^{\mathcal I}$ it holds that $\vx\slorder \vy$ if and only if $\vx\lorder \vy$ and $\vx\neq \vy$.
A partial order $\lorder$ on $\R^{\mathcal I}$ is a \emph{vector partial order} if it is compatible with addition and positive scaling. 
That is, if for all $\vx,\vy, \vz\in\R^{\mathcal I}$ and for all $\lambda \in (0,\infty)$ it holds that 
$\vx \lorder \vy$ implies that $\vx + \vz \lorder \vy + \vz$, and
$\vx \lorder \vy$ implies that $\lambda \vx \lorder \lambda \vy$.
(In the sequel, we always mean a vector partial order whenever we write ``partial order''.)
The canonical choice is the componentwise order defined for $\vx=(x_i)_{i \in \mathcal I},\vy = (y_i)_{i \in \mathcal I}\in\R^{\mathcal I}$ as $\vx \lorder \vy$ if and only if $x_i\le y_i$ for all $i\in \mathcal I$. 
The use of a partial order $\lorder$ on $\R^{\mathcal I}$ leads to two different generalisations of Definition \ref{defn:univ score}. These are due to the fact that in a partial order $\vx \lorder \vy$ implies that $\vy \nprec \vx$, but the reverse implication fails if $\vx$ and $\vy$ are not comparable.

\begin{defn}\label{defn:multiv score partial}
\begin{enumerate}[(i)]
\item
An $\F$-integrable map $\mS\colon \A\times \O\to(\R^{\mathcal I}, \lorder)$ is a \emph{strongly} multi-objective $\F$-consistent scoring function for $\mT\colon \F\to\P(\A)$ if 
$\bar \mS(\vt,F)\lorder \bar \mS(\vr,F)$
for all $\vt\in \mT(F)$, all $\vr\in\A$ and all $F\in\F$.
It is a strictly strongly multi-objective $\F$-consistent scoring function for $\mT$ if, additionally, 
$\bar \mS(\vt,F)= \bar \mS(\vr,F)$ implies that $\vr\in \mT(F)$.
$\mT$ is \emph{strongly multi-objective-elicitable on $\F$} (with respect to the order $\lorder$ on $\R^{\mathcal I}$) if there is a strictly strongly multi-objective $\F$-consistent scoring function for $\mT$.
\item
An $\F$-integrable map $\mS\colon \A\times \O \to(\R^{\mathcal I}, \lorder)$ is a \emph{weakly} multi-objective $\F$-consistent scoring function for $\mT\colon \F\to\P(\A)$  if 
$\bar \mS(\vr,F)\nprec \bar \mS(\vt,F)$
for all $\vt\in \mT(F)$, all $\vr\in\A$ and all $F\in\F$.
It is a strictly weakly
multi-objective $\F$-consistent scoring function for $\mT$  if, additionally, 
for all $F\in\F$ and for any $\vr_0\in\A$ it holds that if 
$\bar \mS(\vr,F)\nprec \bar \mS(\vr_0,F)$
for all $\vr \in\A$, then  $\vr_0\in \mT(F)$.
$\mT$ is \emph{weakly multi-objective-elicitable on $\F$} (with respect to the order $\lorder$ on $\R^{\mathcal I}$) if there is a strictly weakly multi-objective $\F$-consistent scoring function for $\mT$.
\end{enumerate}
\end{defn}

As discussed, a partial order on $\R^{\mathcal I}$ is sufficient to define multi-objective consistency. 
Using the notation $\bar \mS(B,F):= \{\bar \mS(\vr,F)\colon \vr\in B\}$ for $F\in\F$ and $B\subseteq \A$, the definition of multi-objective consistency
 does not require comparability of all elements $\bar \mS(\A,F)$ for some fixed $F\in\F$. Weak multi-objective consistency solely implies that 
all elements of  $\bar \mS(\mT(F),F)$ are minimal in $\bar \mS(\A,F)$. 
The strict version additionally ensures that all minimal elements of $\bar \mS(\A,F)$ are in $\bar \mS(\mT(F),F)$.
Strong multi-objective consistency additionally means that the elements of $\bar \mS(\mT(F),F)$ are not only minimal in $\bar \mS(\A,F)$, but that they are the (unique) least element of $\bar \mS(\A,F)$. Due to the uniqueness of a least element, $\bar \mS(\mT(F),F)$ is a singleton.
The strict version of strong multi-objective consistency additionally means that $\bar \mS(\mT(F),F)\not\subset\bar \mS(\A\setminus \mT(F),F)$.
See Figure \ref{fig:StrongWeak} for an illustration of the two situations.
Clearly, (strict) strong consistency implies (strict) weak consistency. 
If we omit the qualifiers ``weak'' or ``strong'', we refer to the strong version.
\begin{figure}[t!]
	\centering
		\includegraphics[width=\textwidth]{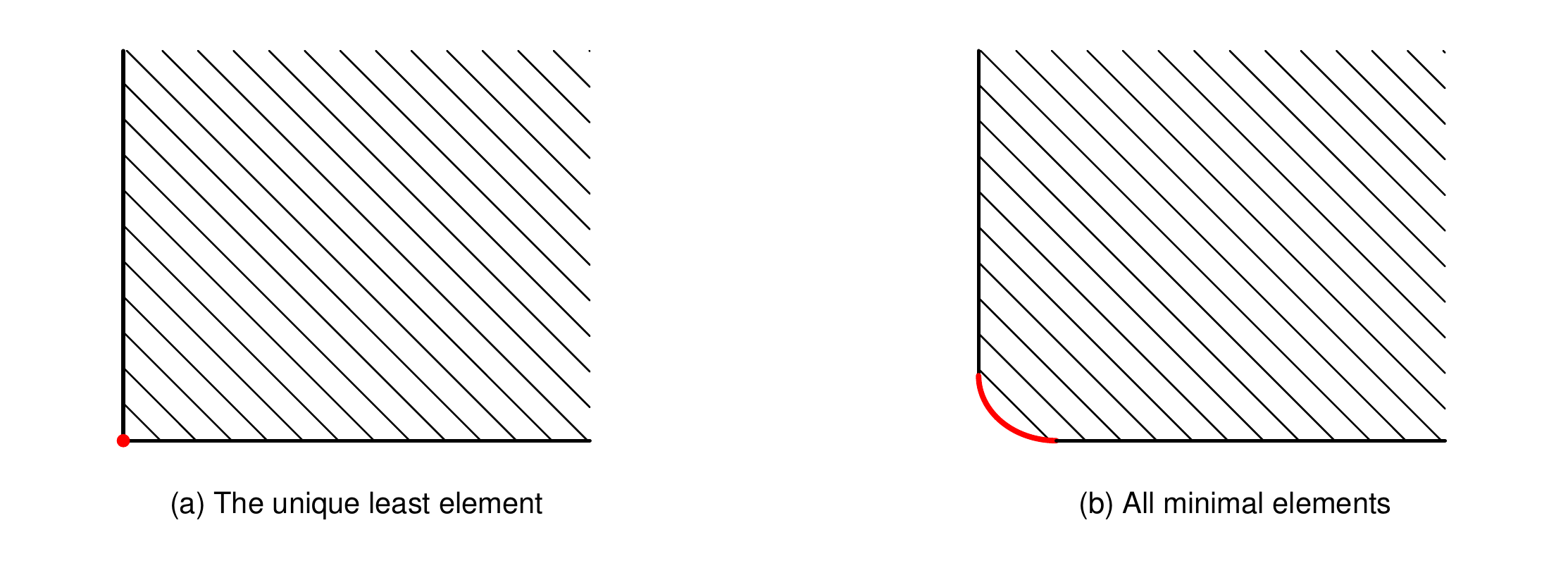}
	\caption{In both panels, the shaded areas correspond to $\bar \mS(\A,F)$ of a multi-objective score $\mS$ mapping to $\R^2$ equipped with the componentwise order. The red sets correspond to $\bar \mS(\mT(F),F)$. In panel (a), this is the unique least element, illustrating the situation of strict strong multi-objective consistency. In panel (b), this set corresponds to all minimal elements, depicting the situation of strict weak multi-objective elicitability.}
	\label{fig:StrongWeak}
\end{figure}

Multi-objective consistency translates the essence of consistency as a `truth serum', or being incentive compatible, to the multivariate realm. 
No matter what forecast $\vr\in\A$ an agent issues, they would be better off (or at least not worse off) in expectation when issuing a 
correctly specified functional value $\vt\in \mT(F)$. (In the weak version, if an agent issued some $\vt\in \mT(F)$, they would not be better off with some other forecast.)
Strictness means that any action $\vr\in \A\setminus \mT(F)$ leads to a worse outcome in expectation than the truth $\vt\in \mT(F)$. 
So this minimal requirement of honouring truthful forecasts is preserved.

\begin{rem}
To the best of our knowledge, the notion of multi-objective scoring functions with the related concepts is novel to the forecast evaluation literature. However, it has some connections to the concept of \emph{forecast dominance} introduced by \citet[Section 3.2]{EhmETAL2016}.  
Let $(S_i)_{i\in\mathcal I}$ be some class of univariate (strictly) $\F$-consistent scoring functions $S_i\colon \A\times \O\to\R$ for some functional $\mT\colon\F\to\P(\A)$.
Adapting their definition slightly, we say that a forecast $\vr\in\A$ dominates $\tilde \vr\in\A$ relative to $(S_i)_{i\in\mathcal I}$ if
$\bar S_i(\vr,F) \le \bar S_i(\tilde \vr,F)$ 
for all $F\in\F$ and for all $i\in\mathcal I$.
In our terminology, we could phrase this forecast ranking in terms of a multivariate score $\mS\colon \A\times \O \to (\R^{\mathcal I}, \lorder)$, $\mS(\vr,\vy) = \big(S_i(\vr,\vy)\big)_{i\in\mathcal I}$, where $\lorder$ is the usual componentwise partial order. 
By virtue of mixture representations, \cite{EhmETAL2016} impressively demonstrate that for quantiles and expectiles, one can rephrase forecast dominance with respect to (nearly) all consistent scoring functions equivalently in terms of extremal or elementary scores $(S_\theta)_{\theta\in \R}\subseteq (S_i)_{i\in\mathcal I}$. 
For this situation, we can easily construct a multi-objective score $\mS = (S_\theta)_{\theta\in \R}$, mapping to the function space $\R^\R$ equipped with the componentwise order. \cite{EhmETAL2016} provide numerous instances of comparable and incomparable forecast rankings.
This example provides an easy construction for a (strictly) multi-objective $\F$-consistent score, $\mS = (S_\theta)_{\theta\in \R}$, if $\mT$ is elicitable.
\end{rem}

For verifying observations $(\vy_t)_{t=1, \ldots, n}$, our Definition~\ref{defn:multiv score partial} suggests to compare two sequences of forecasts $(\vr_{t,(i)})_{t=1, \ldots, n}$ ($i=1,2$) via their average scores $\overline{\mS}_{1n}=\frac{1}{n} \sum_{t=1}^n \mS(\vr_{t,(1)},\vy_t)$ and $\overline{\mS}_{2n}=\frac{1}{n} \sum_{t=1}^n \mS( \vr_{t,(2)},\vy_t)$. However, while using classical univariate scores always leads to a conclusive forecast ranking (ignoring questions of statistical significance for a moment), the presence of a \emph{partial} order may lead to inconclusive rankings, particularly if neither of the two forecasts is correctly specified. For instance, if $\overline{\mS}_{1n}=(1, 4)$ and $\overline{\mS}_{2n}=(4, 1)$ (e.g., for the bivariate systemic risk scores of Theorem~\ref{thm:mo el}), then neither $\overline{\mS}_{1n}\lorder\overline{\mS}_{2n}$ nor $\overline{\mS}_{2n}\lorder\overline{\mS}_{1n}$ in the componentwise order on $\mathbb{R}^2$.

\subsection{Multi-objective elicitability with respect to the lexicographic order \\
and conditional elicitability}
\label{subsec:lexico}

To overcome the issue of inconclusive forecast rankings, we equip $\R^{\mathcal I}$ with a total order. 
In a total order, all elements are comparable. Hence, the notions of weak and strong multi-objective consistency / elicitability coincide and a distinction is redundant.
A promising choice is the lexicographic order.
For our purposes, and to ease the exposition, it suffices to consider $\R^2$ equipped with the lexicographic order $\lex$. On $\R^2$ it holds that $(x_1,x_2)\lex (y_1,y_2)$ if $x_1<y_1$ or if ($x_1=y_1$ and $x_2\le y_2$). 
The lexicographic order is widely used for preference relations in microeconomics. In particular, it is well-known for the fact that it cannot be represented by a real-valued utility function \cite[Chapter 3.C]{MWG1995}.

There are at least two reasons for our choice of the lexicographic order. First, as a total order, it allows for conclusive forecast rankings, which we exploit in our comparative backtests in Section~\ref{sec:tests}. For instance, if as above $\overline{\mS}_{1n}=(1, 4)$ and $\overline{\mS}_{2n}=(4, 1)$ for the systemic risk scores, then the inconclusive ranking is resolved because $\overline{\mS}_{1n}\lex \overline{\mS}_{2n}$. Second, the lexicographic order opens the way to the following analogue of Proposition~\ref{prop:conditional id}.

\begin{thm}\label{thm:conditional el}
If $\mT_2$ is conditionally elicitable with $\mT_1$ on $\F$ and $\mT_1(F)$ is a singleton for all $F\in\F$,
then the pair $(\mT_1,\mT_2)$ is multi-objective elicitable on $\F$ with respect to the lexicographic order $\lex$ on $\R^2$.
\end{thm}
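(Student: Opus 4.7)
My plan is to build an $\R^2$-valued strictly $\F$-consistent score for the pair $(\mT_1,\mT_2)$ with respect to the lexicographic order $\lex$ by stitching together the univariate scores guaranteed by the conditional elicitability hypothesis. Since $\mT_1$ is elicitable on $\F$, I fix a strictly $\F$-consistent scoring function $S_1\colon \A_1\times\O\to\R$. For each $\vr_1\in\A_1$, conditional elicitability of $\mT_2$ with $\mT_1$ supplies a strictly $\F_{\vr_1}$-consistent scoring function $S_2(\,\cdot\,,\,\cdot\,;\vr_1)\colon\A_2\times\O\to\R$, which I select with enough regularity that $S_2(\,\cdot\,,\,\cdot\,;\vr_1)$ is also $\F$-integrable for every $\vr_1\in\A_1$. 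I then define
\[
\mS\colon (\A_1\times\A_2)\times\O \to (\R^2,\lex),\qquad \mS\big((\vr_1,\vr_2),\vy\big) := \big(S_1(\vr_1,\vy),\, S_2(\vr_2,\vy;\vr_1)\big).
\]

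To verify consistency, I fix $F\in\F$, denote the unique element of $\mT_1(F)$ by $\vt_1$, and pick any $\vt_2\in\mT_2(F)$. For an arbitrary $(\vr_1,\vr_2)\in\A_1\times\A_2$ I split into two cases. If $\vr_1\neq\vt_1$, strict $\F$-consistency of $S_1$ yields $\bar S_1(\vt_1,F)<\bar S_1(\vr_1,F)$, and the lexicographic comparison of first components already gives $\bar\mS((\vt_1,\vt_2),F)\lex\bar\mS((\vr_1,\vr_2),F)$. If $\vr_1=\vt_1$, then $F\in\F_{\vt_1}$ because $\vt_1\in\mT_1(F)$, so strict $\F_{\vt_1}$-consistency of $S_2(\,\cdot\,,\,\cdot\,;\vt_1)$ gives $\bar S_2(\vt_2,F;\vt_1)\leq\bar S_2(\vr_2,F;\vt_1)$; combined with equality in the first component, this again yields $\bar\mS((\vt_1,\vt_2),F)\lex\bar\mS((\vr_1,\vr_2),F)$.

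Strictness then follows by a mirror argument: if $\bar\mS((\vt_1,\vt_2),F)=\bar\mS((\vr_1,\vr_2),F)$, equality of the first components combined with strict $\F$-consistency of $S_1$ forces $\vr_1\in\mT_1(F)=\{\vt_1\}$, so $\vr_1=\vt_1$; equality of the second components now becomes $\bar S_2(\vt_2,F;\vt_1)=\bar S_2(\vr_2,F;\vt_1)$, and strict $\F_{\vt_1}$-consistency of $S_2(\,\cdot\,,\,\cdot\,;\vt_1)$ forces $\vr_2\in\mT_2(F)$. Hence $(\vr_1,\vr_2)\in\mT_1(F)\times\mT_2(F)$, which is exactly the functional value $(\mT_1,\mT_2)(F)$.

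The key conceptual obstacle is precisely the singleton hypothesis on $\mT_1(F)$: without it, two distinct elements $\vr_1,\vr_1'\in\mT_1(F)$ would force the second coordinate of $\mS$ to invoke the two a priori unrelated functions $S_2(\,\cdot\,,\,\cdot\,;\vr_1)$ and $S_2(\,\cdot\,,\,\cdot\,;\vr_1')$, whose expected values under $F$ are simply not comparable. The singleton-ness of $\mT_1(F)$ is exactly what removes this ambiguity and makes the second-layer comparison well-posed once the first components agree; everything else in the argument is direct bookkeeping. A minor secondary point is the $\F$-integrability of the family $\{S_2(\,\cdot\,,\,\cdot\,;\vr_1)\}_{\vr_1\in\A_1}$ on all of $\F$ (rather than just on each $\F_{\vr_1}$), which I would handle either by a mild uniform regularity condition on the conditional scores or, in the concrete applications covered by Theorem~\ref{thm:mo el}, by pointing to explicit $\F$-integrable choices of $S_2(\,\cdot\,,\,\cdot\,;\vr_1)$.
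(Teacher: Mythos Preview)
Your proposal is correct and follows essentially the same approach as the paper: you build the two-component score $\mS((\vr_1,\vr_2),\vy)=(S_1(\vr_1,\vy),S_2(\vr_2,\vy;\vr_1))$ from the univariate scores supplied by conditional elicitability, and verify strict consistency via the same case split on whether the first-component inequality is strict, invoking the singleton hypothesis to reduce to $\vr_1=\vt_1$ in the equality case. Your discussion of why the singleton assumption is needed, and of the $\F$-integrability side condition on the family $\{S_2(\cdot,\cdot;\vr_1)\}$, matches the paper's own commentary in Remark~\ref{rem:relaxation}.
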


Theorem~\ref{thm:conditional el} is almost a direct analogue of Proposition~\ref{prop:conditional id}, with the intriguing exception that $\mT_1$ is assumed to be a singleton on $\F$ in Theorem~\ref{thm:conditional el}; see Remark~\ref{rem:relaxation} for further details. We illustrate how the construction of Theorem~\ref{thm:conditional el} leads to strictly consistent multi-objective scores mapping to $(\R^2, \lex)$ for the cases of (mean, variance) and $(\VaR_\a, \ES_\a)$ in Subsection~\ref{app:examples}.
There, we also provide further examples of conditionally elicitable functionals failing to be elicitable in the traditional sense, but which are multi-objective elicitable with respect to $(\R^2, \lex)$. Section~\ref{Dimensionality considerations} elaborates on how the multi-objective scores can be considered a ``generalised antiderivative'' of a $k$-dimensional identification function with relaxed symmetry conditions. Section~\ref{More on multi-objective elicitability with respect to the lexicographic order} provides further details on comparing misspecified forecasts under multi-objective elicitability, and on the sensitivity of multi-objective scores with respect to increasing information sets. Finally, Section~\ref{subsec:CxLS} revisits a powerful necessary condition for identifiability and elicitability, namely the Convex Level Sets (CxLS) property, for multi-objective scores.

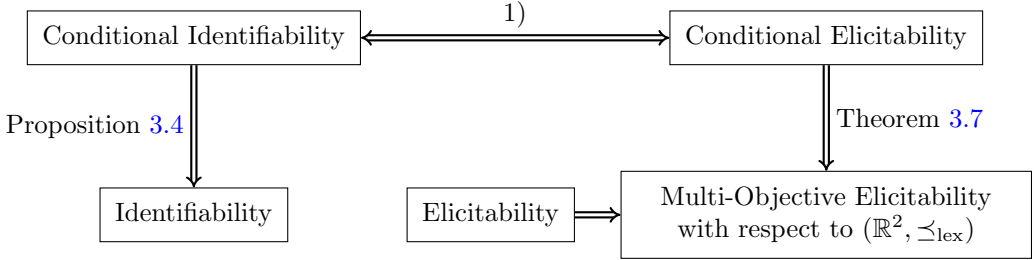
\begin{figure}
\centering
\small
\begin{tikzpicture}[every edge/.style={imp}]
  \matrix[nodes={inner sep=2mm},
  row sep=0.7cm,column sep=0.6cm] {
    \node (c-id) [rectangle, draw] {Conditional Identifiability}; &
																									&
    \node (c-el) [rectangle, draw] {Conditional Elicitability}; \\ \\
		\node (id)   [rectangle, draw] {Identifiability}; &
		\node (el)  [rectangle, draw] {Elicitability};
																									&
    \node (mo-el) [rectangle, draw, text width=5cm, align=center] {Multi-Objective Elicitability with respect to $(\mathbb{R}^2, \lex)$}; \\
  };
  \path
	(c-id) edge[implies-implies,double equal sign distance] node[above] {1)} (c-el)
	(c-el) edge[imp] node[right] {Theorem~\ref{thm:conditional el}} (mo-el)
	(c-id) edge[imp] node[left] {Proposition~\ref{prop:conditional id}} (id)
	(el) edge[imp] (mo-el)
  ;
\end{tikzpicture}
\caption{Illustration of the most important structural results for real-valued (risk) functionals $\mT_1$ and $\mT_2$. The equivalence 1) follows from \citet{SteinwartPasinETAL2014} under some regularity conditions.}
\label{fig:overview meth}
\end{figure}

The proof of Theorem \ref{thm:conditional el} explicitly exploits the asymmetric structure of the lexicographic order which fits well with the asymmetric notion of conditional elicitability, where the roles of $\mT_1$ and $\mT_2$ may not be changed. 
In particular, in the setup of Theorem \ref{thm:conditional el}, the pair $(\mT_2,\mT_1)$ is generally not multi-objective elicitable on $\F$ with respect to the lexicographic order $\lex$ on $\R^2$.
More generally, we suspect that under the conditions of Theorem \ref{thm:conditional el} the lexicographic order is the only order on $\R^2$ which renders $(\mT_2,\mT_1)$ multi-objective elicitable.

In the rest of the paper, we focus on multi-objective scores mapping to $(\R^2, \lex)$. We use the results of Section~\ref{Scoring functions, identification functions and (conditional) elicitability}, summarised in Figure~\ref{fig:overview meth}, extensively to prove our structural results for the systemic risk measures in the next section.


\section{Structural results for CoVaR, CoES and MES}
\label{Structural results for CoVaR, CoES and MES}

\subsection{CoVaR, CoES and MES fail to be identifiable or elicitable}

The following proposition shows that the three risk measures $\CoVaR_{\a|\b}$, $\CoES_{\a|\b}$ and $\MES_{\b}$ generally fail to be identifiable or elicitable on sufficiently rich classes of bivariate distributions $\F\subseteq \F^0(\R^2)$. It is proven by showing that the CxLS property (which remains necessary for multi-objective scores; see Proposition~\ref{prop:CxLS}) is violated in each case.

\begin{prop}\label{cor:negative result}
For $\a, \b\in(0,1)$, $\CoVaR_{\a|\b}$, $\CoES_{\a|\b}$ and $\MES_{\b}$ are neither identifiable nor elicitable on any class $\F\subseteq \F^0(\R^2)$ containing all bivariate normal distributions along with their mixtures.
\end{prop}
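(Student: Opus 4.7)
The plan is to exploit the observation, formally extended to the multi-objective setting in Proposition~\ref{prop:CxLS}, that the Convex Level Sets (CxLS) property is necessary for identifiability and (multi-objective) elicitability: if $\mT$ is point-valued and the class $\F$ is convex, then $\mT(F_0)=\mT(F_1)$ must imply $\mT(\lambda F_0 + (1-\lambda) F_1) = \mT(F_0)$ for every $\lambda \in [0,1]$ with $\lambda F_0 + (1-\lambda) F_1\in\F$. It therefore suffices, for each of $\CoVaR_{\a|\b}$, $\CoES_{\a|\b}$ and $\MES_\b$, to exhibit two bivariate normals $F_0, F_1$ with equal risk measure value whose midpoint mixture $F_{1/2}=\tfrac12(F_0+F_1)$ has a strictly different value; since $\F$ contains all bivariate normals together with their mixtures, $F_{1/2}\in\F$ by assumption.

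The common mechanism driving all three violations is that the risk measures are computed on the \emph{distribution-dependent} conditioning event $\{X\ge \VaR_\b(X)\}$. Keeping the joint dependence structure and the $Y$-marginal fixed, one can shift the mean of the $X$-marginal without affecting any of the three risk measure values (by translation equivariance of $\VaR_\b$ in $X$), while nontrivially moving the anchor $\VaR_\b(X)$ itself. Under the mixture of two such distributions, the $X$-marginal becomes bimodal, its $\b$-quantile $q$ differs from both component quantiles, and the conditional distribution of $Y$ given $\{X\ge q\}$ becomes a non-trivial convex combination of two truncated conditionals with incompatible truncation thresholds, producing a generically different functional value.

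The cleanest execution is for $\MES_\b$. I take $F_i = \mathrm{N}((\mu_i, 0), \Sigma_\rho)$ with unit variances and a common correlation $\rho\in (0,1)$, $\mu_0\neq \mu_1$. Translation equivariance yields $\MES_\b(F_0)=\MES_\b(F_1)=\rho\phi(z_\b)/(1-\b)$, where $z_\b=\Phi^{-1}(\b)$. Let $q$ be the unique solution of $\tfrac12[\Phi(q-\mu_0)+\Phi(q-\mu_1)]=\b$. A direct calculation via the tower property and the truncated-normal mean formula gives
\[
\MES_\b(F_{1/2}) \;=\; \frac{\rho}{2(1-\b)}\bigl[\phi(q-\mu_0)+\phi(q-\mu_1)\bigr].
\]
CxLS would force $\phi(q-\mu_0)+\phi(q-\mu_1) = 2\phi(z_\b)$, which, in view of the strict log-concavity of $\phi$ and $q-\mu_0\neq q-\mu_1$, fails for generic $\mu_0\neq\mu_1$; a numerical check at, e.g., $\mu_0=0$, $\mu_1=2$, $\b=0.95$ confirms this, contradicting CxLS.

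For $\CoVaR_{\a|\b}$ and $\CoES_{\a|\b}$ the same family $F_i$ is used: location invariance in $X$ gives equal component values, while under $F_{1/2}$ the $\a$-quantile of the conditional mixture (respectively its integral over $\gamma\in[\a,1]$) is determined by mixing two truncated bivariate-normal conditionals with different truncation thresholds and generically differs from the common value. The principal obstacle is that, unlike for MES, the mixture $\a$-quantile admits no closed form; I would settle the strict inequality either by a numerical check at a concrete parameter triple $(\a,\b,\rho)$ or by a limiting argument letting $\mu_1-\mu_0\to\infty$, in which case the conditioning event concentrates on the component with the larger mean and the induced conditional quantile is easily shown to differ from the common value $\CoVaR_{\a|\b}(F_0)$.
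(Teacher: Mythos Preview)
Your proposal is correct and follows essentially the same route as the paper: both establish the CxLS failure by shifting the $X$-mean of a bivariate normal while keeping the $Y$-marginal and the correlation fixed, obtaining equal component values by translation equivariance and a different value for the equal-weight mixture, with numerical verification for $\CoVaR_{\a|\b}$. Your closed form $\MES_\b(F_{1/2})=\tfrac{\rho}{2(1-\b)}[\phi(q-\mu_0)+\phi(q-\mu_1)]$ is in fact the correct expression (the paper's displayed mixture formula misweights the two conditional components, though its conclusion is unaffected); note, however, that your appeal to ``strict log-concavity of $\phi$'' does not by itself settle the inequality $\phi(q-\mu_0)+\phi(q-\mu_1)\neq 2\phi(z_\b)$, so---like the paper---you ultimately rely on a numerical check there.
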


Proposition~\ref{cor:negative result} casts doubt on traditional and comparative backtesting approaches for CoVaR, CoES and MES as standalone systemic risk measures. Thus, these measures should not be used for regulatory purposes on their own, because forecasts for them can neither be verified for their adequacy nor can they be sensibly compared to improve their modeling.

Proposition~\ref{cor:negative result} can also be shown on other classes $\F\subseteq \F^0(\R^2)$ that are sufficiently rich, e.g., the classes containing all measures with finite support (see Remark~\ref{rem:CxLS1}). Importantly, $\F$ must be convex and must not only consist of distributions with independent marginals.

\subsection{Joint identifiability results}

Section~\ref{app:cond id and el} establishes the conditional identifiability and conditional elicitability of $\CoVaR_{\a|\b}(Y|X)$, $(\CoVaR_{\a|\b}(Y|X), \CoES_{\a|\b}(Y|X))$, and $\MES_{\b}(Y|X)$ with $\VaR_\b(X)$, respectively. 
This, in combination with Proposition \ref{prop:conditional id}, immediately yields the following joint identifiability results, where for $\a,\b\in(0,1)$ and $p\in\{0\}\cup[1,\infty]$ we use the notation
\begin{align}\nonumber
	\F_{\a}^p(\R)	&:=\big\{F\in \F^p(\R) \colon F\big(\VaR_\a(F)\big)=\a\big\},\\\label{eq:notation}
	\F_{(\a)}^p(\R)	&:=\big\{F\in  \F_{\a}^p(\R)\colon F\big(\VaR_\a(F) + \eps\big)>\a \ \text{for all } \eps>0\big\}, \\ \nonumber
	\F_{(\b)}^0(\R^2)&:=\big\{F_{X,Y}\in\F^0(\R^2)\colon F_X \in \F_{(\b)}^0(\R)\big\}.
\end{align}
\begin{thm}\label{thm:joint id}
Let $\a,\b\in(0,1)$.
Consider the strict $\F_{(\b)}^0(\R^2)$-identification function for $F_{X,Y}\mapsto \VaR_{\b}(F_X)$, $V^{\VaR}\colon \R\times\R^2\to\R$, $V^{\VaR}\big(v,(x,y)\big) = \one\{x\le v\} - \b$.
\begin{enumerate}[\rm (i)]
\item
On $\{F_{X,Y}\in \F^0(\R^2)\colon F_X\in \F^0_{(\b)}(\R), \ F_{Y|X\ge\VaR_\b(X)}\in \F^0_{(\a)}(\R)\}$, the pair $F_{X,Y}\mapsto  (\VaR_\b(F_X),$ $\CoVaR_{\a|\b}(F_{X,Y}) )$ is identifiable with a strict identification function $\mV^{(\VaR, \CoVaR)}\colon \R^2\times \R^2\to\R^2$,
\begin{align}\label{eq:V_CoVaR}
			\mV^{(\VaR, \CoVaR)}\big((v,c),(x,y)\big) 
			&=\begin{pmatrix}
			\one\{x\le v\} - \b\\ 
			\one\{x>v\} \big[\one\{y\le c\} - \a\big]
			\end{pmatrix}.
\end{align}
\item
On $\{F_{X,Y}\in \F^0(\R^2)\colon F_X\in \F^0_{(\b)}(\R), \ F_{Y|X\ge\VaR_\b(X)}\in \F^1_{(\a)}(\R)\}$, the triplet $F_{X,Y}\mapsto  (\VaR_\b(F_X), \CoVaR_{\a|\b}(F_{X,Y}), \CoES_{\a|\b}(F_{X,Y}) )$ is identifiable with a strict identification function $\mV^{(\VaR, \CoVaR, \CoES)}\colon \R^3\times \R^2\to\R^3$,
\begin{equation}\label{eq:V_CoES}
			\big((v,c,e),(x,y)\big) 
			\mapsto\begin{pmatrix}
			\one\{x\le v\} - \b\\ 
			\one\{x>v\} \big[\one\{y\le c\} - \a\big]\\
			\one\{x>v\} \Big[e - \frac{1}{1-\a}\big(y\one\{y>c\} + c(\one\{y\le c\} - \a)\big)			\Big]
			\end{pmatrix}.
\end{equation}
\item
On $\{F_{X,Y}\in \F^0(\R^2)\colon F_X\in \F^0_{(\b)}(\R), \ F_{Y|X\ge\VaR_\b(X)}\in \F^1(\R)\}$, the pair $F_{X,Y}\mapsto  (\VaR_\b(F_X),$ $\MES_{\b}(F_{X,Y}) )$ is identifiable with a strict identification function $\mV^{(\VaR, \MES)}\colon \R^2\times \R^2\to\R^2$,
\begin{align}\label{eq:V_MES}
			\mV^{(\VaR, \MES)}\big((v,\mu),(x,y)\big) 
			&=\begin{pmatrix}
			\one\{x\le v\} - \b\\ 
			\one\{x>v\} \big[\mu-y\big]
			\end{pmatrix}.
\end{align}
\end{enumerate}
\end{thm}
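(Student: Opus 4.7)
The plan is to invoke Proposition~\ref{prop:conditional id} in combination with the conditional identifiability of $\CoVaR_{\alpha|\beta}$, $(\CoVaR_{\alpha|\beta}, \CoES_{\alpha|\beta})$ and $\MES_\beta$ with $\VaR_\beta$ established in Section~\ref{app:cond id and el}, and then to verify directly that the three displayed $\mV$-functions are precisely the strict identification functions delivered by this construction. The common backbone is that the first coordinate pins down $v=\VaR_\beta(X)$, while the remaining coordinates apply standard one-dimensional identifiers to the conditional distribution $F_{Y\mid X\ge \VaR_\beta(X)}$, with the indicator $\one\{x>v\}$ acting as a selector that carves out this conditional distribution from the joint.

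First I would check that $V^{\VaR}$ is a strict $\F^0_{(\beta)}(\R^2)$-identification function for $F_{X,Y}\mapsto \VaR_\beta(F_X)$: its expectation under $F_{X,Y}$ is $F_X(v) - \beta$, which vanishes iff $v = \VaR_\beta(F_X)$, by the uniqueness built into $\F^0_{(\beta)}(\R)$. A crucial corollary on this class is that $\p(X = \VaR_\beta(X)) = 0$, so $\p(X > \VaR_\beta(X)) = 1-\beta > 0$ and the conditional distributions $F_{Y\mid X\ge \VaR_\beta(X)}$ and $F_{Y\mid X > \VaR_\beta(X)}$ coincide.

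Second, for each part I would compute the expectation of the subsequent components at $v = \VaR_\beta(X)$. For~(i), tower law gives
\[
\E\bigl[\one\{X > v\}(\one\{Y \le c\} - \alpha)\bigr] = (1-\beta)\bigl[F_{Y\mid X \ge \VaR_\beta(X)}(c) - \alpha\bigr],
\]
which vanishes iff $c = \CoVaR_{\alpha|\beta}(F_{X,Y})$, since $F_{Y\mid X\ge \VaR_\beta(X)} \in \F^0_{(\alpha)}(\R)$. For~(ii), the bracketed expression in the third row is, up to sign, the classical Fissler--Ziegel identifier for $\ES_\alpha$; conditioning as above and substituting $c = \CoVaR_{\alpha|\beta}$ yields $(1-\beta)\bigl[e - \CoES_{\alpha|\beta}(F_{X,Y})\bigr]$, with integrability ensured by $F_{Y\mid X \ge \VaR_\beta(X)} \in \F^1_{(\alpha)}(\R)$. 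For~(iii), the second component contributes $(1-\beta)\bigl[\mu - \MES_\beta(F_{X,Y})\bigr]$ at the truth, uniquely identifying $\mu$, using $F_{Y\mid X\ge \VaR_\beta(X)} \in \F^1(\R)$.

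Strictness for the full tuple then follows from the triangular structure of the identification equations: the first equation singles out $v = \VaR_\beta(X)$, after which the remaining equations decouple into standard univariate identification problems for a quantile, and for a quantile--shortfall pair, respectively, with unique solutions by the regularity encoded in the subclasses. The main technical subtlety—and the step I would take the most care with—is ruling out spurious zeros introduced by the $\one\{x>v\}$ factor for $v\neq \VaR_\beta(X)$: this requires checking that if the first coordinate's expectation is zero, the forced value of $v$ is genuinely $\VaR_\beta(X)$ (which uses $F_X \in \F^0_{(\beta)}(\R)$), and then that the selector does not degenerate, which is precisely why the condition $\beta<1$ and the uniqueness built into $\F^0_{(\alpha)}$ (and $\F^1_{(\alpha)}$ for part (ii)) are indispensable.
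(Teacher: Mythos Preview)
Your proposal is correct and follows essentially the same route as the paper: the paper derives Theorem~\ref{thm:joint id} as an immediate consequence of Proposition~\ref{prop:conditional id} together with the conditional identifiability results of Section~\ref{app:cond id and el} (which in turn rest on the reweighting principle of Theorem~\ref{thm:reweighting} with weight $w(x,y)=\one\{x>\VaR_\beta(F_X)\}$). Your direct verification of the triangular structure and the explicit computation of the conditional expectations simply spells out what the paper leaves implicit in the phrase ``immediately yields''; in particular, your handling of the selector $\one\{x>v\}$ and the non-degeneracy argument via $\p(X>\VaR_\beta(X))=1-\beta>0$ on $\F^0_{(\beta)}(\R^2)$ matches the paper's reasoning in Section~\ref{app:cond id and el}.
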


Following \cite{NZ17}, Theorem~\ref{thm:joint id} can readily be used to assess the absolute forecast quality via joint (\emph{Wald}-type) calibration tests. These either test the null hypothesis of \textit{unconditional} calibration, $\E[\mV(\vr_t, (X_t,Y_t))]=\vzero$ for all $t\in \mathbb N$, or the more informative null of \emph{conditional} calibration, $\E[\mV(\vr_t, (X_t,Y_t))\,|\,\mathfrak F_{t-1}]=\vzero$ for all $t\in \mathbb N$.
Here, the $\sigma$-algebra $\mathfrak F_{t-1}$ represents the information available to the forecaster at time $t-1$.
Recall that the null of conditional calibration is equivalent to 
$\E[\varphi_{t-1}\mV(\vr_t, (X_t,Y_t))]=\vzero$ for all $\mathfrak F_{t-1}$-measurable random vectors $\varphi_{t-1}$ for all $t\in \mathbb N$.
Unless $\mathfrak F_{t-1}$ is particularly simple, this approach is statistically not feasible, because a finite selection $\varphi_{1,t-1}, \ldots, \varphi_{\ell, t-1}$ of $\mathfrak F_{t-1}$-measurable random vectors is required in practice. Note that using only a finite collection amounts to testing a broader null than the original one of conditional calibration.

\begin{rem}\label{rem:comparison id func}
It is worth pointing out the improvements upon the substantial work of \cite{Banulescu-RaduETAL2019}, who develop---inter alia---traditional backtests of systemic risk measures. Translating their notation into ours, they propose in their equation (4) a one-dimensional identification function for the pair $(\VaR_\b(X), \CoVaR_{\a|\b}(Y|X))$ of the form $V\colon \R^2\times \R^2\to\R$,
\be{eq:Banulescu-Radu CoVaR id}
V\big((v,c),(x,y)\big) = \one\{x> v\} \one\{y> c\} - (1-\a)(1-\b).
\ee
\cite{Banulescu-RaduETAL2019} point out that this an identification function since
$
\bar V\big((\VaR_\b(X),$ $\CoVaR_{\a|\b}(X|Y)),F_{X,Y}\big) =0.
$
However, it fails to be strict, since, so long as $(1-\a')(1-\b') = (1-\a)(1-\b)$, we have $\bar V\big((\VaR_{\b'}(X), \CoVaR_{\a'|\b'}(X|Y)),F_{X,Y}\big)=0$. Thus, the specific null of `correct' unconditional calibration $\E[V(\vr_t, (X_t, Y_t))]=0$ with $V$ from \eqref{eq:Banulescu-Radu CoVaR id} is too broad, since it is satisfied for all $\vr_t=(\VaR_{\b'}(X),\CoVaR_{\a'|\b'}(X|Y))$ with $(1-\a')(1-\b') = (1-\a)(1-\b)$. This implies only trivial power against such alternatives for the corresponding Wald-test for calibration. This is in stark contrast to the Wald-test employing our two-dimensional identification function in \eqref{eq:V_CoVaR}, where $\E[\mV^{(\VaR, \CoVaR)}(\vr_t, (X_t, Y_t))]=\vzero$ \textit{if and only if} $\vr_t=(\VaR_\b(X),\CoVaR_{\a|\b}(X|Y))$. We refer to Section~\ref{app:Remark 4.5} for a simulation study illustrating the possible detrimental effects of the non-strictness of \eqref{eq:Banulescu-Radu CoVaR id}. We stress that the non-strictness is not only problematic in the uncountably many cases where $(1-\a')(1-\b') = (1-\a)(1-\b)$, but in \textit{all} cases where VaR is overpredicted and CoVaR underpredicted or the other way around. This is because by overpredicting (underpredicting) VaR and underpredicting (overpredicting) CoVaR, the different biases cancel out in the one-dimensional identification function in \eqref{eq:Banulescu-Radu CoVaR id}, thus leading to a loss of power in identifying misspecified forecasts; see Figure~\ref{fig:Figure_comp}.
Intuitively speaking, two moment conditions via a two-dimensional identification function are necessary to uniquely identify the two-dimensional pair $(\VaR_\b(X), \CoVaR_{\a|\b}(Y|X))$. 
This is in line with other two-dimensional strict identification functions for two-dimensional functionals in the literature, e.g.~for Value at Risk and Expected Shortfall, see \cite{FZ16a, NZ17}.

\end{rem}

\subsection{Multi-objective elicitability results}

Recall that $\VaR_{\b}(Y)$, $(\VaR_\b(Y), \ES_\b(Y) )$ and $\E(Y)$ are all elicitable. Surprisingly, 
their conditional counterparts $(\VaR_\b(X), \CoVaR_{\a|\b}(Y|X) )$, $(\VaR_\b(X),$ $\CoVaR_{\a|\b}(Y|X), \CoES_{\a|\b}(Y|X) )$ and $(\VaR_\b(X), \MES_\b(Y|X) )$ fail to be elicitable despite being identifiable (Section~\ref{app:negative results}). 
This is due to integrability conditions, causing an extreme gap between the class of strict identification functions and the class of strictly consistent scoring functions, which turns out to be empty.
Thus, comparative backtests cannot be implemented using a scalar strictly consistent scoring function. 
Furthermore, the conditional elicitability results of Section~\ref{app:cond id and el} can hardly be used for forecast comparisons, unless the $\VaR_\b(X)$ forecasts are the same and correctly specified. However, the conditional elicitability in combination with Theorem~\ref{thm:conditional el} immediately yields the following novel joint multi-objective elicitability results with respect to the lexicographic order $\lex$ on $\R^2$. These results can readily be used for comparing systemic risk forecasts as detailed in Section~\ref{sec:tests}.
We again use the notation introduced in \eqref{eq:notation}.

\begin{thm}\label{thm:mo el}
Let $\a,\b\in(0,1)$.
\begin{enumerate}[\rm (i)]
\item
On $\F \subseteq \F_{\b}^0(\R^2)$, the score $S^{\VaR}\colon \R\times\R^2\to\R$,
\begin{equation}\label{eq:Score VaR}
S^{\VaR}\big(v,(x,y)\big) = \big(\one\{x\le v\} - \b\big)h(v) - \one\{x\le v\}h(x) +a^{\VaR}(x,y)
\end{equation}
is strictly $\F$-consistent for $\F\ni F_{X,Y}\mapsto \VaR_{\b}(F_X)$,
if $h\colon\R\to\R$ is strictly increasing and for all $v\in\R$, the function $(x,y)\mapsto a^{\VaR}(x,y)-\one\{x\le v\}h(x)$ is $\F$-integrable.
\item
On $\F\subseteq \{F_{X,Y}\in \F^0(\R^2)\colon F_X\in \F^0_{(\b)}(\R), \ F_{Y|X\ge\VaR_\b(X)}\in \F^0_{\a}(\R)\}$, the pair $\F\ni F_{X,Y}\mapsto (\VaR_\b(F_X), \CoVaR_{\a|\b}(F_{X,Y}))$ is multi-objective elicitable with respect to $(\R^2, \lex)$. 
A strictly $\F$-consistent multi-objective scoring function $\mS^{(\VaR, \CoVaR)}\colon \R^2\times \R^2\to(\R^2, \lex)$ is given by
\begin{align}\label{eq:S_CoVaR}
			\mS^{(\VaR, \CoVaR)}\big((v,c),(x,y)\big) 
			&=\begin{pmatrix}
			S^{\VaR}\big(v,(x,y)\big)\\ 
			S_v^{\CoVaR}\big(c,(x,y)\big)
			\end{pmatrix}, \\[0.5em]
			\nonumber
			S_v^{\CoVaR}\big(c,(x,y)\big) 
			&= \one\{x>v\} \Big[\big(\one\{y\le c\} - \a\big) g(c) - \one\{y\le c\}g(y) + a(y)\Big] \\ \nonumber
			&\quad + a^{\CoVaR}(x,y),
\end{align}
where 
$g\colon\R\to\R$ is strictly increasing and $S_v^{\CoVaR}$ is $\F$-integrable for all $v\in\R$.
\item
On $\F\subseteq \{F_{X,Y}\in \F^0(\R^2)\colon F_X\in \F^0_{(\b)}(\R), \ F_{Y|X\ge\VaR_\b(X)}\in \F^1_{\a}(\R)\}$, the triplet $\F\ni F_{X,Y}\mapsto (\VaR_\b(F_X), \CoVaR_{\a|\b}(F_{X,Y}), \CoES_{\a|\b}(F_{X,Y}))$ is multi-objective elicitable with respect to $(\R^2, \lex)$.
A strictly $\F$-consistent multi-objective scoring function $\mS^{(\VaR, \CoVaR, \CoES)}\colon \R^3\times \R^2\to(\R^2, \lex)$ is given by
\begin{align}\label{eq:S_CoES}
			&\mS^{(\VaR, \CoVaR, \CoES)}\big((v,c,e),(x,y)\big) 
			=\begin{pmatrix}
			S^{\VaR}\big(v,(x,y)\big)\\ 
			S_v^{(\CoVaR, \CoES)} \big((c,e),(x,y)\big)
			\end{pmatrix},\\[0.5em]
			\nonumber
			&S_v^{(\CoVaR, \CoES)}\big((c,e),(x,y)\big) = 
 \one\{x>v\}\Big[\big(\one\{y\le c\} - \a\big) g(c) - \one\{y\le c\}g(y) 
 \\ \nonumber
&
\quad +\phi'(e)\Big(e - \tfrac{1}{1-\a}\big(y\one\{y>c\} + c(\one\{y\le c\} - \a)\big)\Big) - \phi(e)+ a(y) \Big] + a^{\CoES}(x,y),
\end{align}
where 
 $g\colon\R\to\R$ is increasing, $\phi\colon\R\to\R$ is strictly convex with subgradient $\phi'<0$, 
 and $S_v^{(\CoVaR, \CoES)}$ is $\F$-integrable for all $v\in\R$.
 \item
On $\F\subseteq\{F_{X,Y}\in \F^0(\R^2)\colon F_X\in \F^0_{(\b)}(\R), \ F_{Y|X\ge\VaR_\b(X)}\in \F^1(\R)\}$, the pair $F_{X,Y}\mapsto (\VaR_\b(F_X), \MES_{\b}(F_{X,Y}))$ is multi-objective elicitable with respect to ${(\R^2, \lex)}$.
A strictly $\F$-consistent multi-objective scoring function $\mS^{(\VaR, \MES)}\colon \R^2\times \R^2\to(\R^2, \lex)$ is given by
\begin{align}\label{eq:S_MES}
			\mS^{(\VaR, \MES)}\big((v,\mu),(x,y)\big) 
			&=\begin{pmatrix}
			S^{\VaR}\big(v,(x,y)\big)\\ 
			S_v^{\MES} \big(\mu,(x,y)\big)
			\end{pmatrix}, \\[0.5em]
			\nonumber
			S_v^{\MES}\big(\mu,(x,y)\big) 
			&= \one\{x>v\} \Big[\phi'(\mu)(\mu - y) - \phi(\mu) + a(y)\Big] + a^{\MES}(x,y),
\end{align}
where 
 $\phi\colon\R\to\R$ is strictly convex with subgradient $\phi'$ and $S_v^{\MES}$ is $\F$-integrable for all $v\in\R$.
\end{enumerate}
\end{thm}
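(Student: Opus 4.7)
The overall plan is to combine Theorem~\ref{thm:conditional el} (conditional elicitability implies multi-objective elicitability with respect to the lexicographic order on $\R^2$) with the conditional elicitability of $\CoVaR_{\a|\b}$, $(\CoVaR_{\a|\b}, \CoES_{\a|\b})$ and $\MES_\b$ with $\VaR_\b$ established in Section~\ref{app:cond id and el}. On the classes considered in (ii)--(iv), the condition $F_X \in \F^0_{(\b)}(\R)$ ensures that $\VaR_\b(F_X)$ is point-valued, so the singleton hypothesis of Theorem~\ref{thm:conditional el} is satisfied. What remains is to verify (i) directly, and for (ii)--(iv) to check that the second-component scoring functions $S_v^{\CoVaR}$, $S_v^{(\CoVaR, \CoES)}$, $S_v^{\MES}$ are strictly consistent for the respective conditional target functional on the subclass $\F_v = \{F \in \F : \VaR_\b(F_X) = v\}$.

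For part (i), I would rewrite
\[
S^{\VaR}(v, (x,y)) = \bigl(\one\{x \le v\} - \b\bigr)\bigl(h(v) - h(x)\bigr) + \bigl(a^{\VaR}(x,y) - \b h(x)\bigr).
\]
Modulo the $v$-independent second summand, this is the classical generalised piecewise linear (GPL) loss for the $\b$-quantile of $F_X$. Strict $\F$-consistency then follows from the standard quantile elicitability result of \citet{Gne11}, as $h$ is strictly increasing and the stated integrability condition makes $\bar S^{\VaR}(\cdot, F)$ finite for all $F \in \F$.

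For parts (ii)--(iv), the key observation is that the factor $\one\{x > v\}$ in each second-component score, upon taking expectations under $F_{X,Y}$ with $F_X \in \F^0_{(\b)}(\R)$, multiplies by $\p(X > v) = 1 - \b > 0$ and restricts to the conditional law $F_{Y|X > v}$, which on $\F_v$ coincides with $F_{Y|X \ge \VaR_\b(X)}$. Concretely, for (ii) we obtain
\[
\bar S_v^{\CoVaR}(c, F) = (1-\b)\, \E\bigl[(\one\{Y \le c\} - \a)g(c) - \one\{Y \le c\}g(Y) + a(Y) \,\big|\, X > v\bigr] + \bar a^{\CoVaR}(F),
\]
which is, up to a positive multiplier and a $c$-independent term, the GPL score for the $\a$-quantile of $F_{Y|X > v}$, uniquely minimised at $\CoVaR_{\a|\b}(F_{X,Y})$ on $\F_v$. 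For (iii), the analogous reduction identifies $\bar S_v^{(\CoVaR, \CoES)}(\cdot, F)$ as a positive multiple of the \citet{FZ16a} strictly consistent scoring function for the pair $(\VaR_\a, \ES_\a)$ evaluated under $F_{Y|X > v}$, whose consistency class is exactly characterised by increasing $g$ and strictly convex $\phi$ with subgradient $\phi' < 0$. For (iv), $\bar S_v^{\MES}(\cdot, F)$ collapses to the Bregman score $\phi'(\mu)(\mu - y) - \phi(\mu) + a(y)$ for the mean of $F_{Y|X > v}$, namely $\MES_\b(F_{X,Y})$, whose strict consistency is classical for strictly convex $\phi$. In each case, Theorem~\ref{thm:conditional el} then assembles $S^{\VaR}$ and the respective second-component score into the required multi-objective $(\R^2, \lex)$-consistent scoring function.

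The main technical obstacle is the bookkeeping of the various $F$-integrability conditions, verifying that the identifier-free terms $a(y)$ and $a^{\cdot}(x,y)$ can be absorbed into $v$-, $c$-, $e$-, or $\mu$-independent additive constants of the expected score without spoiling integrability, and confirming that the subclass assumptions on $F_{Y|X\ge \VaR_\b(X)}$ are precisely what is needed for strict consistency of the inner quantile/ES/mean components. Once these reductions are carried out, the individual strict consistency claims reduce to the classical results cited above, and Theorem~\ref{thm:conditional el} delivers the final multi-objective elicitability statement.
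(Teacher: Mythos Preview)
Your proposal is correct and follows essentially the same route as the paper. The paper's argument is simply that the conditional elicitability established in Section~\ref{app:cond id and el} (via the reweighting device of Theorem~\ref{thm:reweighting} with weight $w(x,y)=\one\{x>\VaR_\b(F_X)\}$) combined with Theorem~\ref{thm:conditional el} ``immediately yields'' the result; your computation of $\bar S_v^{\CoVaR}$, $\bar S_v^{(\CoVaR,\CoES)}$, $\bar S_v^{\MES}$ as $(1-\b)$ times the standard quantile, Fissler--Ziegel, and Bregman expected scores under $F_{Y|X>v}$ is exactly this reweighting step spelled out explicitly.
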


The choice of the functions $a, a^{\VaR},  a^{\CoVaR}, a^{\CoES}, a^{\MES}$ in Theorem \ref{thm:mo el} is inessential. It may only influence the integrability of the scoring functions on the one hand and it can control the sign of the scores on the other hand.
For the remaining functions in Theorem \ref{thm:mo el}, one could choose the standard choices, e.g., the identity for $h$ and $g$ in \eqref{eq:Score VaR}, \eqref{eq:S_CoVaR} and \eqref{eq:S_CoES}, giving rise to the common `pinball loss', or $\phi(y) = y^2$ in \eqref{eq:S_MES} leading to the square loss. 
For further possible choices, especially for $\phi$ in \eqref{eq:S_CoES}, we refer to Subsection \ref{Description of Tests}.

Table~\ref{tab:overview} summarises the results of Section~\ref{Structural results for CoVaR, CoES and MES}. For purposes of comparison, the final three rows display the properties of $\VaR_{\b}(Y)$ and $\ES_{\b}(Y)$. The multi-objective elicitability of the pair $(\VaR_\b(Y), \ES_\b(Y) )$ follows from Example \ref{example: lex functionals b}. Table~\ref{tab:overview} highlights the structural difference between $\VaR_{\b}(Y)$, $\E(Y)$, $(\VaR_\b(Y), \ES_\b(Y) )$ on the one hand and their conditional counterparts $(\VaR_\b(X), \CoVaR_{\a|\b}(Y|X) )$, $(\VaR_\b(X), \MES_\b(Y|X) )$, and $(\VaR_\b(X), \CoVaR_{\a|\b}(Y|X),$ $\CoES_{\a|\b}(Y|X))$ on the other hand. While elicitability in the usual sense holds for the former, it fails for the latter. This highlights the importance of the newly introduced concept of multi-objective elicitability, which allows for comparative backtests. Section~\ref{sec:tests} shows how to implement comparative backtests with the multi-objective scores of Theorem~\ref{thm:mo el}.

\begin{table}
	\caption{\label{tab:overview}Overview of properties of (systemic) risk measures. \cmark (\xmark) indicates that property does (does not) apply.} 
	
	\centering
	\small
		\begin{tabular}{lccc}
			\toprule
	Risk Measure	 	&  Identifiability  &	Elicitability				& Multi-objective \\
									& 									&											&Elicitability 	\\
						\midrule
	$\CoVaR_{\a|\b}(Y|X)$                                          & \xmark & \xmark & \xmark \\
	$\CoES_{\a|\b}(Y|X)$                                           & \xmark & \xmark & \xmark \\
	$\MES_{\a|\b}(Y|X)$                                            & \xmark & \xmark & \xmark \\
	\midrule
	$(\VaR_\b(X), \CoVaR_{\a|\b}(Y|X) )$                      & \cmark & \xmark & \cmark \\
	$(\VaR_\b(X), \CoVaR_{\a|\b}(Y|X), \CoES_{\a|\b}(Y|X) )$  & \cmark & \xmark & \cmark \\
	$(\VaR_\b(X), \MES_\b(Y|X) )$                             & \cmark & \xmark & \cmark \\
	\midrule
	$\VaR_{\a}(Y)$                                            & \cmark & \cmark & \cmark \\
	$\ES_{\a}(Y)$                                             & \xmark & \xmark & \xmark \\
	$(\VaR_\a(Y), \ES_\a(Y) )$                                & \cmark & \cmark & \cmark \\
			\bottomrule
		\end{tabular}\qquad\qquad\qquad\qquad\qquad\qquad\qquad\qquad\qquad\qquad\qquad\qquad\qquad\qquad\qquad\qquad\qquad\qquad\qquad\qquad\qquad\qquad\qquad\qquad\qquad\qquad\qquad\qquad\qquad\qquad\qquad\qquad\qquad\qquad\qquad\qquad\qquad\qquad\qquad\qquad\qquad\qquad\
\end{table}

\section{Diebold--Mariano tests for multi-objective scores}
\label{sec:tests}

\subsection{Two-sided tests}\label{Two-Sided Tests}

\citet{DM95} propose to use formal hypothesis tests to account for sampling uncertainty in forecast comparisons. These so-called Diebold--Mariano (DM) tests are widely used in empirical forecast comparisons and continue to be studied in the theoretical literature. However, consistent with the extant notion of strict consistency, DM tests have hitherto relied on scalar scoring functions. Thus, here we show how to use our two-dimensional multi-objective scores from Theorem~\ref{thm:mo el} in DM tests, with a special focus on the implications caused by the lexicographic order.

To that end, denote by $\mS=(S_1, S_2)^\prime$ one of the multi-objective scores of Theorem~\ref{thm:mo el}. 
Let $\{\vr_{t,(1)}\}_{t=1,\ldots,n}$ and $\{\vr_{t,(2)}\}_{t=1,\ldots,n}$ be the appertaining competing sequences of forecasts (e.g., if $\mS=\mS^{(\VaR,\CoVaR)}$, then $\vr_{t,(i)}=(\widehat{\VaR}_{t,(i)},\ \widehat{\CoVaR}_{t,(i)})$ for $i=1,2$). 
The verifying observations are $\{(X_t, Y_t)\}_{t=1,\ldots,n}$. We compare the two forecasts via the (bivariate) score differences $\vd_t:=(d_{1t}, d_{2t})^\prime:=\mS(\vr_{t,(1)}, (X_t, Y_t))-\mS(\vr_{t,(2)}, (X_t, Y_t))$. The two-sided null hypothesis is that both forecasts predict equally well on average, i.e.,
	$H_0^{=} \colon \E[\overline{\vd}_n]=\vzero$ for all $n=1,2,\ldots$,
where $\overline{\vd}_n:=(\overline{d}_{1n}, \overline{d}_{2n})^\prime:=\frac{1}{n}\sum_{t=1}^{n}\vd_t$. 
(Along the lines of \cite{GW06}, one can also test the conditional null hypothesis 
$H_0^{*=} \colon \E[ \vd_t\mid \mathfrak F_{t-1}]=\vzero$ for all $t=1,2,\ldots$,
where the $\sigma$-algebra $\mathfrak F_{t-1}$ contains all information available at time $t-1$.)
We test $H_0^{=}$ using the Wald-type test statistic
\begin{equation}
\label{eq:T_n}
	\mathcal{T}_n=n \overline{\vd}_n^\prime\widehat{\mOmega}_n^{-1} \overline{\vd}_n,
\end{equation}
where $\widehat{\mOmega}_n$ 
is some consistent estimator of the variance-covariance matrix $\mOmega_n=\Var(\sqrt{n}\overline{\vd}_n)$ 
under the null hypothesis (i.e., in the componentwise norm, $\|\widehat{\mOmega}_n-\mOmega_n\|\overset{\p}{\longrightarrow}0$, as $n\to\infty$). To account for possible autocorrelation in the sequence $\{\vd_t\}_{t=1,\ldots,n}$, one can use

\begin{multline}\label{eq:hat Omega_n}
	\widehat{\mOmega}_n=\begin{pmatrix}\widehat{\sigma}_{11,n} & \widehat{\sigma}_{12,n} \\ \widehat{\sigma}_{12,n} & \widehat{\sigma}_{22,n}\end{pmatrix}=\frac{1}{n}\sum_{t=1}^{n}(\vd_t-\overline{\vd}_n)(\vd_t-\overline{\vd}_n)^\prime \\ 
	+ \frac{1}{n}\sum_{h=1}^{m_n}w_{n,h}\sum_{t=h+1}^{n}\Big[(\vd_t-\overline{\vd}_n)(\vd_{t-h}-\overline{\vd}_n)^\prime + (\vd_{t-h}-\overline{\vd}_n)(\vd_t-\overline{\vd}_n)^\prime\Big],
\end{multline}
where $m_n\rightarrow\infty$ is a sequence of integers satisfying $m_n=o(n^{1/4})$, and $w_{n,h}$ is a uniformly bounded scalar triangular array with $w_{n,h}\rightarrow1$, as $n\to\infty$, for all $h=1,\ldots,m_n$; see \citet{Whi01} for detail.
Under the assumption that $\{\vd_t\}_{t=1,\ldots,n}$ does not exhibit autocorrelation under the null, $m_n$ can be set to 0 such that 
$\widehat{\mOmega}_n$ is simply the sample variance-covariance matrix.
\begin{thm}\label{thm:DM}
Suppose that $\mOmega_n\longrightarrow\mOmega$, as $n\to\infty$, where $\mOmega$ is positive definite. Then, under technical Assumption~\ref{ass:DM} (see Supplement Section~\ref{app:Proofs}), it holds under $H_0^{=}$ that
\[
	\sqrt{n}\widehat{\mOmega}_n^{-1/2}\overline{\vd}_{n}\overset{d}{\longrightarrow}N(\vzero, \mI_{2\times2}),\qquad\text{as }n\to\infty,
\]
where $\mI_{2\times2}$ denotes the $(2\times2)$-identity matrix. In particular, $\mathcal{T}_n\overset{d}{\longrightarrow}\chi_{2}^{2}$, as $n\to\infty$, where $\chi_2^{2}$ denotes a $\chi^2$-distribution with 2 degrees of freedom.
\end{thm}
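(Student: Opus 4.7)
The proof is an asymptotic normality argument of Wald type, and I would structure it in three standard steps followed by a continuous mapping argument. The overall plan is: (i) prove a central limit theorem for $\sqrt{n}\overline{\vd}_n$ under the null, (ii) show that the HAC-type covariance estimator $\widehat{\mOmega}_n$ in \eqref{eq:hat Omega_n} is consistent for $\mOmega$, and (iii) combine the two by Slutsky's theorem and the continuous mapping theorem, exploiting that $\mOmega$ is positive definite so that the matrix inverse square root is continuous in a neighborhood of $\mOmega$.

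For step (i), under $H_0^{=}$ we have $\E[\vd_t]=\vzero$ (for each $t$, after noting that $\E[\overline{\vd}_n]=\vzero$ for all $n$ pins down the mean sequence). The technical Assumption~\ref{ass:DM} in the supplement will supply the mixing/ergodicity and moment conditions needed to invoke a multivariate CLT for dependent sequences (e.g., a strong mixing CLT in the spirit of \citet{Whi01}, or a martingale CLT if the score differences are a martingale difference sequence relative to $\mathfrak{F}_{t-1}$). Together with $\mOmega_n \to \mOmega$, this delivers
\begin{equation*}
\sqrt{n}\,\overline{\vd}_n \overset{d}{\longrightarrow} N(\vzero,\mOmega), \qquad n\to\infty.
\end{equation*}

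For step (ii), one invokes the standard result that kernel-type long-run variance estimators of the form \eqref{eq:hat Omega_n} are consistent under the bandwidth growth $m_n=o(n^{1/4})$, the kernel regularity $w_{n,h}\to 1$, and Assumption~\ref{ass:DM}; this is precisely the setting treated by \citet{Whi01}, so I would cite that reference to conclude $\widehat{\mOmega}_n \overset{p}{\longrightarrow} \mOmega$. Positive definiteness of the limit $\mOmega$ implies $\widehat{\mOmega}_n$ is positive definite with probability tending to one, hence $\widehat{\mOmega}_n^{-1/2}$ is well-defined on an event of probability tending to one. Since the map $A\mapsto A^{-1/2}$ is continuous on the open cone of symmetric positive definite matrices, the continuous mapping theorem yields $\widehat{\mOmega}_n^{-1/2}\overset{p}{\longrightarrow}\mOmega^{-1/2}$.

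For step (iii), Slutsky's theorem combined with the CLT from step (i) gives
\begin{equation*}
\sqrt{n}\,\widehat{\mOmega}_n^{-1/2}\overline{\vd}_n = \widehat{\mOmega}_n^{-1/2}\cdot\sqrt{n}\,\overline{\vd}_n \overset{d}{\longrightarrow} \mOmega^{-1/2} Z, \qquad Z\sim N(\vzero,\mOmega),
\end{equation*}
and the right-hand side is $N(\vzero,\mI_{2\times 2})$. A final application of the continuous mapping theorem to the squared Euclidean norm then produces $\mathcal{T}_n=\|\sqrt{n}\,\widehat{\mOmega}_n^{-1/2}\overline{\vd}_n\|^2 \overset{d}{\longrightarrow} \|N(\vzero,\mI_{2\times 2})\|^2 \sim \chi^2_2$.

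The only delicate point, and the one that will absorb most of the work in the formal write-up, is step (i): verifying that Assumption~\ref{ass:DM} combined with the lexicographic structure of the scores in Theorem~\ref{thm:mo el} really does entail the hypotheses of a standard multivariate CLT for (possibly serially dependent) stationary sequences. In particular, the scores feature indicator functions $\one\{x\le v\}$, $\one\{y\le c\}$, so moment bounds on $\vd_t$ reduce to moment bounds on $h$, $g$, $\phi$, $\phi'$ and on $Y$ under the conditional distribution $F_{Y\mid X\ge \VaR_\b(X)}$. Once the moment and mixing conditions are in place, the rest of the argument is mechanical.
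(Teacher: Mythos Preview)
Your proposal is correct and matches the paper's own proof essentially line for line: the paper invokes a Cram\'er--Wold device together with Theorem~5.20 of \citet{Whi01} for the multivariate CLT under B1--B2, and Theorem~6.20 of \citet{Whi01} for the HAC consistency $\|\widehat{\mOmega}_n-\mOmega_n\|\overset{\p}{\to}0$ under B1--B4, then concludes by the same Slutsky/continuous-mapping argument you outline. Your final paragraph about verifying moment conditions on the specific scores is unnecessary here, since Assumption~\ref{ass:DM} is imposed directly on the score-difference sequence $\{\vd_t\}$ rather than on the underlying data.
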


Thus, we reject $H_0^{=}$ at significance level $\nu$, if $\mathcal{T}_n>\chi_{2,1-\nu}^2$, where $\chi_{2,1-\nu}^2$ is the $(1-\nu)$-quantile of the $\chi_2^2$-distribution. A typical non-rejection region in terms of $\overline{d}_{1n}$ and $\overline{d}_{2n}$ is sketched in Figure~\ref{fig:RejReg}~(a), and has the well-known ellipse shape. For brevity, we leave out a formal investigation of our test under the alternative. We mention, however, that consistency of our test can be established along the lines of \citet{GW06}. 

\begin{figure}
	\centering
		\includegraphics[width=\textwidth]{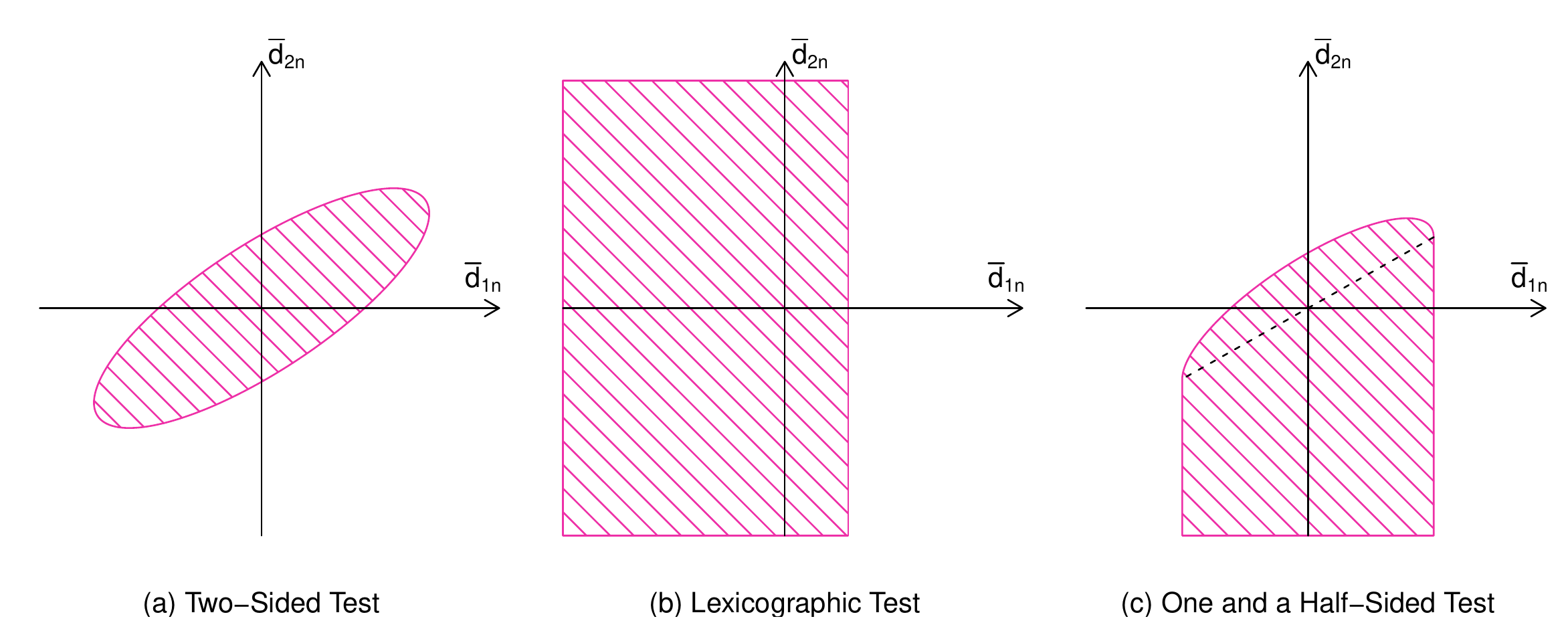}
	\caption{Non-rejection regions for two-sided DM test in (a), lexicographic DM test in (b), and one and a half-sided DM test in (c).}
	\label{fig:RejReg}
\end{figure}

\subsection{One and a half-sided tests}\label{One-Sided Tests}

In several contexts, it may be desirable to perform a one-sided comparative backtest to establish the superiority of risk forecasts $\{\vr_{t,(2)}\}_{t=1,\ldots,n}$ over some benchmark forecasts $\{\vr_{t,(1)}\}_{t=1,\ldots,n}$. 
These different forecasts could stem from two different internal models of a financial institution (say, a legacy model and and extension thereof). It could also be the case that the benchmark forecasts originate from a regulatory standard model, and---in line with the conservative backtesting approach of \cite{FZG16}---the financial institution has the onus of proof to show the  superiority of its internal model over the standard model.
In such situations, it is tempting to test the null hypothesis $\E[\overline{\vd}_n]\lex\vzero$, which is equivalent to 
\begin{equation}\label{eq:lex}
	\E[\overline{d}_{1n}]<0 \qquad\text{or}\qquad\Big(\E[\overline{d}_{1n}]=0\quad\text{and}\quad\E[\overline{d}_{2n}]\leq0 \Big).
\end{equation}
Here, the goal would be to reject the null of better or, at least, equally good benchmark forecasts as evidence of the superiority of $\{\vr_{t,(2)}\}_{t=1,\ldots,n}$. However, as under standard conditions (see Theorem~\ref{thm:DM}), $\sqrt{n}(\overline{d}_{1n}, \overline{d}_{2n})^\prime$ has an asymptotic bivariate normal distribution, the probability that $\overline{d}_{1n}=0$ and $\overline{d}_{2n}\leq0$ vanishes for large sample sizes. Thus, testing \eqref{eq:lex} amounts to a test of the null $\E[\overline{d}_{1n}]<0$, i.e., that the benchmark VaR forecasts are superior. This null can be tested via $\mathcal{T}_{1n}=\sqrt{n}\overline{d}_{1n}/\widehat{\sigma}_{11,n}^{1/2}$, where we reject \eqref{eq:lex} at significance level $\nu\in(0,1)$ when $\mathcal{T}_{1n}>\Phi^{-1}(1-\nu)$. The corresponding non-rejection region is sketched in Figure~\ref{fig:RejReg}~(b). In particular, we would reject $\E[\overline{\vd}_n]\lex\vzero$ solely based on the predictive performance of the $\VaR(X_t)$ component. In other words, once $\E[\overline{d}_{1n}]<0$ is rejected such that the internal VaR forecasts are superior, the internal forecasts $\vr_{t,(1)}$ are preferable in the lexicographic order \textit{irrespective} of the quality of the systemic risk component. 
Since this would entirely ignore the motivation of the backtest, we disregard this one-sided backtesting approach. 

As a compromise, we suggest to test the following \textit{`one and a half-sided'} null hypothesis.
Since the two systemic risk forecasts only play a role for the ranking in the lexicographic order when $\E[\overline{d}_{1n}]=0$, our suggested null hypothesis takes the form
\begin{equation*}
	H_0^{\lex} \colon \E[\overline{d}_{1n}]=0\quad\text{and}\quad\E[\overline{d}_{2n}]\leq0\qquad\text{for all }n=1,2,\ldots.
\end{equation*}
We demonstrate below how to interpret a rejection of this null. The region in $\mathbb{R}^2$ pertaining to $H_0^{\lex}$ is the lower part of the vertical axis in Figure~\ref{fig:RejReg}~(c).

Obviously, $H_0^{\lex}$ is the union of all $H_0^{(c)}$ with $c\le0$, where 
\[	
	H_0^{(c)} \colon \E[\overline{d}_{1n}]=0\quad\text{and}\quad \E[\overline{d}_{2n}]=c\qquad\text{for all }n=1,2,\ldots
\]
For each individual $c\le0$, this can be tested using the Wald-type test statistic $\mathcal{T}_{n}^{(c)}$, where $\mathcal{T}_{n}^{(c)}$ is defined similarly as $\mathcal{T}_n$, only with $\vd_t$ replaced by $\vd_t^{(c)}=(d_{1t}, d_{2t}-c)^\prime$. (Note that this substitution leaves $\widehat{\mOmega}_n$ unaffected.) Thus, we reject $H_0^{\lex}$ if and only if 
\begin{equation}\label{eq:rejection}
	\mathcal{T}_n^{(c)}> \chi_{2,1-\widetilde{\nu}}^2\qquad\text{for all }c\leq0,
\end{equation}
where $\widetilde{\nu}\in(0,1)$. The area associated with the appertaining non-rejection region is shaded in pink in Figure~\ref{fig:RejReg}~(c). The rejection condition \eqref{eq:rejection} is of course equivalent to 
\(
	\mathcal{T}_n^{\OS}
	:=
	\inf_{c\le 0} \mathcal{T}_n^{(c)}
	= \mathcal{T}_n^{(c^*)}
	>\chi_{2,1-\widetilde{\nu}}^2,
\)
where the solution $c^*:= \min\big\{0, \overline{d}_{2n} - (\widehat{\sigma}_{12,n}/\widehat{\sigma}_{11,n})\overline{d}_{1n}\big\}$ follows from a simple quadratic minimisation problem.
Hence,
\begin{equation}\label{eq:equiv NR}
	\mathcal{T}_n^{\OS}=n\Big(\overline{d}_{1n}, \max\big\{\overline{d}_{2n}, (\widehat{\sigma}_{12,n}/\widehat{\sigma}_{11,n})\overline{d}_{1n}\big\}\Big)\widehat{\mOmega}_n^{-1}\begin{pmatrix}\overline{d}_{1n}\\ \max\big\{\overline{d}_{2n}, (\widehat{\sigma}_{12,n}/\widehat{\sigma}_{11,n})\overline{d}_{1n}\big\}\end{pmatrix}.
\end{equation}

To illustrate this graphically, note that the line defined by $z_{2}=(\widehat{\sigma}_{12,n}/\widehat{\sigma}_{11,n})z_{1}$---indicated by the dashed line in Figure~\ref{fig:RejReg}~(c)---passes through the extremal (negative and positive) horizontal points of the ellipse. Thus, if $\overline{d}_{2n}>(\widehat{\sigma}_{12,n}/\widehat{\sigma}_{11,n})\overline{d}_{1n}$, \eqref{eq:equiv NR} parametrizes the upper half of the tilted ellipse, and if $\overline{d}_{2n}\leq(\widehat{\sigma}_{12,n}/\widehat{\sigma}_{11,n})\overline{d}_{1n}$ the area below. In our numerical experiments, we use $\mathcal{T}_n^{\OS}$ to test $H_0^{\lex}$.

The next proposition shows that rejecting $H_0^{\lex}$ when $\mathcal{T}_n^{\OS}>\chi_{2,1-\widetilde{\nu}}^2$ leads to a test of level $\nu=1/2\big[1+\widetilde{\nu}-F_{\chi_1^2}(\chi_{2,1-\widetilde{\nu}}^{2})\big]$, where $F_{\chi_1^2}$ denotes the cdf of a $\chi_{1}^{2}$-distribution. 

\begin{prop}\label{prop:OS}
Under the conditions of Theorem~\ref{thm:DM}, rejecting $H_0^{\lex}$ if $\mathcal{T}_n^{\OS}>\chi_{2,1-\widetilde{\nu}}^2$ leads to an asymptotic size $\nu$-test with $\nu=1/2\big[1+\widetilde{\nu}-F_{\chi_1^2}(\chi_{2,1-\widetilde{\nu}}^{2})\big]$. 
That is,
\[
	\sup_{c\leq0}\lim_{n\to\infty}\p\Big\{H_0^{\lex}\ \text{is rejected based on}\ \mathcal{T}_{n}^{\OS}\ \Big\vert\ H_0^{(c)}\ \text{holds}\Big\}=\nu.
\]
\end{prop}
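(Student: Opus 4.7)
The plan is to decompose $\mathcal{T}_n^{\OS}$ according to the two regimes in \eqref{eq:equiv NR}. When $\overline{d}_{2n}\geq(\widehat{\sigma}_{12,n}/\widehat{\sigma}_{11,n})\overline{d}_{1n}$, the statistic equals the full Wald form $\mathcal{T}_n^{\OS}=n\overline{\vd}_n^\prime\widehat{\mOmega}_n^{-1}\overline{\vd}_n$. In the complementary regime, substituting the second argument of \eqref{eq:equiv NR} by $(\widehat{\sigma}_{12,n}/\widehat{\sigma}_{11,n})\overline{d}_{1n}$ and using the explicit inverse of the $(2\times 2)$-matrix $\widehat{\mOmega}_n$ collapses the quadratic form to $\mathcal{T}_n^{\OS}=n\overline{d}_{1n}^2/\widehat{\sigma}_{11,n}$, which is the squared $t$-statistic for the first marginal only. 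This piecewise representation is the starting point for both cases below.

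For $c<0$, Theorem~\ref{thm:DM} together with Slutsky gives $\overline{d}_{2n}\overset{\p}{\longrightarrow}c<0$ whereas $(\widehat{\sigma}_{12,n}/\widehat{\sigma}_{11,n})\overline{d}_{1n}\overset{\p}{\longrightarrow}0$. Hence the ``lower'' regime holds with probability tending to one, so $\mathcal{T}_n^{\OS}$ coincides with $n\overline{d}_{1n}^2/\widehat{\sigma}_{11,n}$ up to an $o_{\p}(1)$-term and consequently $\mathcal{T}_n^{\OS}\overset{d}{\longrightarrow}\chi_1^2$. The limiting rejection probability is $1-F_{\chi_1^2}(\chi_{2,1-\widetilde{\nu}}^2)$.

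The decisive case is $c=0$. Here Theorem~\ref{thm:DM} yields $\sqrt{n}\overline{\vd}_n\overset{d}{\longrightarrow}\vW=(W_1,W_2)^\prime\sim N(\vzero,\mOmega)$ and $\widehat{\mOmega}_n\overset{\p}{\longrightarrow}\mOmega$. I would introduce the Cholesky parametrisation $W_1=\sqrt{\sigma_{11}}\,Z_1$ and $W_2=(\sigma_{12}/\sqrt{\sigma_{11}})Z_1+\sqrt{\sigma_{22}-\sigma_{12}^2/\sigma_{11}}\,Z_2$ with $Z_1,Z_2$ i.i.d.\ $N(0,1)$. Then the regime-selecting event $\{W_2\geq(\sigma_{12}/\sigma_{11})W_1\}$ simplifies to $\{Z_2\geq 0\}$, while $\vW^\prime\mOmega^{-1}\vW=Z_1^2+Z_2^2$ and $W_1^2/\sigma_{11}=Z_1^2$. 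The continuous mapping theorem applied to \eqref{eq:equiv NR} then yields
\[
	\mathcal{T}_n^{\OS}\overset{d}{\longrightarrow}Z_1^2+Z_2^2\one\{Z_2\geq 0\}.
\]
Since $Z_2^2$ is independent of $\mathrm{sign}(Z_2)$ and of $Z_1$, the conditional distributions on $\{Z_2\geq 0\}$ and on $\{Z_2<0\}$ are $\chi_2^2$ and $\chi_1^2$, respectively, and each event has probability $\tfrac12$. Consequently, the limiting rejection probability at the threshold $\chi_{2,1-\widetilde{\nu}}^2$ equals $\tfrac12\widetilde{\nu}+\tfrac12[1-F_{\chi_1^2}(\chi_{2,1-\widetilde{\nu}}^2)]=\nu$. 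Finally, stochastic dominance of $\chi_2^2$ over $\chi_1^2$ gives $1-F_{\chi_1^2}(\chi_{2,1-\widetilde{\nu}}^2)<\widetilde{\nu}$, so the rejection probability for $c<0$ lies strictly below $\nu$; hence the supremum over $c\leq 0$ is attained at $c=0$ and equals $\nu$.

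The main obstacle I anticipate is the $c=0$ case, where one must recognise that the limit of $\mathcal{T}_n^{\OS}$ is a genuine $\tfrac12$-$\tfrac12$ mixture of $\chi_1^2$ and $\chi_2^2$ rather than a single chi-square. The orthonormalisation is chosen precisely so that the regime indicator $\{Z_2\geq 0\}$ is independent both of the magnitude $Z_2^2$ (by symmetry of the one-dimensional Gaussian) and of $Z_1$, which is what preserves the chi-square structure conditionally in each regime and produces the clean mixing weights.
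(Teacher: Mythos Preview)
Your proof is correct and takes a genuinely different route from the paper. The paper argues geometrically: it works directly with the limiting bivariate normal $(Z_1,Z_2)'\sim N(\vzero,\mOmega)$ and decomposes the \emph{non-rejection region} in the $(Z_1,Z_2)$-plane into the upper half of the tilted ellipse (probability $\tfrac12(1-\widetilde{\nu})$) plus a half-strip whose probability is computed via symmetry as $\tfrac12 F_{\chi_1^2}(\chi_{2,1-\widetilde{\nu}}^2)$. You instead orthonormalise via Cholesky so that the regime indicator becomes $\{Z_2\ge 0\}$ and the limiting statistic is recognised as $Z_1^2+Z_2^2\one\{Z_2\ge 0\}$, an equal-weight mixture of $\chi_1^2$ and $\chi_2^2$. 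Your approach is more algebraic and makes the $\bar\chi^2$-type mixture structure (familiar from one-sided testing) explicit; it also treats the $c<0$ case separately and rigorously, whereas the paper dispatches the supremum over $c\le0$ in one line ``by construction of the non-rejection region''. The paper's geometric argument, on the other hand, ties directly to its figures and avoids any change of variables. Both are clean; yours arguably yields a bit more insight into \emph{why} the size correction takes the stated mixture form.
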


\begin{rem}\label{rem:sig level corr}
For a test of $H_0^{\lex}$ with desired significance level of $\nu$, one can determine the required level $\widetilde{\nu}$ from $\nu=1/2\big[1+\widetilde{\nu}-F_{\chi_1^2}(\chi_{2,1-\widetilde{\nu}}^{2})\big]$ using standard root-finding algorithms. E.g., if $\nu=1\%$ / $\nu=5\%$ / $\nu=10\%$, then $\widetilde{\nu}=1.60\%$ / $\widetilde{\nu}=7.66\%$ / $\widetilde{\nu}=14.9\%$.
\end{rem}

\begin{rem}\label{rem:degen}
To compare systemic risk forecasts on an equal footing, it may be desirable to compare them based on the same marginal model being used and, hence, based on the same VaR forecasts $\widehat{\VaR}_t=\widehat{\VaR}_{t,(1)}=\widehat{\VaR}_{t,(2)}$. In this case, differences in predictive ability can be attributed solely to the different dependence models; see, e.g., \citet{NZ19+} and \citet{Hog20a+}.
(Note that Theorem~\ref{thm:DM} no longer applies for identical VaR forecasts, since the limit of the covariance matrix $\mOmega$ is only positive semi-definite.)
When $\widehat{\VaR}_t=\widehat{\VaR}_{t,(1)}=\widehat{\VaR}_{t,(2)}$, testing for $\E[\overline{d}_{1n}]=0$ is redundant, and $H_0^{=}$ and $H_0^{\lex}$ reduce to $\E[\overline{d}_{2n}]=0$ and $\E[\overline{d}_{2n}]\leq0$, respectively. These hypotheses can be tested using $\mathcal{T}_{2n}=\sqrt{n}\overline{d}_{2n}/\widehat{\sigma}_{22,n}^{1/2}$.
Under $\E[\overline{d}_{2n}]=0$, $\mathcal{T}_{2n}$ converges to an $N(0,1)$-limit when restricting the assumptions of Theorem~\ref{thm:DM} to the sequence $d_{2t}$. Hence, we would reject $\E[\overline{d}_{2n}]=0$ (or $\E[\overline{d}_{2n}]\leq0$) at significance level $\nu\in(0,1)$, if $|\mathcal{T}_{2n}|>\Phi^{-1}(1-\nu/2)$ (or $\mathcal{T}_{2n}>\Phi^{-1}(1-\nu)$).
\end{rem}

\begin{figure}
	\centering
		\includegraphics[width=0.85\textwidth]{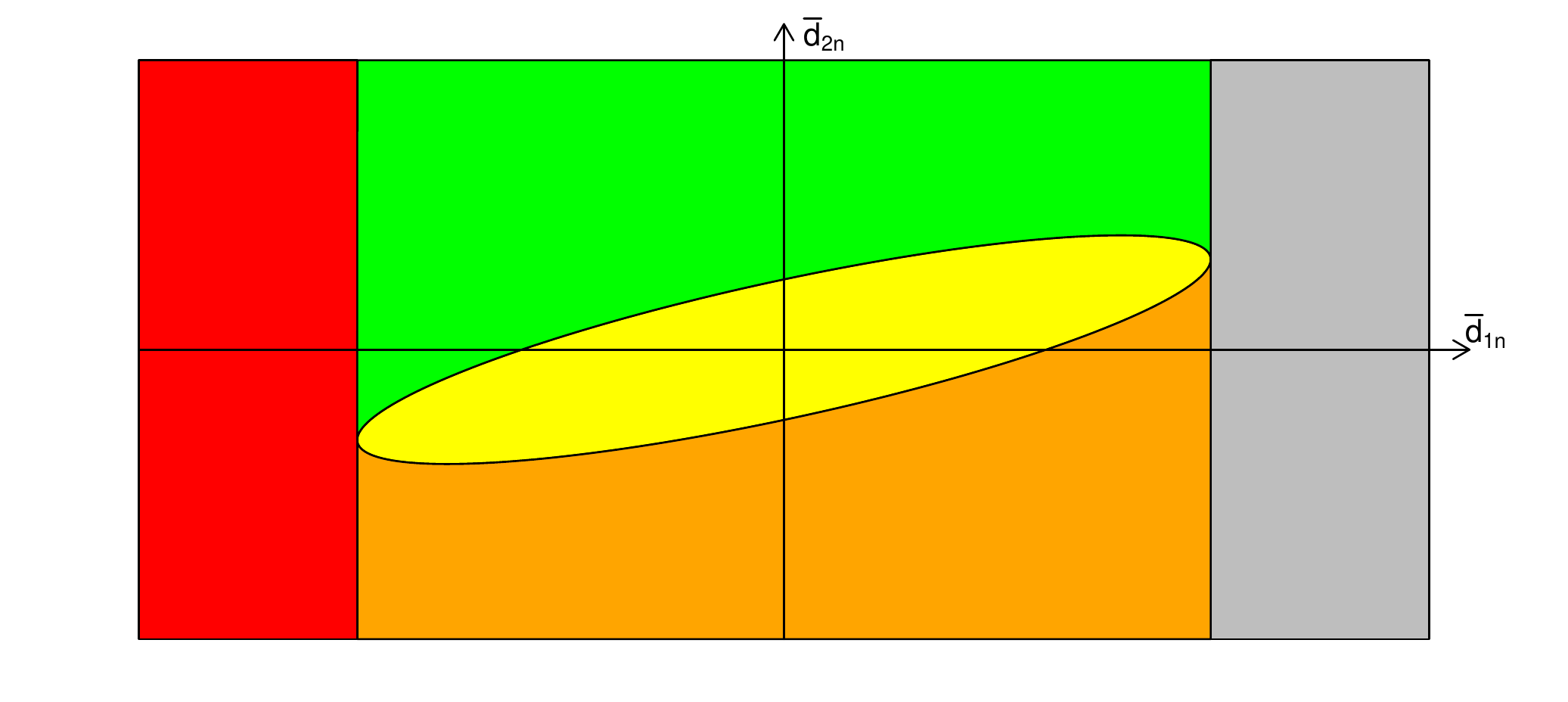}
	\caption{Interpretation of test decisions. Yellow zone corresponds to ellipse. Red (grey) zone corresponds to the rectangle to the left (right) of the ellipse. Green (orange) zone corresponds to area immediately above (below) the ellipse.}
	\label{fig:DecHeur}
\end{figure}

Figure~\ref{fig:DecHeur} presents a schematic for interpreting test decisions on $H_0^{\lex}$ depending on the values of $\overline{d}_{1n}$ and $\overline{d}_{2n}$. In a regulatory context, it extends the three-zone  traffic-light classification of the \citet[Paragraph~99.39]{BCBSBF19}: 
\begin{quote}
\label{quotation:traffic-light}
{
``The green zone corresponds to backtesting results that do not themselves suggest a problem with the quality or accuracy of a bank's model. The yellow zone encompasses results that do raise questions in this regard, but where such a conclusion is not definitive. The red zone indicates a backtesting result that almost certainly indicates a problem with a bank's risk model.''}
\end{quote}
The union of the yellow and the orange areas in Figure~\ref{fig:DecHeur} corresponds to the non-rejection region depicted in Figure~\ref{fig:RejReg}~(c). By symmetry, the union of the yellow and the green areas is the non-rejection region associated with the null
\[
	H_0^{\gex} \colon \E[\overline{d}_{1n}]=0\quad\text{and}\quad\E[\overline{d}_{2n}]\geq0\qquad\text{for all }n=1,2,\ldots.
\]
Thus, for $(\overline{d}_{1n},\overline{d}_{2n})'$ in the yellow area we can neither reject $H_0^{\lex}$ nor $H_0^{\gex}$ (at significance level $\nu=1/2\big[1+\widetilde{\nu}-F_{\chi_1^2}(\chi_{2,1-\widetilde{\nu}}^{2})\big]$, respectively). This implies that there is no evidence of differences in predictive ability between the internal and the benchmark model (at level $\widetilde{\nu}$).
From a regulatory perspective, the bank's internal model is `at the boundary' of what can be deemed acceptable. 
Hence, we suggest heightened attention and close monitoring by the regulator, as indicated by the yellow colour.

When $(\overline{d}_{1n},\overline{d}_{2n})'$ falls into the orange area, then---while the VaR forecasts are of comparable quality---the benchmark model provides superior systemic risk forecasts. Here, as indicated by the orange colour, a revision of the internal systemic risk forecasts is called for, while the internal marginal model is in order.

In contrast, the green area indicates superior systemic risk forecasts of the internal model, with VaR forecasts being comparably accurate. In this case, the financial institution should be allowed to use its internal model for risk forecasting purposes. The green colour indicates a pass of the backtest.

In the red region, the VaR forecasts of the internal model are deemed inferior to the benchmark predictions, whence there is no basis to compare the systemic risk forecasts. 
(The red region corresponds to the rejection region of the null $\E[\overline{d}_{1n}]\ge0$ at level $\nu'=\frac12 - \frac12 F_{\chi_1^2}(\chi_{2,1-\widetilde{\nu}}^{2})$. E.g., for $\nu = 5\%$ we get $\widetilde \nu = 7.66\%$ and $\nu'= 1.17\%$.)
Here, the bank should not be allowed to use its own marginal model (the traffic light is red), but instead should be required to use the benchmark model for the marginals to ensure a fair assessment of the systemic risk forecasts. For this comparison, where the VaR forecasts are identical, one can focus solely on the systemic risk component by using $\mathcal{T}_{2n}$; see Remark~\ref{rem:degen}. For the comparison via $\mathcal{T}_{2n}$, we suggest to adopt a similar decision heuristic as in \citet{FZG16}: Neither $H_0^{\lex}$ nor $H_0^{\gex}$ can be rejected when $|\mathcal{T}_{2n}|\leq \Phi^{-1}(1-\nu)$ (corresponding to our yellow zone). When $\mathcal{T}_{2n}> \Phi^{-1}(1-\nu)$ ($\mathcal{T}_{2n}< \Phi^{-1}(\nu)$) such that $H_0^{\lex}$ ($H_0^{\gex}$) can be rejected, the internal systemic risk forecasts are superior (inferior), corresponding to our green (red) zone.

Similarly as for the red region, there are no grounds for meaningful systemic risk forecast comparisons in the grey area, since the internal model's VaR forecasts are superior. 
(Formally, it corresponds to the rejection region of the null $\E[\overline{d}_{1n}]\le0$ at level $\nu'=\frac12 - \frac12 F_{\chi_1^2}(\chi_{2,1-\widetilde{\nu}}^{2})$.)
In this case, there is no cause for action on the end of the bank (hence the grey colour), but the regulator should rather adopt the bank's VaR model as a basis for comparing the systemic risk forecasts. As before, the subsequent comparison of systemic risk forecasts should be carried out by using $\mathcal{T}_{2n}$, since the VaR forecasts are identical. 

Section~\ref{app:Monte Carlo Simulations} investigates the finite-sample properties of $\mathcal{T}_{n}$, $\mathcal{T}_n^{\OS}$, and $\mathcal{T}_{2n}$ under $H_0^{=}$ and $H_0^{\lex}$ in detail. Here, we only summarize the main findings. First, size is adequate already for $n=500$, which is encouraging since effective sample sizes in risk forecast comparisons are small. Second, power increases markedly in $n$. Third, comparisons for (CoVaR, CoES) are slightly more powerful than those for CoVaR alone, most likely due to the increased informational content of the CoES component. Fourth, as expected for one-sided tests, departures from $H_0^{\lex}$ and $H_0^{=}$ are easier to detect for the former. Fifth, it is in general easier to detect differences in predictive ability of the systemic risk component, when the VaR component is identical (instead of only comparable) across forecasts. Intuitively, the inclusion of comparable VaR forecasts dilutes the power of the test in the systemic risk component.

While the schematic in Figure~\ref{fig:DecHeur} is motivated by a regulatory framework, we stress that it can be used in the context of any comparative backtest between different models. 
Such a general example is provided in Section~\ref{Empirical Application}.

\section{Empirical application}
\label{Empirical Application}

Consider daily log-losses $X_{-r+1},\ldots,X_n$ on the S\&P~500 and log-losses $Y_{-r+1},\ldots,Y_n$ on the DAX~30 from 2000--2020, where the data are taken from \href{https://www.wsj.com/market-data/quotes}{www.wsj.com/market-data/quotes} (ticker symbols: SPX and DX:DAX). So if $P_{Z,t}$ denotes the stock index value at time $t$, then $Z_t=-\log(P_{Z,t}/P_{Z,t-1})$ ($Z\in\{X,Y\}$). We only keep those observations where data on both indexes are available, giving us $n+r=5\,193$ observations. Here, for $\alpha=\b=0.95$, we compare rolling-window (VaR, CoVaR, CoES) forecasts for the series $\{(X_t,Y_t) \}_{t=1,\ldots,n}$, where $r=1\,000$ denotes the moving window length. 
The choice of $X$ and $Y$ amounts to considering the risk for large losses of the DAX~30, given that the world's leading stock index---the S\&P~500---is in distress. To promote flow in this section, we often refer to the simulation setup in Section~\ref{app:Monte Carlo Simulations} for details on the time series models and the risk forecast computation.

For short-term risk management purposes, \emph{conditional} (systemic) risk measure forecasts are more informative than unconditional ones.
Conditional risk measures are based on the conditional distribution of $(X_t, Y_t)$, that is, $F_{(X_t,Y_t) \mid \mathfrak F_{t-1}}(x,y) = \p\{X_t\le x, Y_t\le y \mid \mathfrak F_{t-1}\} =: \p_{t-1}\{X_t\le x, Y_t\le y\}$ for $x,y\in\R$. 
Here, the filtration $\mathfrak F_{t-1}$ is generated by the information available to a forecaster at time $t-1$.
These are usually past observations $(X_{t-1},Y_{t-1}), (X_{t-2}, Y_{t-2}), \ldots$, and possibly additional exogenous information. Here, we assume $\mathfrak{F}_{t-1} = \sigma\big((X_{t-1},Y_{t-1}), (X_{t-2}, Y_{t-2}), \ldots\big)$, such that we forecast the \emph{conditional} risk measures
$\VaR_{t}(X_t) = \VaR_{\b}(F_{X_t\mid\mathfrak F_{t-1}})$, 
$\CoVaR_{t}(Y_t\vert X_t) = \CoVaR_{\a|\b}(F_{(X_t,Y_t) \mid \mathfrak F_{t-1}})$, and
$\CoES_{t}(Y_t\vert X_t) = \CoES_{\a|\b}(F_{(X_t,Y_t) \mid \mathfrak F_{t-1}})$.
For notational brevity, we suppress the dependence of the risk measures on the risk levels $\alpha$ and $\beta$, which we fix at $\alpha=\beta=0.95$.


We consider two different methods for $(\VaR_{t}(X_t), \CoVaR_{t}(Y_t\vert X_t), \CoES_{t}(Y_t\vert X_t))$ forecasting. The first method uses a simple GARCH(1,1) model for $X_t$ and $Y_t$ and---as a dependence model for the respective innovations $\varepsilon_{x,t}$ and $\varepsilon_{y,t}$---a Gaussian copula driven by GAS dynamics. That is, we assume $(\varepsilon_{x,t},\varepsilon_{y,t})\mid\mathfrak{F}_{t-1}$ to have Gaussian copula density $c(\,\cdot\,; \rho_t)$ with time-varying correlation parameter $\rho_t\in(-1,1)$ following GAS dynamics. Details on the precise specification are in Subsection~\ref{DGP}, where the same model (with $\vartheta=\infty$ in Equation \eqref{eq:GAS}) is used in the simulations. Both the marginal and the dependence model are regularly used as benchmark models in forecast comparisons.

The second method uses the GJR--GARCH(1,1) model of \citet{GJR93}. The GJR--GARCH model possesses an additional parameter that allows positive and negative shocks of equal magnitude to have a different effect on volatility. As a dependence model, we now use a $t$-copula driven by GAS dynamics, similarly as in Equation \eqref{eq:GAS}. For both models we remain agnostic regarding the specific distribution of the $\varepsilon_{x,t}$ and $\varepsilon_{y,t}$---both in the model estimation (by using the robust Gaussian quasi-maximum likelihood estimator for the GARCH-type marginal models) and the risk forecasting stage (by using their empirical cdfs $\widehat{F}_x$ and $\widehat{F}_y$ in computing the risk measures). For details on how the risk predictions are calculated, we refer to Subsection~\ref{Risk Forecasts}.

We now compare two sequences of rolling-window predictions. For each rolling window of length $r=1\,000$, we re-fit the two models on a daily basis. 
This gives us $n=4\,193$ forecasts $\{\vr_{t,(1)}\}_{t=1,\ldots,n}$ from the GARCH model with Gaussian copula, and $\{\vr_{t,(2)}\}_{t=1,\ldots,n}$ from the GJR--GARCH with $t$-copula. We interpret the $\vr_{t,(1)}$ as generic benchmark forecasts, which are to be improved upon by the $\vr_{t,(2)}$. In an oversight context, the $\vr_{t,(1)}$ may be some regulatory benchmark forecasts and the $\vr_{t,(2)}$ forecasts from the bank's internal model. Or from a banking perspective, the $\vr_{t,(1)}$ may be forecasts issued from a trading desk's legacy model and the $\vr_{t,(2)}$ are forecasts from a refined version thereof. The latter context is more realistic here, because in the regulatory framework the focus is more often on the relation of between an individual bank's returns and the market as a whole. In any case, the methodology remains the same. We compare forecasts based on the verifying observations $\{(X_t,Y_t)\}_{t=1,\ldots,n}$.

\begin{figure}[t!]
	\centering
		\includegraphics[width=\textwidth]{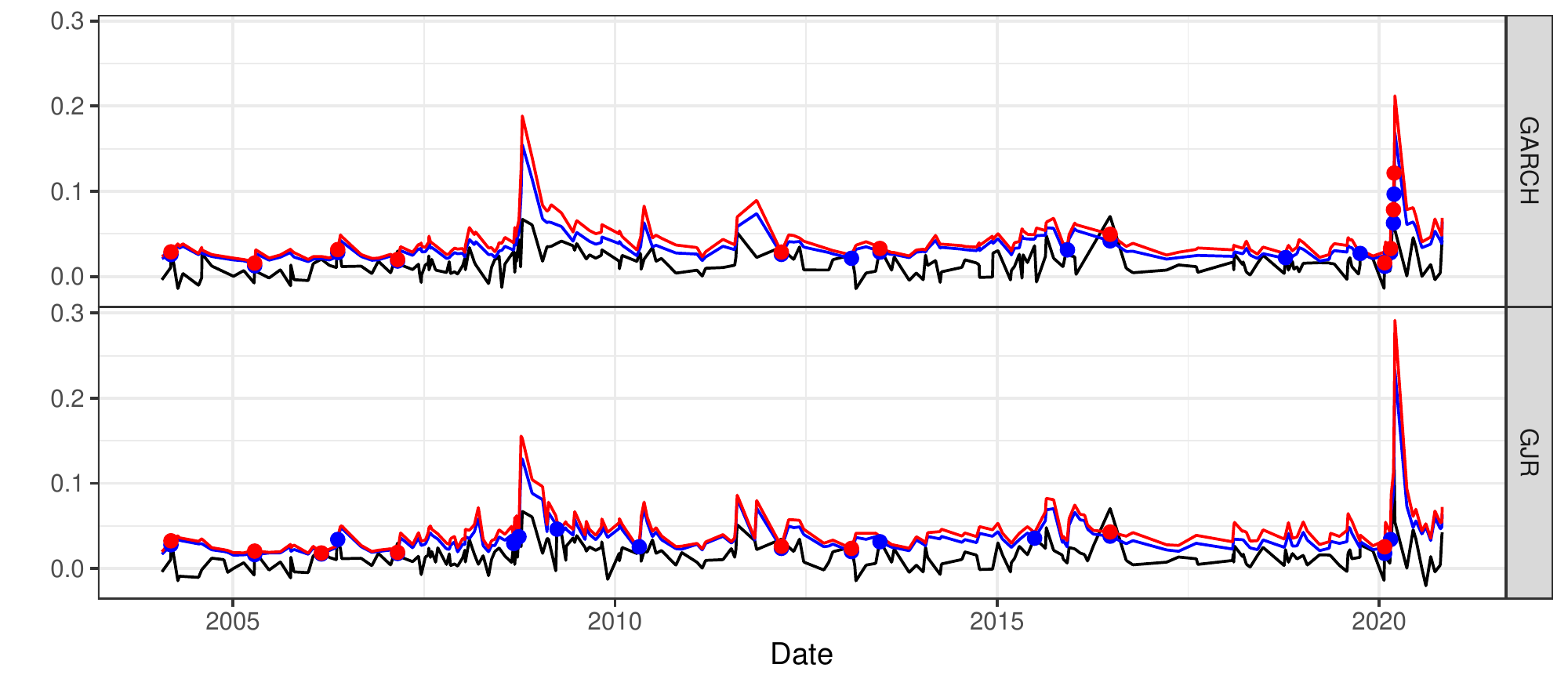}
	\caption{Top: DAX log-losses on days of VaR violation of S\&P~500 (black). CoVaR (CoES) forecasts are shown as the blue (red) line. Violation of CoVaR (CoES) forecast is indicated by a blue (red) dot. All forecasts from GARCH with Gaussian copula.
	Bottom: Same as top, only with forecasts from GJR--GARCH with $t$-copula.}
	\label{fig:Violations}
\end{figure}

Figure~\ref{fig:Violations} shows CoVaR and CoES forecasts from both models, where the top panel corresponds to the GARCH model with Gaussian copula and the bottom panel to the GJR--GARCH with $t$-copula. Specifically, the panels show the CoVaR (blue) and CoES (red) forecasts for the DAX log-losses (black) on days where the S\&P~500 exceeds its VaR forecast. 
Note that due to the different marginal models (and, hence, the different VaR forecasts), the black lines differ slightly in the upper and lower panel of Figure~\ref{fig:Violations}.
By definition, the S\&P~500 should only exceed its VaR forecast on 5\% of all trading days, i.e., on $0.05 \cdot 4193=209.65$ days in our out-of-sample period. 
With 213 (218) VaR violations, our marginal GARCH(1,1) model (GJR--GARCH(1,1) model) is close to the ideal frequency. By definition, we expect our CoVaR forecasts to be not exceeded on 95\% of these 213 days (218 days) with a VaR violation. With 15 and 16 exceedances (blue dots), which correspond to non-exceedance frequencies of 93.0\% and 92.7\%, the Gaussian copula and the $t$-copula model are reasonably close to the 95\%-benchmark. However, for the Gaussian copula, the CoVaR exceedances seem to cluster more, such as during the beginning of the Covid-19 pandemic in early 2020 (top panel of Figure~\ref{fig:Violations}). Such violation clusters are undesirable from a risk management perspective, providing some informal evidence in favour of the $t$-copula model. We investigate this more formally in the following.
Note that both panels of Figure~\ref{fig:Violations} indicate marked spikes in systemic risk during the financial crisis of 2008--2009, during the European sovereign debt crisis in the first half of the 2010's, and---most markedly---in 2020 as a consequence of the Covid-19 pandemic.

As pointed out above, we regard the GARCH model with Gaussian copula as our benchmark model. So we now want to test $H_0^{\lex}$ for (VaR, CoVaR) and (VaR, CoVaR, CoES). Due to the less clustered CoVaR exceedances and the compelling empirical evidence in favour of GJR--GARCH models \citep{GJR93,BEK11} and GAS-$t$-copula models \citep{CKL13,BC19}, we expect the GJR--GARCH model with $t$-copula to produce lower scores, i.e., better risk forecasts, possibly leading to a rejection of $H_0^{\lex}$. We carry out the tests using $\mathcal{T}_n^{\OS}$ with the scores of Equations \eqref{eq:S CoVaR appl} and \eqref{eq:S CoES appl} having 0-homogeneous score differences, and with $\widehat{\mOmega}_n$ from \eqref{eq:hat Omega_n} (with $m_n=0$); see Supplement Subsection~\ref{Description of Tests} for details. For VaR and ES forecasts, scoring functions giving 0-homogeneous score differences are typically recommended, since they allow for `unit-consistent' and powerful comparisons \citep{NZ17}. We confirm the latter in our simulations for systemic risk forecasts as well, thus justifying our choice. Let $\overline{\vd}_n=(\overline{d}_{1n}, \overline{d}_{2n})^\prime$ be defined as in Subsection~\ref{Two-Sided Tests}. Indeed, computing $\overline{\vd}_n$, we find that the GJR--GARCH with $t$-copula produces lower scores for both the (VaR, CoVaR) and the (VaR, CoVaR, CoES) forecasts. The score differences are even statistically significant at the 5\%-level: The $p$-values for the $\mathcal{T}_n^{\OS}$-based Wald test are 2.9\% for (VaR, CoVaR) and 3.0\% for (VaR, CoVaR, CoES).

\begin{figure}[t!]
	\centering
		\includegraphics[width=\textwidth]{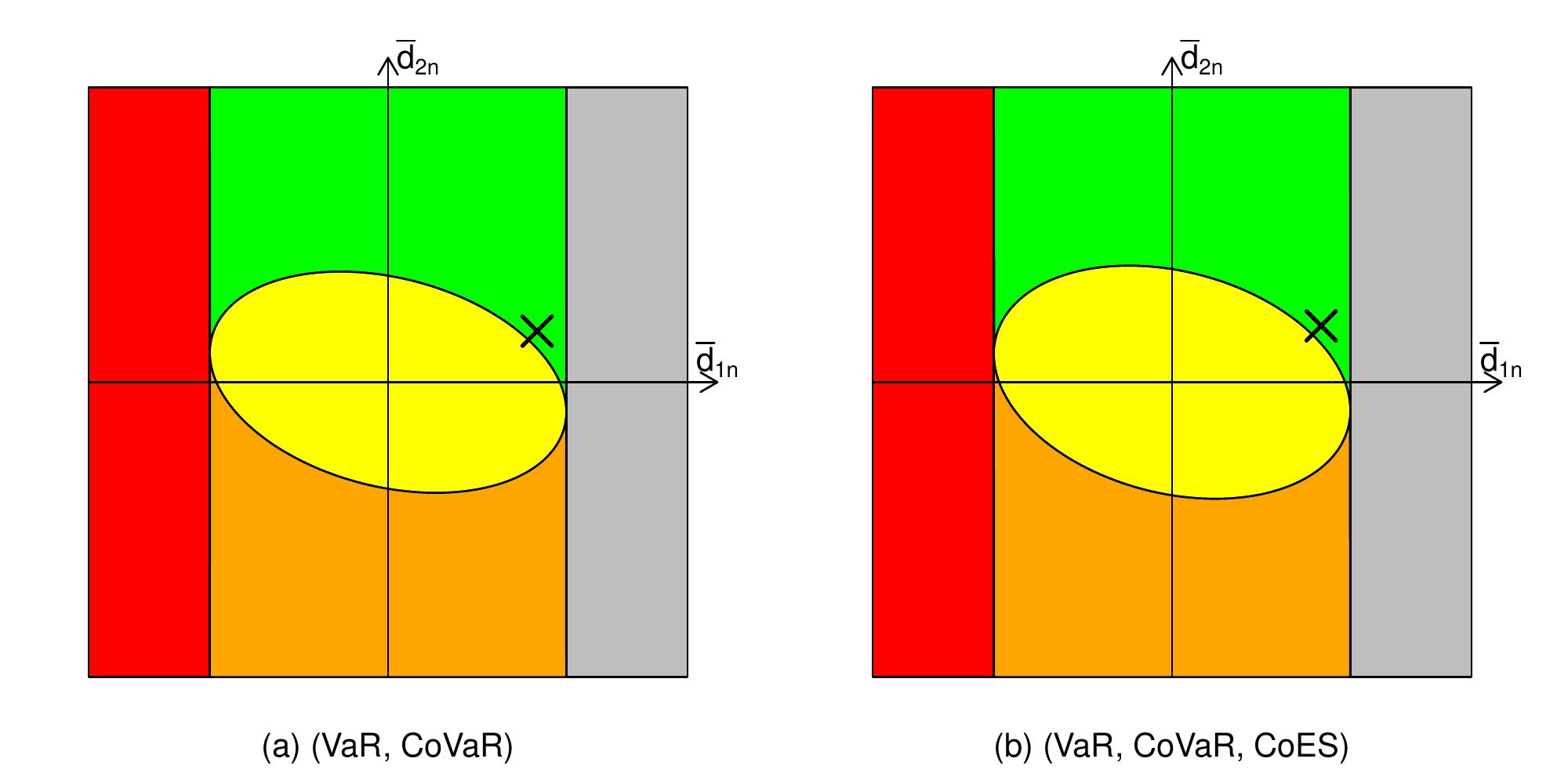}
	\caption{Panel~(a): Values of $\overline{d}_{1n}$ and $\overline{d}_{2n}$ for (VaR, CoVaR) forecasts, indicated by $\times$. Panel~(b): Values of $\overline{d}_{1n}$ and $\overline{d}_{2n}$ for (VaR, CoVaR, CoES) forecasts, indicated by $\times$.}
	\label{fig:DecAppl}
\end{figure}

Figure~\ref{fig:DecAppl} illustrates the test decisions. Panel~(a) shows the results for the (VaR, CoVaR) comparison. Both $\overline{d}_{1n}$ and $\overline{d}_{2n}$ are positive, favouring the forecasts $\vr_{t,(2)}$ from the GJR--GARCH model with $t$-copula. Additionally, the pair $\overline{\vd}_n=(\overline{d}_{1n}, \overline{d}_{2n})^\prime$ lies above the yellow ellipse, which traces the contour of a bivariate normal distribution with probability content $(100-7.66)\%=92.34\%$. This confirms the significance of the score difference at the $5\%$-level; recall Remark~\ref{rem:sig level corr}. The results for the (VaR, CoVaR, CoES) forecasts in panel~(b) are qualitatively similar.
Adopting our traffic-light interpretation,
the GJR--GARCH model with $t$-copula would be deemed an adequate risk forecasting model.

Nonetheless, the borderline significance of this example shows that discriminating between (systemic) risk forecasts requires long samples for the given parameter choice of $\b=0.95$. 
This is as expected, because by only considering observations with one extreme component, the effective sample size is massively reduced to roughly $n(1-\beta)$. Indeed, for both models, the effective out-of-sample period for comparing systemic risk forecasts is reduced from a length of 4193 to just over 200. 
The practical implication is that one should allow for sufficiently large samples to prove the superiority of the internal model over the benchmark. 
Alternatively, one could adapt a slightly looser definition of distress in the reference asset $X$, e.g., by setting $\b=0.9$.
The current evaluation period for VaR forecasts specified in the Basel framework by the \citet{BCBSBF19} is one year, amounting to roughly 250 daily returns. Even for evaluating VaR and ES forecasts, the horizon of 250 days has been called into question for being too short \citep{DLS20}, and this is only magnified for systemic risk forecasts. So whatever evaluation period for systemic risk measures is eventually settled on in a regulatory context, it likely needs to be far in excess of one year.

\section{Discussion and outlook}
\label{Summary and Outlook}

To our knowledge, this is the first paper to come up with comparative backtests for the systemic risk measures CoVaR, CoES and MES, which are crucial inputs in financial, macroeconomic and regulatory applications. Model selection procedures based on our results may enhance modelling attempts of these quantities in financial institutions. Moreover, the fact that our notion of multi-objective elicitability serves as a `truth serum' implies that the regulatory framework can be improved by enticing financial institutions to accurately model systemic risk.
In terms of calibration tests, we are only aware of one more paper: 
\cite{Banulescu-RaduETAL2019} introduce such tests, which unfortunately hinge on non-strict identification functions. This may lead to a severe loss of power under the alternative as demonstrated in Remark~\ref{rem:comparison id func} and Section~\ref{app:Remark 4.5}.
Due to the strictness of our identification functions, such a phenomenon cannot happen in our context. 

The novel concept of multi-objective elicitability is likely to be fruitful also in applications beyond the proposed DM-backtests for systemic risk measures. In Subsection~\ref{app:examples}, we provide more examples of interesting situations where conditional elicitability holds, yet classical joint elicitability fails. By virtue of Theorem~\ref{thm:conditional el}, one can now construct incentive compatible elicitation mechanisms for these functionals, or can come up with $M$-estimation procedures in a regression context.

Beyond the confines of finance, we anticipate many other interesting applications of our backtests. For instance, in economics, \citet{ABG19} and \citet{Aea21} have recently drawn attention to tail risks and their interconnections by popularizing the Growth-at-Risk, which is simply the VaR of GDP growth. This literature has led to an increase in the use of risk forecast evaluation methods in economics \citep[e.g.,][]{BS21}. Hence, the backtests developed in this paper should be relevant for future macroeconomic applications as well.

%


\end{bibunit}

\begin{bibunit}

\section*{Supplement}

\begin{appendix}

\setcounter{page}{1}

\section{Proofs for Section~\ref{Scoring functions, identification functions and (conditional) elicitability}}
\label{sec:Proofs Section 3}

\begin{proof}[{\textbf{Proof of Proposition~\ref{prop:conditional id}:}}]
Let $\mV_1\colon\A_1\times \O \to\R^{m_1}$ be a strict $\F$-identification function for $\mT_1$ 
and, for any $\vr_1\in\A_1$, let $\mV_{2,\vr_1}\colon\A_2\times \O\to\R^{m_2}$ 
be a strict $\F_{\vr_1}$-identification function for $\mT_2$.
We claim that
\[
\mV \colon (\A_1\times \A_2)\times \O \to \R^{m_1+m_2}, \qquad \big((\vr_1,\vr_2),\vy\big)\mapsto \big( \mV_1(\vr_1,\vy), \mV_{2,\vr_1}(\vr_2,\vy)\big)
\]
is a strict $\F$-identification function for $(\mT_1,\mT_2)$.
Consider some $F\in\F$ and $(\vr_1,\vr_2)\in\A_1\times \A_2$.
First, assume $(\vr_1, \vr_2) \in \mT_1(F)\times \mT_2(F)$. Then $\bar \mV_1(\vr_1,F)=\vzero$ and $\bar \mV_{2,\vr_1}(\vr_2,F)=\vzero$. 
Second, assume $\bar \mV_1(\vr_1,F)=\vzero$ and $\bar \mV_{2,\vr_1}(\vr_2,F)=\vzero$. Due to the former equality and the strictness of $\mV_1$, it holds that $\vr_1 \in \mT_1(F)$. Now we can invoke the strictness of $\mV_{2,\vr_1}$ on $\F_{\vr_1}$ to conclude that $\vr_2 \in \mT_2(F)$.
\end{proof}

\begin{proof}[{\textbf{Proof of Theorem \ref{thm:conditional el}:}}]
Let $S_1\colon\A_1\times \O \to\R$ be a strictly $\F$-consistent scoring function for $\mT_1$. 
For any $\vr_1\in\A_1$ let $S_{2,\vr_1}\colon\A_2\times \O \to\R$ 
be a strictly $\F_{\vr_1}$-consistent scoring function for $\mT_2$.
We claim that
\be{eq: S lex}
\mS \colon (\A_1\times \A_2)\times \R^d \to (\R^2, \lex), \qquad \big((\vr_1,\vr_2),\vy\big)\mapsto \big( S_1(\vr_1,\vy), S_{2,\vr_1}(\vr_2,\vy)\big)
\ee
is a strictly multi-objective $\F$-consistent scoring function for $(\mT_1,\mT_2)$ with respect to $\lex$.
To see this, consider some $F\in\F$ with $(\vt_1,\vt_2)\in \mT_1(F) \times \mT_2(F)$ and $(\vr_1,\vr_2)\in\A_1\times \A_2$.
From the strict $\F$-consistency of $S_1$ it follows that
\begin{equation}\label{eq:proof inequ 1}
\bar S_1(\vt_1,F)\le \bar S_1(\vr_1,F),
\end{equation}
where the inequality is strict if $\vr_1\notin \mT_1(F)$.
If the inequality is strict, we can already conclude that 
\(
\bar \mS(\vt_1, \vt_2,F) \slex \bar \mS(\vr_1, \vr_2,F).
\)
Otherwise, 
it must be the case that $\vr_1 \in \mT_1(F)$. 
Since by assumption $\mT_1(F)$ is a singleton, $\vr_1 = \vt_1$.
Invoking the strict $\F_{\vt_1}$-consistency of $S_{2,\vt_1}$, we obtain that 
\begin{equation}\label{eq:proof inequ 2}
\bar S_{2,\vt_1}(\vt_2,F)\le \bar S_{2,\vt_1}(\vr_2,F),
\end{equation}
where the inequality is strict if $\vr_2\notin \mT_2(F)$.
If the inequality is strict, this implies that 
\(
\bar \mS( (\vt_1, \vt_2), F) \slex \bar \mS( (\vr_1, \vr_2), F).
\)
Otherwise, equality 
implies that $\vr_2 \in \mT_2(F)$.
\end{proof}

\begin{rem}\label{rem:relaxation}
Theorem~\ref{thm:conditional el} is almost a direct analogue of Proposition~\ref{prop:conditional id}, with the intriguing exception that $\mT_1$ is assumed to be a singleton on $\F$ in Theorem \ref{thm:conditional el}.
An inspection of the proof reveals that this is needed to show the inequality in \eqref{eq:proof inequ 2}. If at this stage of the proof one has that $\vt_1,\vr_1\in \mT_1(F)$ and $\vt_1\neq \vr_1$, it is \textit{per se} not clear (and in general also not the case) that $S_{2,\vt_1} = S_{2,\vr_1}$ from which the inequality in \eqref{eq:proof inequ 2} would follow.
What would be sufficient, however, to dispense with the assumption on $\mT_1$ and to still show \eqref{eq:proof inequ 2} is the following: It is possible to choose a family of strictly $\F_{\vr_1}$-consistent scores $S_{2,\vr_1}$, $\vr_1\in\A_1$, for $\mT_2$ such that for all $F\in\F$ and for all $\vt_1, \vr_1\in \mT_1(F)$ we have equality for the expected scores, $\bar S_{2,\vt_1}(\cdot, F) = \bar S_{2,\vr_1}(\cdot, F)$. 
In many practical examples, this can be achieved; see, e.g., Example \ref{example: lex functionals b} \eqref{item:VaR ES}.
\end{rem}

\section{More background on multi-objective elicitability}
\label{app:mo-el}

\subsection{Further examples of multi-objective elicitable functionals}
\label{app:mo-scores}
\label{app:examples}

We provide more examples of functionals that are conditionally elicitable, yet fail to be elicitable in the traditional sense. Examples \eqref{item:PI1} and \eqref{item:PI2} are concerned with $\a$-prediction intervals. In line with \citet[Section 4]{FFHR2021} for some random variable $Y$ with distribution $F$, this is any interval $[a,b]\subseteq \R$ such that $\p(Y\in[a,b]) = F(b) - F(a-) \ge \a$.
Of course, each distribution $F$ has a multitude of different $\a$-prediction intervals. We consider two possible attempts of coming up with \emph{interpretable} $\a$-prediction intervals.

Two univariate scores $S, \tilde S\colon \A\times \O \to\R$ are \emph{equivalent} if there exists a positive constant $c>0$ and a function $a\colon \O \to\R$ such that 
$\tilde S(\vr,\vy) = cS(\vr,\vy) + a(\vy)$. Ignoring integrability issues for a moment, it is clear that equivalence preserves (strict) consistency.
If $S$ and $a$ are $\F$-integrable, so is $\tilde S$. However, by a convenient choice of $a$, the integrability conditions of $\tilde S$ can be milder than those of $S$. E.g., while $S(r,y) = (r-y)^2$, $r,y\in\R$, is a strictly $\F^2(\R)$-consistent scoring function for the mean, the equivalent score $\tilde S(r,y) = r^2 - 2ry$ is strictly $\F^1(\R)$-consistent score for the mean.
A straightforward generalisation is obvious for multi-objective scores of the form \eqref{eq: S lex}, i.e.,
\begin{equation}\label{eq:app (3.1)}
\mS \colon (\A_1\times \A_2)\times \R^d \to (\R^2, \lex), \qquad \big((\vr_1,\vr_2),\vy\big)\mapsto \big( S_1(\vr_1,\vy), S_{2,\vr_1}(\vr_2,\vy)\big)'.
\end{equation}  

\begin{example}\label{example: lex functionals b}
\begin{enumerate}[(a)]
\item
The variance functional is conditionally elicitable with the mean functional on $\F^4(\R)$. 
A strictly multi-objective $\F^4(\R)$-consistent score $\mS\colon \R^2\times\R\to (\R^2, \lex)$ is given by \eqref{eq:app (3.1)}, where $S_1(r_1,y) = (r_1 - y)^2$ and 
\(
S_{2,r_1}(r_2,y) = \big[r_2 - (y-r_1)^2\big]^2.
\)
Alternatively, to achieve a strictly multi-objective $\F^2(\R)$-consistent score, one can consider the equivalent version $\tilde S_{2,r_1}(r_2,y) = r_2^2 - 2r_2(y-r_1)^2$, or more generally $\tilde S_{2,r_1}(r_2,y) = -\phi(r_2) + \phi'(r_2)[r_2 - (y-r_1)]$, where $\phi$ is strictly convex with subgradient $\phi'$.
\item
\label{item:VaR ES}
For $\a\in(0,1)$, $\ES_\a$ is conditionally elicitable with the $\a$-quantile, $q_\a$, on $\F^2(\R)$.
A strictly multi-objective $\F^2(\R)$-consistent score $\mS\colon \R^2\times\R\to (\R^2, \lex)$ is given via \eqref{eq:app (3.1)}, where $S_1(r_1,y) =  (\one\{y\le r_1\} - \a)(r_1 - y)$ and 
\(
S_{2,r_1}(r_2,y) = \Big\{r_2 - \frac{1}{1-\a}\big[\one\{y> r_1\}y + r_1(\one\{y\le r_1\}-\a)\big] \Big\}^2.
\)
It is straightforward to verify that this family of scores satisfies the condition discussed in Remark \ref{rem:relaxation}. 
Similar to (a), we can achieve strict multi-objective $\F^1(\R)$-consistency upon replacing $S_{2,r_1}$ by $\tilde S_{2,r_1}(r_2,y) = r_2^2 - \frac{2r_2}{1-\a}\big[\one\{y> r_1\}y + r_1(\one\{y\le r_1\}-\a)\big]$, or more generally by $\tilde S_{2,r_1}(r_2,y) = -\phi(r_2) + \phi'(r_2)\Big\{r_2 - \frac{1}{1-\a}\big[\one\{y> r_1\}y + r_1(\one\{y\le r_1\}-\a)\big]\Big\}$, where $\phi$ is strictly convex with subgradient $\phi'$.
\item
\label{item:PI1}
Fix $\a\in(0,1)$.
Motivated by prediction intervals induced by quantiles of the form $[q_\b^-(F),$ $q_{\a+\b}^-(F)]$ for $\b\in[0,1-\a]$ we consider the following form of $\a$-prediction intervals.
Let $T_1\colon \F\to\R$ be elicitable on $\F\subseteq \F_{\textrm{inc}}\subset \F^0(\R)$, where all $F\in \F_{\textrm{inc}}$ are strictly increasing cdfs, and let $T_2\colon \F\to (-\infty,\infty]$ be the right endpoint of the shortest $\a$-prediction interval $[T_1(F), T_2(F)]$ whose left endpoint is given via $T_1$. 
\citet[Proposition 4.10]{FFHR2021} asserts that $(T_1,T_2)$ is not elicitable, subject to smoothness conditions, but \citet[Proposition A.1]{FFHR2021} yields that $T_2$ is conditionally elicitable with $T_1$ on $\F$.
A strictly multi-objective $\F$-consistent scoring function $\mS\colon \R^2\times \R\to (\R^2, \lex)$ is given by \eqref{eq:app (3.1)}, where $S_1(r_1,y)$ is a strictly $\F$-consistent score for $T_1$ and  
\[
S_{2,r_1}(r_2,y) =  \one\{y\ge r_1\}\Big[\big(\one\{y\le r_2\} - \a\big)\big(g(x) - g(y)\big)\Big] - \one\{y<r_1\}\a g(r_2),
\]
where $g\colon(-\infty,\infty]\to\R$ is strictly increasing and $\F$-integrable.
\item
\label{item:PI2}
A different approach to constructing interpretable $\a$-prediction intervals is to specify the midpoint in terms of an elicitable functional, such as the mean or the median.
Again, let $\a\in(0,1)$ and let $T_1\colon \F\to\R$ be elicitable on $\F\subseteq \F_{\textrm{inc}}$ and let $T_2\colon \F\to [0,\infty)$ be half of the length  of the shortest $\a$-prediction interval $[T_1(F) -T_2(F), T_1(F) +T_2(F)]$ with midpoint $T_1$.
\citet[Proposition 4.12]{FFHR2021} asserts that $(T_1,T_2)$ is not elicitable, subject to smoothness conditions, but \citet[Proposition A.3]{FFHR2021} yields that $T_2$ is conditionally elicitable with $T_1$ on $\F$.
A strictly multi-objective $\F$-consistent scoring function $\mS\colon \R^2\times \R\to (\R^2, \lex)$ is given via \eqref{eq:app (3.1)}, where $S_1(r_1,y)$ is a strictly $\F$-consistent score for $T_1$ and  
\[
S_{2,r_1}(r_2,y) =  \big(\one\{|y-r_1|\le r_2\} - \a\big) \big[g(r_2) - g(|y-r_1|)\big],
\]
where $g\colon[0,\infty)\to\R$ is strictly increasing with $g(|\cdot - \,r_1|)$ being $\F$-integrable for all $r_1\in\R$.
\item
For $\a\in(0,1)$, let $\F^0_{(\a)}(\R)$ as in \eqref{eq:notation}.
Define the right $(1-\a)$-tail of a distribution $F\in \F^0_{(\a)}(\R)$ as $T_\a(F)\in\F^0(\R)$ where $T_\a(F)(x) = \one\{x> \VaR_\a(F)\} (F(x) - \a)/(1-\a)$.
Then, $T_\a$ is conditionally elicitable with $\VaR_\a$ on $\F^0_{(\a)}(\R)$. 
A strictly multi-objective $\F^0_{(\a)}(\R)$-consistent score $\mS\colon \R\times \F^0_{(\a)}(\R)\times \R\to(\R^2, \lex)$ is given via \eqref{eq:app (3.1)}, where $S_1(r,y) = (\one\{y\le r\} - \a)[g(r) - g(y)]$ with strictly increasing and bounded $g\colon \R\to\R$, and the second component of $\mS$ is given by
\(
S_{2,r}(F,y) = \one\{y>r\}R(F,y).
\)
Here, $R\colon \F^0(\R)\times\R\to\R$ is a strictly proper scoring rule. This result follows from \citet[Theorem 5]{Gne11a}; see also Theorem \ref{thm:reweighting} and \cite{HolzmannKlar2017} for related approaches.
\end{enumerate}
\end{example}

\subsection{Dimensionality considerations}\label{Dimensionality considerations}

Consider an elicitable functional $\mT\colon \F\to \P(\A)$ with strictly $\F$-consistent score $S\colon \A\times \O \to\R$, where $ \A\subseteq \R^k$. 
Under sufficient smoothness conditions on the score $S(\cdot,\vy)$ and the expected score $\bar S(\cdot, F)$, first order conditions yield that the $k$-dimensional gradient of $S$ is an $\F$-identification function for $\mT$ (which is not necessarily strict).
Under the conditions of Theorem \ref{thm:conditional el}, first order conditions of the corresponding strictly multi-objective $\F$-consistent score \eqref{eq:app (3.1)}
also induce an $\F$-identification function whose dimension coincides with $\mT$. 
To illustrate this, suppose that $\mT_1$ and $\mT_2$ are univariate, such that $k_1 + k_2 = k=2$. 
 We have the scores $\mS((r_1,r_2), \vy) = \big(S_1(r_1,\vy), S_{2,r_1}(r_2,\vy)\big)' \in (\R^2, \lex)$. At the optimum $(t_1,t_2)\in T_1(F) \times T_2(F)$ we obtain the first order condition
$\partial \bar S_1(r_1,F)=0$ and $\partial \bar S_{2,r_1}(r_2,F)=0$.
Hence, if the derivatives exist and tacitly assuming that integration and differentiation commute, $\mV ( (r_1,r_2), \vy) = \big(\partial S_1(r_1,\vy), \partial  S_{2,r_1}(r_2,\vy)\big)'$ constitutes a two-dimensional $\F$-identification function for $(T_1,T_2)$.
This means we can consider the two-dimensional score \eqref{eq:app (3.1)} as a generalised antiderivative of a $k$-dimensional identification function where the symmetry conditions imposed by the Hessian are massively relaxed, which renders the existence of such an object possible. Thus, multi-objective scores provide a means to close the gap between identification and scoring functions for multivariate functionals.

\subsection{More on multi-objective elicitability with respect to the lexicographic order}\label{More on multi-objective elicitability with respect to the lexicographic order}

\begin{rem}\label{rem:comparison of misspecified forecasts}
Resuming with the discussion right before Theorem \ref{thm:conditional el}, the use of the lexicographic order on $\R^2$ allows to compare any forecasts, also misspecified ones. 
As for the classical univariate concept of strict consistency, strong consistency stays silent about the ranking of possibly misspecified forecasts, which is, however, the more realistic scenario \citep{Patton2020}.
For univariate functionals, consistency implies order-sensitivity under mild conditions \citep{BelliniBignozzi2015, Lambert2019}: If two forecasts are both smaller or larger than the true functional value, the one closer to the true value achieves an expected score at most as large as the other forecast.
For multivariate forecasts, there are various generalisations of order-sensitivity \citep[see, e.g.,][]{LambertETAL2008,FisslerZiegel2019}. For multi-objective scores similar order-sensitivity results would be desirable. We suspect that the component\-wise order-sensitivity concept would be particularly promising in that regard.
\end{rem}

\begin{rem}
On the level of the prediction space setting \citep{GneitingRanjan2013}, 
\cite{HolzmannEulert2014} establish that consistent scoring functions are sensitive with respect to increasing information sets. That is, when comparing two ideal forecasts based on nested information sets, the more informed forecast outperforms the less informed one on average.
This is an instance of the more general calibration--resolution principle of \cite{Pohle2020}, showing that minimising a consistent scoring function amounts to ``jointly maximizing information content and minimizing systematic mistakes''.
Using the same arguments as in \cite{HolzmannEulert2014} and \cite{Pohle2020} (basically exploiting the tower property of conditional expectations), one can establish a similar principle for multi-objective consistent scores mapping to $(\R^2, \lex)$.
\end{rem}

\subsection{The necessity of the Convex Level Sets property}
\label{subsec:CxLS}

Here, we revisit a powerful necessary condition for identifiability and elicitability, namely the Convex Level Sets (CxLS) property, for multi-objective scores. We use the same notation as in Section~\ref{Scoring functions, identification functions and (conditional) elicitability} of the main paper, in particular we let $\F\subseteq \F'\subseteq \F^0(\O)$.
Following the terminology of \citet[Section 3]{FFHR2021} we say that a functional $\mT \colon \F'\to\P(\A)$, satisfies the \emph{selective CxLS property} on $\F$ if 
for all $F_0,F_1\in\F$ and for all $\lambda\in(0,1)$ such that $F_\lambda:= (1-\lambda)F_0 + \lambda F_1\in \F$ we have
\(
\mT(F_0)\cap \mT(F_1) \subseteq \mT(F_\lambda).
\)
$\mT$ satisfies the \emph{selective CxLS* property} on $\F$ if 
for all $F_0,F_1\in\F$ and for all $\lambda\in(0,1)$ such that $F_\lambda:= (1-\lambda)F_0 + \lambda F_1\in \F$ 
$ \mT(F_0)\cap \mT(F_1) \neq \emptyset$ implies that $\mT(F_0)\cap \mT(F_1) =\mT(F_\lambda)$.
We shall frequently omit the term ``selective'' and will just speak of the CxLS and the CxLS* property.
Clearly, the CxLS* property implies the CxLS property. If $\mT$ attains singletons only on $\F$, then the two properties coincide. They then take the familiar form that
$\mT(F_0)= \mT(F_1)$ implies $\mT(F_0) =\mT(F_\lambda)$.
It is well known that the CxLS property is necessary for identifiability (which we state for the sake of completeness).
Proposition 3.4 in \cite{FFHR2021} establishes that the CxLS* property is necessary for univariate elicitability. The following generalises this result to strong multi-objective elicitability.

\begin{prop}\label{prop:CxLS}
Consider a functional $\mT\colon \F'\to\P(\A)$. 
\begin{enumerate}[\rm (i)]
\item
If $\mT$ is identifiable on $\F\subseteq\F'$, it satisfies the CxLS property on $\F$.
\item
If $\mT$ is strongly multi-objective elicitable on $\F\subseteq\F'$, it satisfies the CxLS* property on $\F$.
\end{enumerate}
\end{prop}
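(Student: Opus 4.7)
My plan splits cleanly along the two parts, with both resting on the linearity of $F\mapsto \bar{\mathbf{g}}(\vr,F)$ in the distribution.

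For part (i), I would exploit the linearity of $F\mapsto \bar\mV(\vr,F)$ directly. Taking any $\vt\in\mT(F_0)\cap\mT(F_1)$, the identities $\bar\mV(\vt,F_0)=\bar\mV(\vt,F_1)=\vzero$ give $\bar\mV(\vt,F_\lambda)=(1-\lambda)\bar\mV(\vt,F_0)+\lambda\bar\mV(\vt,F_1)=\vzero$, after which strictness of $\mV$ delivers $\vt\in\mT(F_\lambda)$. This mirrors the classical scalar argument.

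For part (ii), I would establish both inclusions of $\mT(F_0)\cap\mT(F_1)=\mT(F_\lambda)$, assuming the intersection on the left is non-empty, with $\mS$ a strictly strongly multi-objective $\F$-consistent score. For $\mT(F_0)\cap\mT(F_1)\subseteq\mT(F_\lambda)$, I would take $\vt$ in the intersection, apply strong consistency at $F_0$ and $F_1$ to obtain $\bar\mS(\vt,F_i)\lorder\bar\mS(\vr,F_i)$ for every $\vr\in\A$ and $i=0,1$, then use compatibility of $\lorder$ with positive scaling and addition together with linearity of integration to get $\bar\mS(\vt,F_\lambda)\lorder\bar\mS(\vr,F_\lambda)$ for every $\vr\in\A$. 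This exhibits $\vt$ as a minimiser of $\bar\mS(\cdot,F_\lambda)$, and strict strong consistency --- which characterises $\mT(F_\lambda)$ as exactly the set of such minimisers --- then forces $\vt\in\mT(F_\lambda)$.

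For the reverse inclusion $\mT(F_\lambda)\subseteq\mT(F_0)\cap\mT(F_1)$, I would pick $\vt'\in\mT(F_\lambda)$ together with the $\vt$ above, which by the first inclusion also lies in $\mT(F_\lambda)$. Antisymmetry combined with strong consistency at $F_\lambda$ forces $\bar\mS(\vt,F_\lambda)=\bar\mS(\vt',F_\lambda)$, which expanded via linearity reads
\[
(1-\lambda)\bigl[\bar\mS(\vt',F_0)-\bar\mS(\vt,F_0)\bigr]+\lambda\bigl[\bar\mS(\vt',F_1)-\bar\mS(\vt,F_1)\bigr]=\vzero,
\]
while strong consistency at each $F_i$ renders each summand $\lorder$-nonnegative. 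I would then invoke the algebraic fact that in a vector partial order, $\vzero\lorder\vu$, $\vzero\lorder\vv$ and $\vu+\vv=\vzero$ force $\vu=\vv=\vzero$ (since $\vu=-\vv\lorder\vzero$ and antisymmetry closes the loop). This upgrades the one-sided inequalities to the equalities $\bar\mS(\vt,F_i)=\bar\mS(\vt',F_i)$ for $i=0,1$, whereupon strict strong consistency applied separately at $F_0$ and $F_1$ places $\vt'$ in each $\mT(F_i)$.

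The main obstacle is precisely this decomposition step in the reverse inclusion: strong consistency provides only the one-sided bounds $\bar\mS(\vt,F_i)\lorder\bar\mS(\vt',F_i)$, and it is the interplay of antisymmetry of $\lorder$ with the vanishing of a convex combination of two $\lorder$-nonnegative vectors that lets one promote these one-sided bounds to genuine equalities at $F_0$ and $F_1$ separately. Everything else amounts to the same linear-in-$F$ mixing trick that powers part (i).
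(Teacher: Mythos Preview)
Your proposal is correct and follows essentially the same approach as the paper. Part (i) is identical. For part (ii), the paper compresses your two inclusions into a single observation: it shows directly that for $\vt\in\mT(F_0)\cap\mT(F_1)$ the score difference $\bar\mS(\vt,F_\lambda)-\bar\mS(\vr,F_\lambda)$ equals $\vzero$ precisely when $\vr\in\mT(F_0)\cap\mT(F_1)$ and is $\slorder\vzero$ otherwise, which immediately identifies $\mT(F_0)\cap\mT(F_1)$ with the set of expected-score minimisers at $F_\lambda$, i.e.\ with $\mT(F_\lambda)$. Your two-inclusion route unpacks this same content more explicitly; the algebraic fact you isolate (that $\vzero\lorder\vu$, $\vzero\lorder\vv$, $\vu+\vv=\vzero$ force $\vu=\vv=\vzero$) is the contrapositive of what the paper uses ($\vu\slorder\vzero$ and $\vv\lorder\vzero$ imply $\vu+\vv\slorder\vzero$).
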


\begin{proof}[{\textbf{Proof of Proposition \ref{prop:CxLS}:}}]
For (i), let $\mV\colon\A\times \O\to \R^m$ be a strict identification function. 
Let $F_0,F_1\in\F$ such that $F_\lambda\in\F$ for some $\lambda \in(0,1)$. For $\vt\in \mT(F_0)\cap \mT(F_1)$ it holds that
\(
\vzero = (1-\lambda)\bar \mV(\vt,F_0) + \lambda \bar \mV(\vt,F_1) = \bar \mV(\vt,F_\lambda).
\)
Due to the strictness of $\mV$, we have $\vt\in \mT(F_\lambda)$.\\
For (ii), the proof follows along the lines of the proof of Proposition 3.4 in \cite{FFHR2021}.
With the same set-up as above, consider some strictly strongly multi-objective $\F$-consistent scoring function $\mS\colon\A\times \O \to (\R^{\mathcal I}, \lorder)$. For $i\in\{0,1\}$, the strict strong multi-objective $\F$-consistency of $\mS$ implies that for any $\vr\in\A$ we have
$\bar \mS(\vt,F_i) - \bar \mS(\vr,F_i) =\vzero$ for $\vr \in \mT(F_i)$ and for $\vr \notin \mT(F_i)$ the score difference is $\slorder \vzero$.
Hence, $\bar \mS(\vt,F_\lambda) - \bar \mS(\vr,F_\lambda)$ equals
\(
(1-\lambda) \big(\bar \mS(\vt,F_0) - \bar \mS(\vr,F_0) \big) + \lambda \big(\bar \mS(\vt,F_1) - \bar \mS(\vr,F_1) \big),
\)
which is $\vzero$ for $\vr \in \mT(F_0) \cap \mT(F_1)$ and otherwise $\slorder \vzero$.
\end{proof}

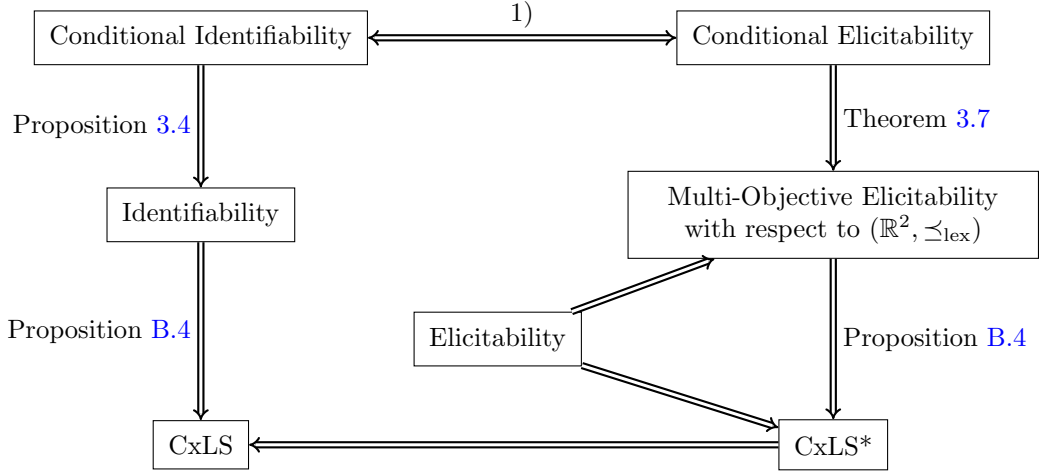
\begin{figure}
\centering
\small
\begin{tikzpicture}[every edge/.style={imp}]
  \matrix[nodes={inner sep=2mm},
  row sep=0.7cm,column sep=0.6cm] {
    \node (c-id) [rectangle, draw] {Conditional Identifiability}; &
																									&
    \node (c-el) [rectangle, draw] {Conditional Elicitability}; \\ \\
		\node (id)   [rectangle, draw] {Identifiability}; &
																									&
    \node (mo-el) [rectangle, draw, text width=5cm, align=center] {Multi-Objective Elicitability with respect to $(\mathbb{R}^2, \lex)$}; \\
		& 
		\node (el)  [rectangle, draw] {Elicitability}; &
		&\\
		\node (cx)  [rectangle, draw] {CxLS}; &
																									&
    \node (cxS) [rectangle, draw] {CxLS*}; \\
  };
  \path
	(c-id) edge[implies-implies,double equal sign distance] node[above] {1)} (c-el)
	(el) edge[imp] (cxS)
	(c-el) edge[imp] node[right] {Theorem~\ref{thm:conditional el}} (mo-el)
	(c-id) edge[imp] node[left] {Proposition~\ref{prop:conditional id}} (id)
	(mo-el) edge[imp] node[right] {Proposition~\ref{prop:CxLS}} (cxS)
	(cxS) edge[imp] (cx)
	(id) edge[imp] node[left] {Proposition~\ref{prop:CxLS}} (cx)
	(el) edge[imp] (mo-el)
  ;
\end{tikzpicture}
\caption{Illustration of the most important structural results for functionals $\mT_1:\mathcal{F}\rightarrow\mathcal{P}(\mathsf{A}_1)$ and $\mT_2:\mathcal{F}\rightarrow\mathcal{P}(\mathsf{A}_2)$. The equivalence 1) holds under some regularity conditions if $\mT_1$ and $\mT_2$ are point-valued mappings to $\mathbb{R}$ \citep{SteinwartPasinETAL2014}.}
\label{fig:overview meth app}
\end{figure}

Proposition \ref{prop:CxLS} implies that strong multi-objective elicitability shares an important feature with the usual univariate notion of elicitability. 
Showing that the CxLS* is violated is the standard procedure to rule out elicitability. Consequently, functionals violating the CxLS* property such as variance or ES also fail to be strongly multi-objective elicitable. This applies in particular to scoring functions mapping to $(\R^2, \lex)$, since $\lex$ is a total order.

On the other hand, weak multi-objective elicitability appears to be much more flexible than its strong counterpart. 
Indeed, the following consideration shows that the CxLS property is not necessary for weak multi-objective elicitability:
Suppose we have a strictly weakly multi-objective $\F$-consistent scoring function $\mS$ for $\mT$ mapping to $(\R^2, \lorder)$, where $\lorder$ is the componentwise order. Suppose further that there are $F_0,F_1\in\F$ such that $F_\lambda = (1-\lambda)F_0 + \lambda F_1\in \F$ for all $\lambda\in(0,1)$. Assume there is some $\vt\in \mT(F_0)\cap \mT(F_1)$ and some $\vr \notin \mT(F_0)\cup \mT(F_1)$
such that $\bar \mS(\vr,F_0) - \bar \mS(\vt,F_0) = (-5,1)'\nprec(0,0)'$ and $\bar \mS(\vr,F_1) - \bar \mS(\vt,F_1) = (1,-5)'\nprec(0,0)'$.  
Then, invoking the linearity of the integral, $\bar \mS(\vr,F_{1/2}) - \bar \mS(\vt,F_{1/2}) = (-2,-2)'\slorder (0,0)'$.
Therefore, $\vt\notin \mT(F_{1/2})$ and the CxLS property is violated (and thus also the CxLS* property).

A further investigation of weak multi-objective elicitability therefore seems to be an interesting field. Such an investigation will be deferred to future work.

As a complete summary of the results of this section and Section~\ref{Scoring functions, identification functions and (conditional) elicitability}, Figure~\ref{fig:overview meth app} complements Figure~\ref{fig:overview meth} in the main paper by including the implications of Proposition~\ref{prop:CxLS}.

\section{Conditional identifiability and conditional elicitability results}
\label{app:cond id and el}

Here, we show that $\CoVaR_{\a|\b}(Y|X)$, $\MES_\b(Y|X)$ and $(\CoVaR_{\a|\b}(Y|X), \CoES_{\a|\b}(Y|X) )$ are all conditionally elicitable and conditionally identifiable with $\VaR_\b(X)$, subject to mild assumptions on the corresponding class of bivariate distributions.

\begin{rem}\label{rem:erw}
Our definitions of the systemic risk measures in Section~\ref{CoVaR, CoES and MES} rely on the conditional cdf $F_{Y\mid X \geq \VaR_\b(X)}$; see \eqref{eq:CoVaR}.
If $F_X$ is continuous, the event $\{X\geq \VaR_{\b}(X)\}$ has probability $1-\b$, such that conditioning on the right tail is akin to the classical definition of expected shortfall at level $\b$, as $\ES_\b(X) = \E[X\,|\,X\geq \VaR_{\b}(X)]$. 
However, if $X=\VaR_\b(X)$ with positive probability, then $\p\{X\geq \VaR_{\b}(X)\}>1-\b$, such that $\{X\geq \VaR_{\b}(X)\}$ loses its interpretation as a tail event. (E.g., for $X$ with $\p\{X=0\} = 1-\p\{X=1\} = p<1$ it holds that $\VaR_\b(X) =1$ for all $\b\in[p,1]$. Hence, $\p\{X\ge \VaR_\b(X)\} = 1-p >1-\b$ if $p<\b$.)
To allow for point masses of $F_X$ at $\VaR_\b(X)$, one may use a similar correction term as for $\ES_\b$; see \citet[Proposition 8.15]{MFE15}.
Specifically, one may adapt the definition of $F_{Y\mid X \geq \VaR_\b(X)}$ as follows. 
For $\b\in[0,1)$ and $y\in\R$ we define the cdf
\begin{multline}\label{eq:new F}
F_{Y\mid X \succcurlyeq \VaR_\b(X)}(y) 
:=
\frac{1}{1-\b}\Big[\p\big\{Y\le y, X > \VaR_\b(X)\big\} \\  
\qquad + \p\big\{Y\le y\mid X = \VaR_\b(X)\big\}\big(1-\b-\p\big\{X>\VaR_\b(X)\big\}\big)\Big].
\end{multline}
If $\p\big\{X>\VaR_\b(X)\big\} = 1-\b$, the second summand vanishes and the first summand simplifies to $F_{Y|X\ge\VaR_\b(X)}(y)$. 
If $\p\big\{X>\VaR_\b(X)\big\} < 1-\b$, then $\p\big\{X=\VaR_\b(X)\big\} >0$, such that the conditional probability $\p\big\{Y\le y\mid X = \VaR_\b(X)\big\}$ can be defined elementarily.
Note that $F_{Y\mid X \succcurlyeq\VaR_0(X)} = F_Y$.
In particular, one may replace $F_{Y\mid X \geq \VaR_\b(X)}$ in \eqref{eq:CoVaR} with $F_{Y\mid X \succcurlyeq \VaR_\b(X)}(y)$ to obtain alternative definitions of $\CoVaR_{\a|\b}(Y|X)$, $\CoES_{\a|\b}(Y|X)$ and $\MES_\b(Y|X)$. All results of Section~\ref{app:cond id and el}, Section~\ref{app:negative results} and Section~\ref{Structural results for CoVaR, CoES and MES} go through unaltered with this change. In fact, we shall prove all these results with the systemic risk measures defined in terms of $F_{Y\mid X \succcurlyeq \VaR_\b(X)}(y)$. We mention that the definitions coincide when $F_X$ is continuous, because then $F_{Y|X\ge\VaR_\b(X)}(\cdot)\equiv F_{Y\mid X \succcurlyeq \VaR_\b(X)}(\cdot)$ for $\b\in[0,1)$.
\end{rem}

We start with a description of the core rationale behind the conditional identifiability and elicitability results.
To this end, first recall a powerful result about weighted scoring and identification functions.

\begin{thm}[\cite{Gne11a} Theorem 5; \cite{Fissler2017} Proposition 2.3.5]\label{thm:reweighting}
Let $\F\subseteq \F' \subseteq  \F^0(\R^d)$ and let $\mV\colon\A\times \R^d\to\R^m$ and $S\colon \A\times\R^d\to\R$ be a strict $\F$-identification function and a strictly $\F$-consistent scoring function for $\mT\colon\F'\to \P(\A)$. Let $w\colon \R^d\to[0,\infty)$ be a non-negative measurable weight function. For any $F\in \F^0(\R^d)$ such that $\bar w(F)\in(0,\infty)$ define the probability measure
\begin{equation*}
F^{(w)}(\diff \vx) := \frac{w(\vx) F(\diff \vx)}{\bar w(F)}, \qquad \vx \in\R^d.
\end{equation*}
Define the class $\F^{(w)} = \{F\in\F^{0}(\R^d)\colon F^{(w)}\in \F\}$ and the functional $\mT^{(w)}\colon \F^{(w)} \to \P(\A)$, $F\mapsto  \mT^{(w)}(F):= \mT(F^{(w)})$. Then
$\mV^{(w)}(\vx,\vy) := w(\vx) \mV(\vx,\vy)$ a strict $\F^{(w)}$-identification function and 
$S^{(w)}(\vx,\vy) := w(\vx) S(\vx,\vy)$
is a strictly $\F^{(w)}$-consistent scoring function for $\mT^{(w)}$.
\end{thm}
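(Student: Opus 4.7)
The plan is to reduce the statement to the strict $\F$-identifiability and $\F$-consistency of $\mV$ and $S$ by exploiting the simple change-of-measure identity
$w(\vy)\,F(\mathrm{d}\vy) = \bar w(F)\, F^{(w)}(\mathrm{d}\vy),$
which is immediate from the definition of $F^{(w)}$, and which is well-defined for every $F\in\F^{(w)}$ because the definition of $\F^{(w)}$ requires $\bar w(F)\in(0,\infty)$. I expect no serious mathematical obstacle; the subtleties are (i) keeping track of the integrability assertions that need to pass from $F^{(w)}\in\F$ to $F\in\F^{(w)}$, and (ii) using $\bar w(F)>0$ to ensure that the equivalences below are genuine two-way implications rather than just one-way.

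First I would treat the identification part. For any $\vr\in\A$ and any $F\in\F^{(w)}$, the $\F$-integrability of $\mV$ evaluated at $F^{(w)}\in\F$ together with the above change-of-measure identity yields
\[
\bar \mV^{(w)}(\vr, F) \;=\; \int w(\vy)\,\mV(\vr,\vy)\, F(\mathrm{d}\vy) \;=\; \bar w(F)\cdot \bar \mV(\vr, F^{(w)}).
\]
Since $\bar w(F)\in(0,\infty)$ and $\mV$ is a strict $\F$-identification function for $\mT$, this gives
$\bar \mV^{(w)}(\vr, F) = \vzero \iff \bar \mV(\vr, F^{(w)}) = \vzero \iff \vr\in \mT(F^{(w)}) = \mT^{(w)}(F),$
which is exactly the strict $\F^{(w)}$-identification property for $\mT^{(w)}$. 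The $\F^{(w)}$-integrability of $\mV^{(w)}$ is just a re-statement of the $\F$-integrability of $\mV$ applied to $F^{(w)}$.

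Second, the same change-of-measure gives $\bar S^{(w)}(\vr, F) = \bar w(F)\cdot \bar S(\vr, F^{(w)})$ for every $\vr\in\A$ and every $F\in\F^{(w)}$. By the strict $\F$-consistency of $S$, for any $\vt\in \mT(F^{(w)}) = \mT^{(w)}(F)$ and any $\vr\in\A$ we have $\bar S(\vt, F^{(w)}) \le \bar S(\vr, F^{(w)})$ with equality iff $\vr\in\mT(F^{(w)}) = \mT^{(w)}(F)$. Multiplying by the strictly positive scalar $\bar w(F)$ preserves both the inequality and its characterisation of equality, so $\bar S^{(w)}(\vt, F)\le \bar S^{(w)}(\vr, F)$ with equality iff $\vr\in \mT^{(w)}(F)$. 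This establishes the strict $\F^{(w)}$-consistency of $S^{(w)}$ for $\mT^{(w)}$, and completes the proof. The only place where the argument could fail is if $\bar w(F)$ were allowed to vanish, which is precisely what the definition of $\F^{(w)}$ rules out.
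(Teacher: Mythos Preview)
Your proof is correct and follows the natural change-of-measure argument. Note that the paper does not actually supply its own proof of this theorem; it is quoted as an external result from \cite{Gne11a} and \cite{Fissler2017}, so there is nothing in the paper to compare against beyond the statement itself. Your argument is the standard one underlying those references, and you have correctly identified the only delicate point, namely that $\bar w(F)\in(0,\infty)$ is needed both for $F^{(w)}$ to be a well-defined probability measure and for the equivalences to be two-sided.
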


Secondly, consider a class of bivariate distributions $F_{X,Y}\in\F^0(\R^2)$, where $F_{X,Y}\mapsto \VaR_\b(F_X)$ is elicitable and identifiable. This is, e.g., the case for $\F_{(\b)}^0(\R^2)$ defined in \eqref{eq:notation}.
Conveniently, for any $F_{X,Y}\in \F_{(\b)}^0(\R^2)$, the distribution $F_{Y| X\succcurlyeq \VaR_\b(X)}$ defined in \eqref{eq:new F} reduces to $F_{Y|X>\VaR_\b(X)}$. 
Crucially, this conditional distribution can be retrieved as the marginal distribution
\[
F_{Y}^{(w)}(\diff y):= \int_\R F_{X,Y}^{(w)}(\diff x, \diff y), \qquad y\in\R
\]
with weight function $w(x,y) = \one\{x>\VaR_\b(F_X)\}$. Hence, Theorem \ref{thm:reweighting} along with well-known results about the elicitability and identifiability of $\VaR_\a$, the mean \citep{Gne11}, and those for $(\VaR_\a, \ES_\a)$ \citep{FZ16a}
implies that $\CoVaR_{\a|\b}(Y|X)$, $\MES_\b(Y|X)$ and $ (\CoVaR_{\a|\b}(Y|X), \CoES_{\a|\b}(Y|X) )$ are identifiable and elicitable on a suitable restriction of the class
$\F_{(\b),v} := \big\{F_{X,Y} \in \F_{(\b)}^0(\R^2) \colon \VaR_\b(F_X) = v\big\}$, where $v\in\R$.

\section{Negative results about joint elicitability}
\label{app:negative results}

This section shows that the pairs $ (\VaR_\b(X), \CoVaR_{\a|\b}(Y|X) )$, $ (\VaR_\b(X), \MES_\b(Y|X) )$ and the triplet $ (\VaR_\b(X), \CoVaR_{\a|\b}(Y|X), \CoES_{\a|\b}(Y|X) )$ fail to be elicitable, subject to regularity conditions, despite the fact that they are identifiable and thus also satisfy the CxLS property.
This negative result is in stark contrast to the fact that the unconditional counterparts,
$\VaR_{\a}(Y)$, $\E(Y)$ and $ (\VaR_\a(Y), \ES_\a(Y) )$, are all elicitable.

The reason for this intriguing finding lies in integrability conditions: If the functionals were elicitable and admitted strictly consistent scoring functions, which are twice continuously differentiable in expectation for sufficiently smooth distributions, their Hessian would need to be symmetric. Osband's Principle \cite[Theorem 3.2]{FZ16a} asserts that, under some richness conditions on the underlying class of distributions $\F$, for any consistent score $S(\vr,y)$ and any strict identification function $\mV(\vr,y)$, there exists a matrix-valued function $\vh(\vr)$ such that 
$\nabla \bar S(\vr,F) = \vh(\vr) \bar \mV(\vr, F)$ for all $\vr$ and for all $F\in\F$.
Using the strict identification functions from Theorem~\ref{thm:joint id} it is possible to show that such a function $\vh(\vr)$ cannot exist for the three functionals of interest here.

This is an instance of a more general phenomenon: 
\cite{DFZ2020} argue that there is generally a \emph{gap} between the classes of strictly consistent scoring functions and strict identification functions for multivariate functionals, which is due to these integrability conditions. The following example illustrates this.

\begin{example}\label{footnote:double quantile}
Consider the double quantile functional $\mT = (q_\a, q_\b)$, $0<\a<\b<1$ with the standard identification function $\mV((r_1,r_2),y) = (\one\{y\le r_1\} - \a, \one\{y\le r_2\} - \b)'$. Then the matrix-valued function $\vh$ induced by a consistent scoring function is necessarily diagonal with non-negative diagonal entries such that $h_{11}$ is only a function of $r_1$ and $h_{22}$ is only a function of $r_2$. Put differently, any consistent scoring function for the double quantile is necessarily the sum of two consistent scoring functions for the individual quantiles. On the other hand, it is easy to see that for any matrix-valued function $\vh$ such that $\vh(\vr)$ has full rank everywhere, the product $\vh(\vr)\mV(\vr,y)$ is a strict identification function for $\mT$. Clearly, the latter class is considerably larger.
\end{example}

In the case at hand, this gap is extreme: While there exist strict identification functions, the respective classes of strict scoring functions are empty.
We are only aware of one more instance in the literature of such an extreme gap \cite[Subsection 4.3]{FFHR2021}.

As mentioned, we need to impose certain richness conditions along with smoothness conditions on the class of bivariate distributions $\F\subset \F^0(\R^2)$. 
Then, the negative results hold on any class containing such an $\F$. 
We state the particular conditions needed for $\F$ explicitly, noting that the class of bivariate normal distributions (excluding perfect correlation) along with all mixtures always satisfies these conditions.

\begin{assumption}
\label{ass:G}
\begin{enumerate}[\rm (a)]
\item\label{ass:11}
For any $F_{X,Y}\in\F$ with marginals $F_X$ and $F_Y$, the distribution $F_XF_Y$ is in $\F$, having the same marginal distributions, but independent components.
\item\label{ass:12}
$\F$ is convex.
\item\label{ass:13}
All $F_{X,Y}\in\F$ are continuously differentiable and admit a strictly positive continuous density $f_{X,Y}$.
\end{enumerate}
\end{assumption}

In the following, let $\conv(M)$ denote the convex hull of some set $M$, and $\interior(M)$ its interior.

\begin{prop}\label{prop:CoVaR}
Let $\a,\b\in(0,1)$ and denote by $\mT$ the functional mapping $F_{X,Y}$ to $ (\VaR_\b(F_X),$ $\CoVaR_{\a|\b}(F_{X,Y}) )$.
Let $\F\subseteq \F^0(\R^2)$ satisfy Assumption \ref{ass:G} and suppose that 
\begin{enumerate}[\rm (i)]
\item\label{ass:15}
$\A := \mT(\F) = \{\mT(F)\colon F\in\F\} \subseteq \R^2$ has a non-empty and simply connected interior $ \interior(\A)$.
\item\label{ass:16}
For $\mV = \mV^{(\VaR, \CoVaR)}$ defined in \eqref{eq:V_CoVaR} it holds that for all $(v,c)\in\interior(\A)$ there are $F_1,F_2,F_3\in\F$ such that
\(
\vzero\in \interior\big(\conv\big\{\bar \mV\big((v,c), F_1\big), 
\bar \mV\big((v,c), F_2\big),
\bar \mV\big((v,c), F_3\big)\big\}\big).
\)
\item\label{ass:17}
For all $F_{X,Y}\in\F$ and for all $(v,c)\in\interior(\A)$ there is some $\tilde F_{X,Y}\in\F$ such that 
	$F_{Y}(c) = \tilde F_{Y}(c)$, 
	$F_{X}(v) = \tilde F_{X}(v)$,
	$f_{Y}(c) = \tilde f_{Y}(c)$,
	$f_{X}(v) \neq \tilde f_{X}(v)$.
\end{enumerate}
Then, for any $\F\subseteq \F'\subseteq\F^0(\R^2)$, there is no strictly $\F$-consistent scoring function $S\colon \A\times \R^2\to\R$ for $\mT$, such that the expected score $\bar S(\cdot,F)\colon\A\to\R$ is twice partially differentiable with continuous second order derivatives for all $F\in\F$.
\end{prop}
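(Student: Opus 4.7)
My approach is to proceed by contradiction, combining Osband's Principle (Theorem~3.2 of \cite{FZ16a}) with Schwarz's theorem on the symmetry of the Hessian of $\bar S$, and then exploiting conditions (i)-(iii) to deduce that the resulting matrix-valued function must vanish in a direction incompatible with strict consistency. Suppose $S$ is strictly $\F$-consistent for $\mT=(\VaR_\b,\CoVaR_{\a|\b})$ with $\bar S(\cdot,F)\in C^2(\interior(\A))$ for all $F\in\F$. Conditions (i) and (ii) provide precisely the richness required for Osband's Principle, producing $\vh\colon\interior(\A)\to\R^{2\times 2}$ such that $\nabla_\vr \bar S(\vr,F)=\vh(\vr)\,\bar\mV(\vr,F)$ for all $\vr\in\interior(\A)$ and $F\in\F$, with $\mV=\mV^{(\VaR,\CoVaR)}$ from Theorem~\ref{thm:joint id}(i). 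Writing $\bar\mV=(g_1,g_2)'$, Assumption G(c) allows differentiation under the integral to yield $\partial_v g_1=f_X^F(v)$, $\partial_c g_1=0$, $\partial_v g_2=f_X^F(v)(\a-F_{Y|X=v}^F(c))$, and $\partial_c g_2=p^F(v,c):=\int_v^\infty f_{X,Y}^F(x,c)\diff x>0$.

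Imposing $\partial_c\partial_v\bar S=\partial_v\partial_c\bar S$ and evaluating at $\vr=\mT(F)$ (where $g_1=g_2=0$, so the terms involving $\partial_c h_{11}-\partial_v h_{21}$ and $\partial_c h_{12}-\partial_v h_{22}$ drop out) yields the key identity
\begin{equation*}
h_{12}(\vr)\,p^F(v,c)=f_X^F(v)\bigl[h_{21}(\vr)+h_{22}(\vr)\bigl(\a-F_{Y|X=v}^F(c)\bigr)\bigr].\qquad(\ast)
\end{equation*}
I will then show that $(\ast)$ forces $h_{12}(\vr)=h_{21}(\vr)=h_{22}(\vr)=0$. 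Any independent product $F_X\otimes F_Y\in\F$ (guaranteed by condition (i)) with $F_X(v)=\b$ and $F_Y(c)=\a$ satisfies $\mT(F_X\otimes F_Y)=\vr$; for it, $F_{Y|X=v}(c)=\a$ and $p(v,c)=(1-\b)f_Y(c)$, so $(\ast)$ collapses to $h_{12}(\vr)(1-\b)f_Y(c)=h_{21}(\vr)f_X(v)$. Applying condition (iii) to this $F$ produces $\tilde F\in\F$ with $\tilde F_X(v)=\b$, $\tilde F_Y(c)=\a$, $\tilde f_Y(c)=f_Y(c)$, and $\tilde f_X(v)\neq f_X(v)$; by condition (i), the independent product $\tilde F_X\otimes\tilde F_Y$ is again in $\F$ with $\mT(\tilde F_X\otimes\tilde F_Y)=\vr$ and altered $f_X(v)$. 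Subtracting the two resulting instances of $(\ast)$ eliminates the left-hand side, forcing $h_{21}(\vr)=0$, and substituting back yields $h_{12}(\vr)=0$. To handle $h_{22}$, produce via condition (ii) and convexity (Assumption G(b)) a dependent $F\in\F$ with $\mT(F)=\vr$ and $F_{Y|X=v}^F(c)\neq\a$ -- such an $F$ must exist since not all three distributions witnessing condition (ii) can lie in the ``independence hyperplane'' along which $F_{Y|X=v}(c)=\a$. Inserting into $(\ast)$ with $h_{12}=h_{21}=0$ gives $h_{22}(\vr)(\a-F_{Y|X=v}^F(c))f_X^F(v)=0$, hence $h_{22}(\vr)=0$.

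Finally, $h_{21}\equiv h_{22}\equiv 0$ on $\interior(\A)$ implies via the Osband identity that $\partial_c\bar S(\vr,F)=0$ for all $\vr\in\interior(\A)$ and $F\in\F$, so $\bar S(v,c,F)$ is independent of $c$ -- contradicting the strict $\F$-consistency of $S$, since then the minimum in $c$ cannot be uniquely attained at $\mT(F)$. The main obstacle is the richness step: condition (iii) delivers a perturbation $\tilde F$ whose $\mT$-value is a priori different from $\vr$, so one must invoke condition (i) to ``independize'' $\tilde F$ back onto the optimum, and similarly one needs to extract a genuinely dependent admissible distribution from condition (ii) to pin down $h_{22}$. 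Both moves rely on carefully combining conditions (i)-(iii) with the convexity Assumption G(b), and this interplay is where the proof does its real work.
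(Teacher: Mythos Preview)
Your overall architecture---Osband's principle followed by exploiting the Hessian symmetry---matches the paper, and your treatment of $h_{12}$ and $h_{21}$ is correct and essentially identical to theirs (the ``independentize via Assumption~G(a)'' move is precisely what the paper does implicitly). The problem is your argument for $h_{22}\equiv 0$.

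You claim one can find $F\in\F$ with $\mT(F)=\vr$ and $F_{Y\mid X=v}^{F}(c)\neq\a$, and you invoke condition~(ii) for this. But condition~(ii) only supplies $F_1,F_2,F_3$ whose $\bar\mV$-values surround the origin; none of the $F_i$ need satisfy $\mT(F_i)=\vr$. Convexity lets you form a unique mixture $F_{\lambda^*}$ with $\bar\mV((v,c),F_{\lambda^*})=\vzero$, but there is no control over $F^{F_{\lambda^*}}_{Y\mid X=v}(c)$. Moreover, the quantity $F_{Y\mid X=v}(c)=\int_{-\infty}^{c} f(v,y)\,\diff y\,\big/\, f_X(v)$ is a \emph{ratio} of linear functionals of $F$, so your ``independence hyperplane'' is not a hyperplane at all, and the geometric intuition that the three $F_i$'s cannot all lie on it does not yield what you need. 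Nothing in conditions (i)--(iii) or Assumption~G guarantees a dependent distribution at a prescribed $\vr$ with $F_{Y\mid X=v}(c)\neq\a$.

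The paper sidesteps this entirely: instead of staying at the optimum, it evaluates the Hessian symmetry at a \emph{general} point $(v,c)\in\interior(\A)$ with $F_X(v)\neq\b$ and $F_Y(c)\neq\a$, still restricting to \emph{independent} $F_{X,Y}\in\F\cap\F^0(\R)^{\otimes 2}$. Since now $\bar V_1,\bar V_2\neq 0$, the derivative terms $\partial_c h_{11}$ and $\partial_v h_{22}$ survive, and the symmetry reads
\[
\partial_c h_{11}(v,c)\,(F_X(v)-\b)=(F_Y(c)-\a)\big[\partial_v h_{22}(v,c)(1-F_X(v))-h_{22}(v,c)\,f_X(v)\big].
\]
Condition~(iii) then lets one vary $f_X(v)$ while holding $F_X(v),F_Y(c),f_Y(c)$ fixed; only the term $h_{22}(v,c)f_X(v)$ moves, forcing $h_{22}\equiv 0$ (and then $\partial_c h_{11}\equiv 0$). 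No dependent distribution is ever needed. Your proof can be repaired by adopting this off-optimum evaluation in place of your $(\ast)$-at-the-optimum argument for $h_{22}$.
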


It is worth emphasising that it is the \emph{expected} score in Proposition \ref{prop:CoVaR} which is assumed to be twice differentiable with continuous second order derivatives, and not necessarily the score itself.
E.g., while the usual score for $\VaR_\a$, $S(x,y) = (\one\{y\le x\} - \a)(x-y)$ clearly fails to even be differentiable everywhere, the expected score $\bar S(x,F)$ is twice continuously differentiable if $F$ is continuously differentiable, with $\bar S'(x,F) = F(x) - \a$ and $\bar S''(x,F) = F'(x) = f(x)$.
Therefore, requiring this smoothness condition on the expected score seems to be reasonable on $\F$ of the form considered in the proposition.
The other two negative results are of a similar form.

\begin{prop}\label{prop:CoES}
Let $\a,\b\in(0,1)$ and denote by $\mT$ the functional mapping $F_{X,Y}$ to $ (\VaR_\b(F_X),$ $\CoVaR_{\a|\b}(F_{X,Y}), \CoES_{\a|\b}(F_{X,Y}) )$.
Let $\F\subseteq \F^0(\R^2)$ satisfy Assumption \ref{ass:G} and suppose that 
\begin{enumerate}[\rm (i)]
\item\label{ass:35}
$\A := \mT(\F) = \{\mT(F)\colon F\in\F\} \subseteq \R^3$ has a non-empty and simply connected interior $ \interior(\A)$.
\item\label{ass:36}
For $\mV = \mV^{(\VaR, \CoVaR, \CoES)}$ defined in \eqref{eq:V_CoES} it holds that for all $(v,c,\mu)\in\interior(\A)$ there are $F_1,\ldots,F_4\in\F$ such that 
\(
\vzero\in \interior\big(\conv\big\{\bar \mV\big((v,c,\mu), F_1\big), \ldots, 
\bar \mV\big((v,c,\mu), F_4\big)\big\}\big).
\)
\item\label{ass:37}
For all $F_{X,Y}\in\F$ and for all $(v,c,\mu)\in\interior(\A)$ there are some $\tilde F_{X,Y}, \hat F_{X,Y}\in\F$ such that 
	$F_{Y}(c) = \tilde F_{Y}(c) = \hat F_{Y}(c)$, 
	$ \int_c^\infty y f_Y(y)\diff y= \int_c^\infty y \tilde f_Y(y)\diff y =  \int_c^\infty y \hat f_Y(y)\diff y$, 
	$F_{X}(v) = \tilde F_{X}(v) = \hat F_{X}(v)$,
	$f_{Y}(c) = \tilde f_{Y}(c) \neq \hat f_{Y}(c)$,
	$f_{X}(v) \neq \tilde f_{X}(v) = \hat f_{X}(v)$.
\end{enumerate}
Then, for any $\F\subseteq \F'\subseteq\F^0(\R^2)$, there is no strictly $\F'$-consistent scoring function $S\colon \A\times \R^2\to\R$ for $\mT$, such that the expected score $\bar S(\cdot,F)\colon\A\to\R$ is twice partially differentiable with continuous second order derivatives for all $F\in\F$.
\end{prop}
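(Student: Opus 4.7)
The plan is to run the argument of Proposition~\ref{prop:CoVaR} and adapt it to the three-dimensional target. Assuming, for contradiction, that a strictly $\F$-consistent score $S$ with $\bar S(\cdot,F)\in C^2$ exists, Assumption~\ref{ass:G} together with \ref{ass:35} and \ref{ass:36} supplies the richness needed to apply Osband's Principle \cite[Theorem~3.2]{FZ16a} to $\mV=\mV^{(\VaR,\CoVaR,\CoES)}$. Hence there is a continuous map $\vh\colon\interior(\A)\to\R^{3\times 3}$ satisfying $\nabla_{\vr}\bar S(\vr,F)=\vh(\vr)\bar\mV(\vr,F)$ for all $\vr\in\interior(\A)$ and $F\in\F$. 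Differentiating once more and equating mixed partials yields the compatibility identity
\[
\sum_{k=1}^{3}[\partial_{r_j}h_{ik}-\partial_{r_i}h_{jk}]\bar V_k+\sum_{k=1}^{3}[h_{ik}\partial_{r_j}\bar V_k-h_{jk}\partial_{r_i}\bar V_k]=0,
\]
valid at every $(\vr,F)\in\interior(\A)\times\F$ and every $i,j\in\{1,2,3\}$.

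First I would restrict to $\vr=\mT(F)$, which kills the first sum, and to $F=F_XF_Y$ with independent marginals (available via Assumption~\ref{ass:G}(a)). A direct calculation---in which the defining identities $F_X(v)=\b$, $F_Y(c)=\a$ and $e=(1-\a)^{-1}\int_c^\infty yf_Y(y)\diff y$ at the optimum annihilate the off-diagonal entries of the Jacobian of $\bar\mV$, most notably the $(3,1)$ entry---shows that $D_{\vr}\bar\mV(\vr,F)=\diag(f_X(v),(1-\b)f_Y(c),1-\b)$. The residual symmetry constraint then reads, for $i\neq j$, $h_{ij}(\vr)(D_{\vr}\bar\mV)_{jj}=h_{ji}(\vr)(D_{\vr}\bar\mV)_{ii}$. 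Using condition~\ref{ass:37} to perturb $f_X(v)$ while holding $f_Y(c)$ fixed, and vice versa, forces $h_{ij}(\vr)=0$ for all $i\neq j$ at any $\vr$ that is the $\mT$-value of some distribution; by~\ref{ass:35}, $\vh$ is therefore diagonal throughout $\interior(\A)$.

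Next I would exploit the compatibility identity away from the optimum with $\vh$ now diagonal. Specialising again to independent marginals and using $\partial_{e}\bar V_1=0$ together with $\partial_v\bar V_3=-f_X(v)E(F)$ for $E(F):=e-(1-\a)^{-1}\int_c^\infty yf_Y(y)\diff y-c(F_Y(c)-\a)/(1-\a)$, the $(i,j)=(1,3)$ case collapses to
\[
\partial_{r_3}h_{11}(\vr)\bar V_1(\vr,F)+E(F)\big[h_{33}(\vr)f_X(v)-(1-F_X(v))\partial_{r_1}h_{33}(\vr)\big]=0.
\]
Fixing $\vr$ and choosing $F$ with $F_X(v)=\b$ erases the first term; \ref{ass:37} then furnishes $\tilde F$ with the same $F_Y$ but $\tilde f_X(v)\neq f_X(v)$, so $E(\tilde F)=E(F)$. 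Subtracting the two resulting equations yields $E(F)h_{33}(\vr)[f_X(v)-\tilde f_X(v)]=0$, and since Assumption~\ref{ass:G}(c) permits choosing $F_Y$ such that $E(F)\neq 0$, we conclude $h_{33}(\vr)\equiv 0$ on $\interior(\A)$. Then $\partial_{r_3}\bar S\equiv 0$, so $\bar S(\cdot,F)$ is constant in the CoES coordinate and strict consistency fails---a contradiction. The main obstacle is the second paragraph: verifying that, under independence, the optimum identities exactly zero out the off-diagonal Jacobian entries, since it is precisely this diagonal structure that allows the two-parameter perturbation from~\ref{ass:37} to force $\vh$ to be diagonal and thereby drive the rest of the argument.
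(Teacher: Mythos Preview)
Your argument is correct and follows the same essential strategy as the paper: apply Osband's Principle to obtain $\nabla\bar S=\vh\,\bar\mV$, evaluate the Hessian-symmetry constraints at $\vr=\mT(F)$ for product measures $F=F_XF_Y$ (available via Assumption~\ref{ass:G}\eqref{ass:11}) to force the off-diagonal entries of $\vh$ to vanish---using condition~\eqref{ass:37} to perturb $f_X(v)$ and $f_Y(c)$ independently---and then exploit a symmetry constraint away from the optimum to annihilate a diagonal entry of $\vh$.

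The one notable difference is the endgame. The paper proceeds sequentially: it first uses the $(v,c)$-symmetry away from the optimum to deduce $h_{22}\equiv 0$, then uses another symmetry to obtain $h_{33}\equiv 0$, and finally shows $\partial_e h_{11}\equiv 0$, concluding that only $h_{11}=g(v)$ survives so that $\bar S$ is constant in both $c$ and $e$. You instead go directly to the $(v,e)$-symmetry and obtain $h_{33}\equiv 0$ without first establishing $h_{22}=0$; this is legitimate because $h_{22}$ simply does not appear in the $(1,3)$-constraint once $\vh$ is diagonal. Since constancy of $\bar S$ in the CoES coordinate $e$ alone already contradicts strict consistency, your shortcut is valid and slightly more economical than the paper's route.
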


\begin{prop}\label{prop:MES}
Let $\a,\b\in(0,1)$ and denote by $\mT$ the functional mapping $F_{X,Y}$ to $ (\VaR_\b(F_X),$ $\MES_{\b}(F_{X,Y}) )$.
Let $\F\subseteq \F^0(\R^2)$ satisfy Assumption \ref{ass:G} and suppose that 
\begin{enumerate}[\rm (i)]
\item\label{ass:25}
$\A := \mT(\F) = \{\mT(F)\colon F\in\F\} \subseteq \R^2$ has a non-empty and simply connected interior $ \interior(\A)$.
\item\label{ass:26}
For $\mV = \mV^{(\VaR, \MES)}$ defined in \eqref{eq:V_MES} it holds that for all $(v,\mu)\in\interior(\A)$ there are $F_1,F_2,F_3\in\F$ such that $
\vzero\in \interior\big(\conv\big\{\bar \mV\big((v,\mu), F_1\big), 
\bar \mV\big((v,\mu), F_2\big),
\bar \mV\big((v,\mu), F_3\big)\big\}\big).$
\item\label{ass:27}
For all $F_{X,Y}\in\F$ and for all $(v,\mu)\in\interior(\A)$ there is some $\tilde F_{X,Y}\in\F$ such that 
	$\E(F_Y)= \E(\tilde F_{Y})$, 
	$F_{X}(v) = \tilde F_{X}(v)$,
	$f_{X}(v) \neq \tilde f_{X}(v)$.
\end{enumerate}
Then, for any $\F\subseteq \F'\subseteq\F^0(\R^2)$, there is no strictly $\F$-consistent scoring function $S\colon \A\times \R^2\to\R$ for $\mT$, such that the expected score $\bar S(\cdot,F)\colon\A\to\R$ is twice partially differentiable with continuous second order derivatives for all $F\in\F$.
\end{prop}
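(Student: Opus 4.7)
The strategy mirrors the proofs of Propositions~\ref{prop:CoVaR} and~\ref{prop:CoES}: I argue by contradiction, convert the putative consistent score into a gradient identity via Osband's principle, and then exploit Schwarz symmetry of the Hessian of the expected score to force the relevant entries of the Osband matrix to vanish.

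Assume for contradiction that a strictly $\F$-consistent score $S\colon\A\times\R^2\to\R$ for $\mT=(\VaR_\b(F_X),\MES_\b(F_{X,Y}))$ exists with $\bar S(\cdot,F)\in C^2(\interior(\A))$ for every $F\in\F$. Under Assumption~\ref{ass:G} together with conditions~\ref{ass:25} and \ref{ass:26}, the hypotheses of Osband's principle \citep[Theorem~3.2]{FZ16a} are met for the strict identification function $\mV=\mV^{(\VaR,\MES)}$ from \eqref{eq:V_MES}. Setting $g_F(v):=\int_v^\infty\!\!\int y\,dF_{X,Y}(x,y)$, so that $\bar\mV((v,\mu),F)=(F_X(v)-\b,\ (1-F_X(v))\mu-g_F(v))^\top$, Osband's principle yields a matrix-valued function $\vh=(h_{ij})\colon\interior(\A)\to\R^{2\times 2}$ with $\nabla\bar S((v,\mu),F)=\vh(v,\mu)\,\bar\mV((v,\mu),F)$ for all $F\in\F$ and $(v,\mu)\in\interior(\A)$. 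Writing out $\partial_v\bar S$ and $\partial_\mu\bar S$, imposing $\partial_\mu\partial_v\bar S=\partial_v\partial_\mu\bar S$, and sorting terms by their dependence on $F$ (through $F_X(v)$, $f_X(v)$, $g_F(v)$, and $g_F'(v)$) then produces the symmetry identity that drives the rest of the argument.

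The decisive reduction is to evaluate this identity on the subclass of product laws $F_X F_Y$, which lie in $\F$ by Assumption~\ref{ass:G}\ref{ass:11}. With $\mu_Y:=\E_{F_Y}[Y]$, independence gives $g_F(v)=(1-F_X(v))\mu_Y$ and $g_F'(v)=-f_X(v)\mu_Y$, so the identity collapses into a polynomial in $F_X(v)$, $f_X(v)$, and $\mu-\mu_Y$. Condition~\ref{ass:27} then supplies, for any $F\in\F$ and any $(v,\mu)\in\interior(\A)$, a partner $\tilde F\in\F$ with matched $F_X(v)$ and matched $\mu_Y$ but $f_X(v)\neq\tilde f_X(v)$. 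Substituting the independent versions $F_X F_Y$ and $\tilde F_X\tilde F_Y$ (both in $\F$) into the symmetry identity and subtracting cancels every term that depends only on $F_X(v)$ or on $\mu_Y$, leaving
\[
(f_X(v)-\tilde f_X(v))\bigl[h_{22}(v,\mu)(\mu-\mu_Y)-h_{21}(v,\mu)\bigr]=0,
\]
hence $h_{21}(v,\mu)=h_{22}(v,\mu)(\mu-\mu_Y)$.

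To close the argument, observe that $\vh$ is independent of $F$ while $\mu_Y$ takes distinct values as $F$ ranges over $\F$ (e.g., across bivariate normals with different $Y$-means, all included in $\F$ by hypothesis). This forces $h_{22}\equiv 0$ and hence $h_{21}\equiv 0$ on $\interior(\A)$. Osband's principle then gives $\partial_\mu\bar S((v,\mu),F)\equiv 0$, so $\bar S(v,\cdot,F)$ is constant in $\mu$ on $\interior(\A)$; comparing $\bar S(\mT(F),F)$ with $\bar S((\VaR_\b(F_X),\mu),F)$ for $\mu\neq\MES_\b(F_{X,Y})$ contradicts the strict $\F$-consistency of $S$. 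The main obstacle will be the careful bookkeeping of the Hessian-symmetry identity and verifying that the product-law reduction isolates the $f_X(v)$-dependence cleanly; the remaining ingredients are either borrowed directly from \cite{FZ16a} or parallel the CoVaR and CoES arguments of Propositions~\ref{prop:CoVaR} and~\ref{prop:CoES} verbatim.
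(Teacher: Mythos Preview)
Your proof is correct and follows the paper's approach: Osband's principle, restriction to product laws $F_XF_Y\in\F$, Schwarz symmetry of the Hessian, and condition~(iii) to force the second row of $\vh$ to vanish, whence $\partial_\mu\bar S\equiv0$. You merely reorganise the paper's two-pass argument---first evaluating at $(v^*,\mu^*)=\mT(F_XF_Y)$ to obtain $h_{12}=h_{21}=0$, then at general points to obtain $h_{22}=0$---into a single subtraction yielding $h_{21}(v,\mu)=h_{22}(v,\mu)(\mu-\mu_Y)$ followed by varying $\mu_Y$.

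One small correction: your justification that $\mu_Y$ varies because ``bivariate normals with different $Y$-means [are] all included in $\F$ by hypothesis'' misreads the proposition, whose conditions are abstract and do not stipulate this (the paper only remarks that the bivariate-normal class \emph{satisfies} the conditions). You need a different reason why $\E(F_Y)$ is non-constant on $\F$; the paper is equally informal at the analogous point, invoking ``surjectivity of $\mT$ onto $\A$'' while working exclusively with product laws, so your argument matches the paper's level of rigor even if the stated justification should be amended.
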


\section{Proofs}\label{app:Proofs}

We first prove Proposition~\ref{cor:negative result} via Proposition~\ref{prop:CxLS}. Specifically, we show that $\CoVaR_{\a|\b}$, $\CoES_{\a|\b}$ and $\MES_{\b}$ violate the CxLS property on any class $\F\subseteq \F^0(\R^2)$ containing all bivariate normal distributions along with their mixtures. Recall from Section~\ref{app:cond id and el} that we take the systemic risk measures to be defined in terms of $F_{Y\mid X \succcurlyeq \VaR_\b(X)}$ from \eqref{eq:new F} here.

\begin{rem}\label{rem:CxLS1}
Before proving Proposition~\ref{cor:negative result}, we outline the core rationale why the systemic risk measures of interest violate the CxLS property on a class $\F$ under very mild conditions.
Suppose $\F$ contains two bivariate distributions $F^0, F^1$ of the form 
$F^0(x,y) = C\big(F_1^0(x), F_2(y)\big)$, $F^1(x,y) = C\big(F_1^1(x), F_2(y)\big)$.
Here, $C\colon [0,1]^2\to[0,1]$ is a copula and $F_1^0, F_1^1, F_2$ are univariate marginal distributions.
Let $(X^0, Y^0)\sim F^0$ and $(X^1,Y^1)\sim F^1$. 
If $F_1^0$ and $F_1^1$ are continuous, then
\[
F_{Y^0\mid X^0 \succcurlyeq \VaR_\b(X^0)}(y)
= F_{Y^1\mid X^1 \succcurlyeq \VaR_\b(X^1)}(y)
= \Big[F_2(y) - C\big(\b, F_2(y)\big)\Big]/(1-\b).
\]
Hence, clearly all functionals of $F_{Y^0\mid X^0 \succcurlyeq \VaR_\b(X^0)}$ and $F_{Y^1\mid X^1 \succcurlyeq \VaR_\b(X^1)}$ coincide.
Next, consider the convex combination $F^\lambda(x,y):= (1-\lambda)F^0(x,y) + \lambda F^1(x,y)$ and let $(X^\lambda, Y^\lambda)\sim F^\lambda$ where $\lambda\in (0,1)$.
To simplify the argument, suppose that $F_1^0$ and $F_1^1$ are strictly increasing (otherwise, the argument can be adapted).
If $\VaR_\b(X^0)<\VaR_\b(X^1)$, then $\VaR_\b(X^0)<\VaR_\b(X^\lambda)<\VaR_\b(X^1)$. Hence, $\b^0 := F_1^0( \VaR_\b(X^\lambda)) > \b >F_1^1( \VaR_\b(X^\lambda)) =: \b^1$.
We obtain
\[
F_{Y^\lambda\mid X^\lambda \succcurlyeq \VaR_\b(X^\lambda)}(y)
= (1-\lambda)\frac{F_2(y) -  C\big(\beta^0, F_2(y)\big)}{1-\b^0}
+ \lambda\frac{F_2(y) -  C\big(\beta^1, F_2(y)\big)}{1-\b^1}.
\]
If $C$ is not the independence copula, then different choices of $u_1\in(0,1)$ generally result in different mappings $[0,1]\ni u_2 \mapsto \big[u_2 - C(u_1,u_2)\big]/(1-u_1)$. Therefore, $F_{Y^\lambda\mid X^\lambda \succcurlyeq \VaR_\b(X^\lambda)}$ will in general be different from $F_{Y^0\mid X^0 \succcurlyeq \VaR_\b(X^0)}$.
Hence, a violation of the CxLS property can be shown upon choosing appropriate $F_1^0$, $F_1^1$ and $\lambda \in(0,1)$.
\end{rem}

\begin{proof}[{\textbf{Proof of Proposition~\ref{cor:negative result}:}}]
By Proposition~\ref{prop:CxLS}, it suffices to show that $\CoVaR_{\a|\b}$, $\CoES_{\a|\b}$ and $\MES_{\b}$ violate the CxLS property on any class $\F\subseteq \F^0(\R^2)$ containing all bivariate normal distributions along with their mixtures.

To task this, consider $X, Y\sim \mathcal N(0,1)$ which are jointly Gaussian and have a correlation of $\rho\neq0$.
Let $(X^0, Y^0) = (X-1,Y)$ and $(X^1,Y^1)= (X+1,Y)$ with distribution functions $F^0$ and $F^1$. Moreover, let $F^{0.5} = 0.5 F^0 + 0.5 F^1$ and let $(X^{0.5}, Y^{0.5})$ have distribution $F^{0.5}$.
Clearly $\VaR_\b(X^0) = \Phi^{-1}(\b)-1$ and $\VaR_\b(X^1) = \Phi^{-1}(\b)+1$, where $\Phi$ is the distribution function of a standard normal and $\phi$ is its density function. 
For $\VaR_\b(X^{0.5})$, there is generally no closed form solution.
We therefore illustrate the result numerically for $\b = 0.95$. Here, $\VaR_{0.95}(X^{0.5}) \approx 2.28$, while $\Phi^{-1}(0.95)\approx 1.64$.
It holds that 
\begin{align*}
F_{Y^0\mid X^0 \succcurlyeq \VaR_\b(X^0)}(y)
&= F_{Y^1\mid X^1 \succcurlyeq \VaR_\b(X^1)}(y)
=\p(Y\le y\mid X>\Phi^{-1}(\b)\big),\\
F_{Y^{0.5}\mid X^{0.5} \succcurlyeq \VaR_\b(X^{0.5})}(y)
&= 0.5\p\big\{Y\le y\mid X-1>\VaR_\b(X^{0.5})\big\} \\
& \hspace{0.5cm} + 0.5\p\big\{Y\le y\mid X+1>\VaR_\b(X^{0.5})\big\}.
\end{align*}
We exploit the fact that $\E[Y\,|\, X > z] = \rho \phi(z)/[1-\Phi(z)]$. Therefore,
$\MES_\b(Y^0|X^0) = \MES_\b(Y^1|X^1) = \E[Y\,|\, X > \Phi^{-1}(\b)] = \rho \phi\big(\Phi^{-1}(\b)\big)/[1-\b]$. 
On the other hand,
\[
\MES_\b(Y^{0.5}|X^{0.5}) = 0.5 \rho \left[  \frac{\phi\big(\VaR_\b(X^{0.5}) +1\big)}{1-\Phi\big(\VaR_\b(X^{0.5}) +1\big)} + \frac{\phi \big(\VaR_\b(X^{0.5}) -1\big)}{1-\Phi\big(\VaR_\b(X^{0.5}) -1\big)} \right].
\]
These values do not coincide for $\b\in(0,1)$ and $\rho\neq 0$. E.g., $\MES_{0.95}(Y^0|X^0) = 2.06\rho$ and $\MES_{0.95}(Y^{0.5}|X^{0.5}) = 2.65\rho$.

For $\CoVaR_{\a|\b}$, it is sufficient to check that $\VaR_\a(F_{Y^{0.5}\mid X^{0.5} \succcurlyeq \VaR_\b(X^{0.5})})$ does not coincide with $\VaR_\a(F_{Y^0\mid X^0 \succcurlyeq \VaR_\b(X^0)})$.
This can only be solved numerically and we illustrate again the case for $\a=\b=0.95$, setting $\rho=0.8$, emphasising that other choices for $\a,\b\in(0,1)$ and $\rho\neq0$ also hold similarly.
We get $\VaR_{0.95}(F_{Y^0\mid X^0 \succcurlyeq \VaR_{0.95}(X^0)}) \approx 2.77$ and 
$\VaR_{0.95}(F_{Y^{0.5}\mid X^{0.5} \succcurlyeq \VaR_{0.95}(X^{0.5})}) \approx 3.20$.

The results for $\CoES_{\a|\b}$ follow from the fact that $\ES_\a$ itself violates the CxLS-property. We mention that $ (\CoVaR_{\a|\b}, \CoES_{\a|\b} )$ also violates the CxLS-property, which can be shown similarly as for $\CoVaR_{\a|\b}$.
\end{proof}

Next, we prove Theorem~\ref{thm:DM}. The technical assumptions of Theorem~\ref{thm:DM} are only required to ensure a multivariate central limit theorem for the score differences $\{\vd_t\}$
\begin{equation}\label{eq:mvclt}
	\mOmega_n^{-1/2}\sqrt{n}\overline{\vd}_n\overset{d}{\longrightarrow}N(\vzero,\mI_{2\times2}),\qquad\text{as }n\to\infty,
\end{equation}
and consistent long-run variance estimation
\begin{equation}\label{eq:clrve}
	\|\widehat{\mOmega}_n-\mOmega_n\|\overset{\p}{\longrightarrow}\vzero,\qquad\text{as }n\to\infty.
\end{equation}
Thus, Assumption~\ref{ass:DM} may be replaced by any other conditions ensuring \eqref{eq:mvclt} and \eqref{eq:clrve}. Some of the most general conditions can be found in \citet[Section~24.4]{Dav94} for \eqref{eq:mvclt} and \citet[Theorem~2.1]{DD00} for \eqref{eq:clrve}. While our conditions are slightly less general, they are much simpler.

\begin{assumption}\label{ass:DM}
\label{ass:B}
\begin{enumerate}

	\item[B1:] There is some $\Delta_{\vd}<\infty$ such that $\E|\vd_t^\prime\vd_t|^{r}\leq\Delta_{\vd}$ for all $t\ge1$, where $r>2$.
	
	\item[B2:] $\{\vd_t\}$ is $\alpha$-mixing of size $-2r/(r-2)$ or $\phi$-mixing of size $-r/(r-1)$.
	
		\item[B3:] The sequence of integers $m_n$ satisfies $m_n\rightarrow\infty$ and $m_n=o(n^{1/4})$, as $n\to\infty$.
		
	\item[B4:] There is some $\Delta_{w}<\infty$ such that $|w_{n,h}|\leq\Delta_{w}$ for all $n\in\mathbb{N}$ and $h\in\{1,\ldots,m_n\}$, and $w_{n,h}\rightarrow1$, as $n\to\infty$, for all $h=1,\ldots,m_n$.
\end{enumerate}
\end{assumption}

\begin{proof}[{\textbf{Proof of Theorem~\ref{thm:DM}:}}]
The proof is similar to that of Theorem~4 in \citet{GW06}, so we only sketch it here. A Cram\'{e}r--Wold device and Theorem~5.20 in \citet{Whi01} ensure that 
$\mOmega_n^{-1/2}\sqrt{n}\overline{\vd}_n\overset{d}{\longrightarrow}N(\vzero,\mI_{2\times2})$, as $n\to\infty$,
under B1 and B2. The fact that $\|\widehat{\mOmega}_n-\mOmega_n\|\overset{\p}{\longrightarrow}\vzero$, as $n\to\infty$, follows from Theorem~6.20 in \citet{Whi01} under B1--B4.
\end{proof}

\begin{figure}
	\centering
		\includegraphics[width=\textwidth]{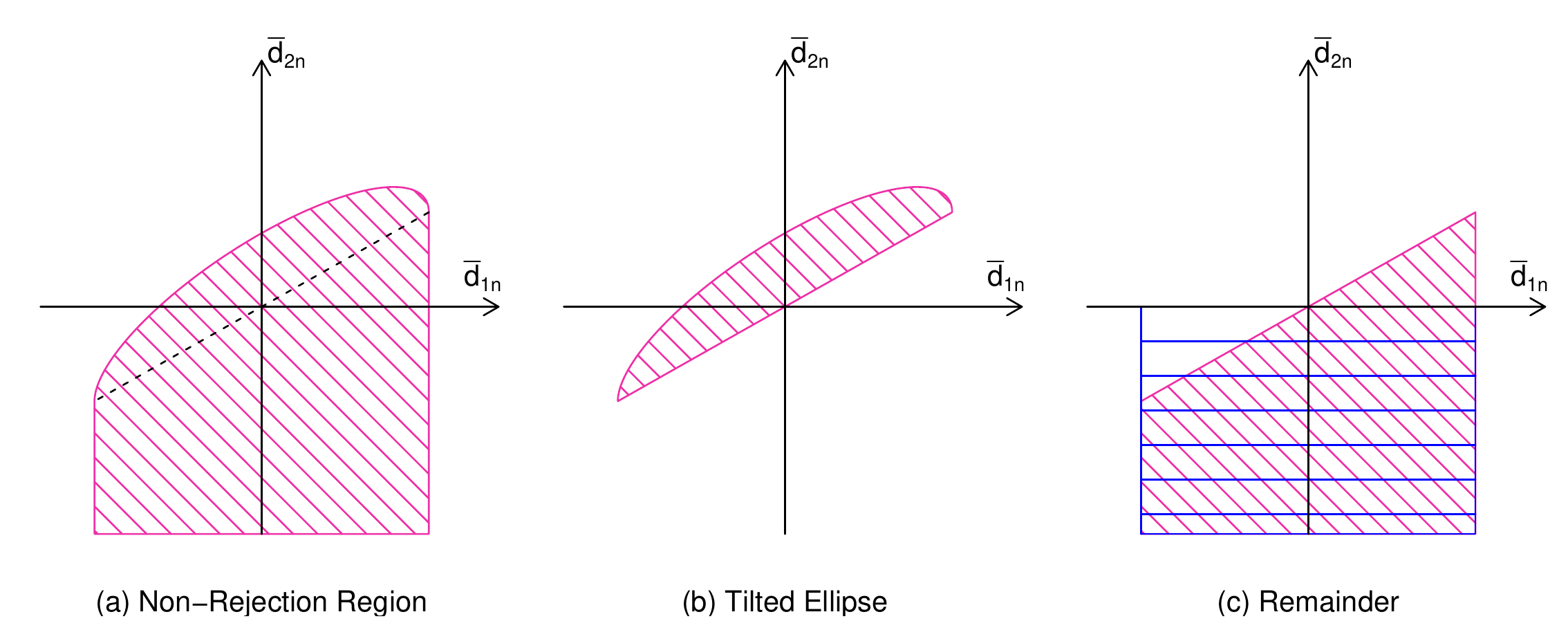}
	\caption{Decomposition of Non-Rejection Region.}
	\label{fig:Figure5}
\end{figure}

\begin{proof}[{\textbf{Proof of Proposition~\ref{prop:OS}:}}]
By construction of the non-rejection region (displayed in Figure~\ref{fig:Figure5}~(a)), we have
\begin{align}
&\sup_{c\leq0}\lim_{n\to\infty}\p\Big\{\mathcal{T}_n^{\OS}> \chi_{2,1-\widetilde{\nu}}^2\ \Big\vert\ H_0^{(c)}\ \text{holds}\Big\} 
=\lim_{n\to\infty}\p\Big\{\mathcal{T}_n^{\OS}> \chi_{2,1-\widetilde{\nu}}^2\ \Big\vert\ H_0^{(0)}\ \text{holds}\Big\}\notag\\
	&=\lim_{n\to\infty}\p\Big\{\mathcal{T}_n^{\OS}> \chi_{2,1-\widetilde{\nu}}^2\ \Big\vert\ \E[\overline{d}_{1n}]=0\ \text{and}\ \E[\overline{d}_{2n}]=0 \Big\}
	= 1-\p\Big\{\mathcal{T}^{\OS} \le \chi_{2,1-\widetilde{\nu}}^2 \Big\},
	\label{eq:as prob bd}
\end{align}
where, for $\mOmega=( (\sigma_{11}, \sigma_{12})^\prime, (\sigma_{12}, \sigma_{22})^\prime)$,
\begin{equation*}
	\mathcal{T}^{\OS}=\Big(Z_1, \max\big\{Z_2, ({\sigma}_{12}/{\sigma}_{11})Z_1\big\}\Big){\mOmega}^{-1}
	\begin{pmatrix}
	Z_1\\ 
	\max\big\{Z_2, ({\sigma}_{12}/{\sigma}_{11})Z_1\big\}
	\end{pmatrix},
\end{equation*}
with $(Z_1,Z_2)'\sim N(\vzero,\mOmega)$, which is the distributional limit of $\sqrt{n}(\overline{d}_{1n}, \overline{d}_{2n})^\prime$.
By construction, $\p\{\mathcal{T}^{\OS} \le \chi_{2,1-\widetilde{\nu}}^2 \}$ corresponds to the probability that $(Z_1,Z_2)'$
falls into the non-rejection region sketched in Figure~\ref{fig:Figure5}~(a). 
This region admits a disjoint decomposition into the upper half of the tilted ellipse (displayed in panel~(b)), corresponding to $Z_2 > (\sigma_{12}/\sigma_{11})Z_1$ and the remainder (displayed in panel~(c)), corresponding to $Z_2 \le (\sigma_{12}/\sigma_{11})Z_1$. 
Since the full ellipse has probability content $1-\widetilde{\nu}$, we get 
\[
	\p\Big\{(Z_1,Z_2)^\prime\ \text{falls into upper half of tilted ellipse}\Big\}=\frac{1}{2}(1-\widetilde{\nu}).
\]
By symmetry of the normal distribution, the probability content of the remainder equals that of the area shaded in blue in panel~(c). Thus,
\begin{align*}
	\p\Big\{(Z_1,Z_2)^\prime\ \text{falls into remainder}\Big\}&=\p\Big\{|Z_1|\leq e_{1,\max},\ Z_2\leq0\Big\}
	=\frac{1}{2}\p\Big\{|Z_1|\leq e_{1,\max}\Big\},
\end{align*}
where $e_{1,\max}$ is the maximal extension of the tilted ellipse in the horizontal direction. For the ellipse
$\{\vz\in\R^2\colon \vz'{\mOmega}^{-1}\vz \le \chi_{2,1-\widetilde{\nu}}^2\}$,
this is $e_{1,\max}=\sqrt{{\sigma}_{11} \chi_{2,1-\widetilde{\nu}}^2}$. Thus,
\begin{align*}
	\p\Big\{(Z_1,Z_2)^\prime\ \text{falls into remainder}\Big\} 
	&= \frac{1}{2}\p\Big\{{\sigma}_{11}^{-1/2}|Z_1|\leq \sqrt{\chi_{2,1-\widetilde{\nu}}^2}\Big\}
	= \frac{1}{2}F_{\chi_1^{2}}(\chi_{2,1-\widetilde{\nu}}^2).
\end{align*}
We end up with
\(
\p\big\{\mathcal{T}^{\OS} \le \chi_{2,1-\widetilde{\nu}}^2 \big\}
= \frac{1}{2}\big(1-\widetilde{\nu} + F_{\chi_1^{2}}(\chi_{2,1-\widetilde{\nu}}^2)\big).
\)
Combining this with \eqref{eq:as prob bd}, the conclusion follows.
\end{proof}

\begin{proof}[{\textbf{Proof of Proposition~\ref{prop:CoVaR}:}}]
The proof exploits Osband's Principle in \citet[Theorem 3.2]{FZ16a}, which originates from the seminal work of \cite{Osb85}. Suppose there is an $\F$-consistent scoring function $S\colon \A\times \R^2\to\R$, whose expectation $\bar{S}:\A\to\R$ is twice partially differentiable with continuous second-order derivatives for all $F\in\F$. Using Assumptions \ref{ass:G} and conditions \eqref{ass:15} and \eqref{ass:16} of Proposition~\ref{prop:CoVaR}, \citet[Theorem 3.2]{FZ16a} shows that there is a  differentiable matrix-valued function $\vh\colon\interior(\A)\to\R^{2\times2}$ with components $h_{ij}$, $i,j=1,2$, such that for all $(v,c)\in\interior(\A)$ and for all $F_{X,Y}\in\F$
\be{eq:Osband}
\nabla \bar S\big((v,c),F_{X,Y}\big) = \vh(v,c) \bar \mV\big((v,c),F_{X,Y}\big).
\ee
The symmetry of the Hessian $\nabla^2\bar S\big((v,c),F_{X,Y}\big)$ \cite[Corollary 3.3]{FZ16a} then yields that 
\be{eq:symmetry}
\partial_c\partial_v \bar S\big((v,c),F_{X,Y}\big) = \partial_v\partial_c \bar S\big((v,c),F_{X,Y}\big)
\ee 
for all $(v,c)\in\interior(\A)$ and for all $F_{X,Y}\in\F$. 
Denote by $\F^0(\R)^{\otimes 2}$ the family of independent bivariate distributions. Then, straightforward calculations yield that for $F_{X,Y}\in\F\cap \F^0(\R)^{\otimes 2}$, it holds that
\begin{align*}
\bar V_1\big((v,c),F_{X,Y}\big) &= F_X(v) - \b,\\
\bar V_2\big((v,c),F_{X,Y}\big) &= (F_{Y}(c) - \a)(1-F_X(v)),\\
\partial_c \bar V_1\big((v,c),F_{X,Y}\big) &= 0, \\
\partial_v \bar V_1\big((v,c),F_{X,Y}\big) &= f_X(v),\\
\partial_c \bar V_2\big((v,c),F_{X,Y}\big) &= f_{Y}(c)(1-F_X(v)),\\
\partial_v \bar V_2\big((v,c),F_{X,Y}\big) &= -f_X(v)(F_{Y}(c) - \a).
\end{align*}

Let $(v^*,c^*)= \mT(F_{X,Y})$.
Using \eqref{eq:Osband}, the chain rule, and the fact that the expected identification function vanishes at the true report, we get
\begin{equation*}
 h_{12}(v^*,c^*)\partial_c\bar V_2\big((v^*,c^*),F_{X,Y}\big)
 =\partial_c\partial_v \bar S\big((v^*,c^*),F_{X,Y}\big).
\end{equation*}
Similarly,
\begin{equation*}
\partial_v\partial_c \bar S\big((v^*,c^*),F_{X,Y}\big)
= h_{21}(v^*,c^*) \partial_v\bar V_1\big((v^*,c^*),F_{X,Y}\big)
+ h_{22}(v^*,c^*) \partial_v\bar V_2\big((v^*,c^*),F_{X,Y}\big).
\end{equation*}
For $F_{X,Y}\in\F\cap \F^0(\R)^{\otimes 2}$, \eqref{eq:symmetry} yields that 
\be{eq:symmetry2}
h_{12}(v^*,c^*) f_Y(c^*) ( 1-\b) = h_{21}(v^*,c^*)f_X(v^*).
\ee
Exploiting part \eqref{ass:17} of Proposition \ref{prop:CoVaR} we get that $h_{21}(v^*,c^*) =0$. Since $f_{Y}(c^*)>0$, also $h_{12}(v^*,c^*) =0$. Due to the surjectivity of $\mT$ onto $\A$, we obtain that for all $(v,c)\in\interior(\A)$,
\(
h_{12}(v,c)= h_{21}(v,c)=0.
\)
Now we exploit \eqref{eq:symmetry} for a general point $(v,c)\in\interior(\A)$, such that $F_X(v) \neq \b$ and $F_Y(c) \neq \a$. For $F_{X,Y}\in\F\cap \F^0(\R)^{\otimes 2}$ we obtain
\(
\partial_ch_{11}(v,c) (F_X(v) - \b) 
= (F_{Y}(c)- \a) \big[\partial_v h_{22}(v,c) (1-F_X(v)) - h_{22}(v,c) f_X(v)\big].
\)
Again invoking condition \eqref{ass:17}, we get that $h_{22}(v,c)=0$. Since we can perform the argument for any $(v,c)\in\interior(\A)$, making use of the surjectivity, it holds that 
\(
h_{22}\equiv \partial_v h_{22} \equiv0.
\)
Therefore, also $\partial_c h_{11}\equiv0$. In conclusion, the matrix function $\vh$ 
is a diagonal matrix and the only non vanishing entry is $h_{11}(v,c) = g(v)$
for some function $g\colon\R\to\R$. Hence, using \eqref{eq:Osband} we obtain that 
\(
\partial_c \bar S\big((v,c),F_{X,Y}\big) =0
\)
for all $(v,c)\in\interior(\A)$. Since $\interior(\A)$ is open, non-empty and simply connected, the expected score is constant in $c$. This shows that $S$ cannot be strictly $\F$-consistent.
\end{proof}

\begin{proof}[{\textbf{Proof of Proposition~\ref{prop:CoES}:}}]
The proof basically follows along the lines of the proof of Proposition \ref{prop:CoVaR}.
So we get the counterpart of \eqref{eq:Osband}
\be{eq:Osband3}
\nabla \bar S\big((v,c,e),F_{X,Y}\big) = \vh(v,c,e) \bar \mV\big((v,c,e),F_{X,Y}\big),
\ee
where $\vh(v,c,e) \in\R^{3\times 3}$.
For $F_{X,Y}\in\F\cap \F^0(\R)^{\otimes 2}$ we get that
\begin{align*}
\bar V_1\big((v,c,e),F_{X,Y}\big) & = F_X(v) - \b,\\
\bar V_2\big((v,c,e),F_{X,Y}\big) & = (1-F_X(v))(F_{Y}(c) - \a),\\
\bar V_3\big((v,c,e),F_{X,Y}\big) & = (1-F_X(v))\Big[e- \frac{1}{1-\a}\Big(\int_c^\infty y f_Y(y)\diff y + c(F_Y(c)-\a)\Big)\Big],
\end{align*}
and
\begin{align*}
\partial_v \bar V_1\big((v,c,e),F_{X,Y}\big) &= f_X(v),\\
\partial_c \bar V_1\big((v,c,e),F_{X,Y}\big) &= \partial_e \bar V_1\big((v,c,e),F_{X,Y}\big) =0,\\
\partial_v\bar V_2\big((v,c,e),F_{X,Y}\big)  &= -f_X(v)(F_Y(c)-\a),\\
\partial_c\bar V_2\big((v,c,e),F_{X,Y}\big)  &= f_Y(c)(1-F_X(v)),\\
\partial_e\bar V_2\big((v,c,e),F_{X,Y}\big)  &= 0,\\
\partial_v\bar V_3\big((v,c,e),F_{X,Y}\big) &= f_X(v)\Big[e- \frac{1}{1-\a}\Big(\int_c^\infty y f_Y(y)\diff y + c(F_Y(c)-\a)\Big)\Big],\\
\partial_c\bar V_3\big((v,c,e),F_{X,Y}\big) &= -(1-F_X(v))(F_Y(c)-\a)/(1-\a),\\
\partial_e\bar V_3\big((v,c,e),F_{X,Y}\big) &= (1-F_X(v)).
\end{align*}
The symmetry of the Hessian implies that for all $(v,c,\mu)\in\interior(\A)$ and for all $F_{X,Y}\in\F\cap \F^0(\R)^{\otimes 2}$,
\begin{align}\label{eq:sym1}
\partial_c\partial_v \bar S((v,c,e), F_{X,Y}) = \partial_v\partial_c \bar S((v,c,e), F_{X,Y}), \\ 
\label{eq:sym2}
\partial_e\partial_v \bar S((v,c,e), F_{X,Y}) = \partial_v\partial_e \bar S((v,c,e), F_{X,Y}), \\ 
\label{eq:sym3}
\partial_e\partial_c \bar S((v,c,e), F_{X,Y}) = \partial_c\partial_e \bar S((v,c,e), F_{X,Y}).
\end{align}
Using \eqref{eq:Osband3} to evaluate \eqref{eq:sym1} at $(v^*,c^*,e^*)= \mT(F_{X,Y})\in \interior(\A)$ implies the identity
\begin{equation*}
h_{12}(v^*,c^*,e^*)f_Y(c^*)(1-\b) = h_{21}(v^*,c^*,e^*) f_X(v^*).
\end{equation*}
Using condition \eqref{ass:37} of Proposition \ref{prop:CoES} together with the positivity of $f_Y$ and with the surjectivity of $\mT$ implies that 
\(
h_{12}\equiv h_{21} \equiv 0.
\)
Using \eqref{eq:Osband3} to evaluate \eqref{eq:sym2} at $(v^*,c^*,e^*)= \mT(F_{X,Y})\in \interior(\A)$ implies that
\(
h_{13}(v^*,c^*,e^*)(1-\b) = h_{31}(v^*,c^*,e^*)f_X(v^*).
\) 
Using the same arguments, we obtain that 
\(
h_{13}\equiv h_{31} \equiv 0.
\)
Finally, evaluating \eqref{eq:sym3} with \eqref{eq:Osband3} at 
$(v^*,c^*,e^*)= \mT(F_{X,Y})\in \interior(\A)$ yields that 
\(
h_{23}(v^*,c^*,e^*)(1-\b)= h_{32}(v^*,c^*,e^*)f_Y(c^*)(1-\b). 
\)
Hence, invoking the surjectivity of $\mT$ and condition \eqref{ass:37}, we get that 
\(
h_{23}\equiv h_{32} \equiv 0.
\)
We already know that $\vh$ is a diagonal matrix. \eqref{eq:sym1} therefore simplifies to
\(
\partial_c h_{11}(v,c,e)(F_X(v) - \b) = (F_Y(c) - \a)\big[\partial_v h_{22}(v,c,e)(1-F_X(v)) - h_{22}(v,c,e)f_X(v)\big].
\)
Evaluating this for points such that $F_X(v)\neq \beta$ and $F_Y(c)\neq \a$, we can again exploit condition \eqref{ass:37} of Proposition \ref{prop:CoES} and the surjectivity of $\mT$ to obtain
\(
h_{22} \equiv  \partial_v h_{22} \equiv \partial_c h_{11}\equiv 0.
\)
Now, we can evaluate \eqref{eq:sym3} for some point $(v,c,e)\in \interior(\A)$ such that $F_Y(c) \neq\a$ and $e\neq \frac{1}{1-\a}\left(\int_c^\infty y f_Y(y)\diff y + c(F_Y(c)-\a)\right)$ to obtain
\(
0 = \big[e- \frac{1}{1-\a}\big(\int_c^\infty y f_Y(y)\diff y + c(F_Y(c)-\a)\big)\big]\times
\Big[\big(1-F_X(v)\big)\partial_v h_{33}(v,c,e) +f_X(v) h_{33}(v,c,e)\Big].
\)

Again invoking condition \eqref{ass:37} and the surjectivity, we get that 
\(
h_{33}\equiv0.
\)
Finally, \eqref{eq:sym2} simplifies to
\(
\partial_e h_{11}(v,c,e)(F_X(v) - \b) = 0,
\)
which implies that
\(
\partial_e h_{11} \equiv0.
\)
In conclusion, the only non-vanishing component of $\vh$ is $h_{11}(v,c,e) = g(v)$ for some function $g\colon\R\to\R$. 
With the same arguments as in the proof of Proposition \ref{prop:CoVaR} we deduce that $(v,c,e)\mapsto \bar S\big((v,c,e), F_{X,Y}\big)$ is constant in $c$ and $e$. Therefore, $S$ cannot be strictly $\F$-consistent.
\end{proof}

\begin{proof}[{\textbf{Proof of Proposition~\ref{prop:MES}:}}]
The proof first follows along the lines of the proof of Proposition~\ref{prop:CoVaR}
 up to equation \eqref{eq:symmetry}, \textit{mutatis mutandis}.
For $F_{X,Y}\in \F\cap \F^0(\R)^{\otimes 2}$, we get
\begin{align*}
\bar V_1\big((v,\mu),F_{X,Y}\big) &= F_X(v) - \b,\\
\bar V_2\big((v,\mu),F_{X,Y}\big) &= (1-F_X(v))\big(\mu - \E(F_{Y})\big),\\
\partial_v \bar V_1\big((v,\mu),F_{X,Y}\big) &= f_X(v),\\
\partial_\mu \bar V_1\big((v,\mu),F_{X,Y}\big) &= 0, \\
\partial_v \bar V_2\big((v,\mu),F_{X,Y}\big) &= -f_X(v)\big(\mu -\E(F_{Y})\big),\\
\partial_\mu \bar V_2\big((v,\mu),F_{X,Y}\big) &= (1-F_X(v)).
\end{align*}

Let $(v^*,\mu^*) = \mT(F_{X,Y})$.
Since $\partial_\mu\partial_v \bar S((v^*, \mu^*),F_{X,Y}) = \partial_v\partial_\mu \bar S\big((v^*,\mu^*),F_{X,Y}\big)$ by the symmetry of the Hessian, we obtain
\(
h_{12}(v^*,\mu^*) (1-\b) = h_{21}(v^*,\mu^*)f_X(v^*).
\)
Exploiting condition \eqref{ass:27} of Proposition \ref{prop:MES} and the surjectivity of $\mT$ we get that 
\(
h_{12}\equiv h_{21} \equiv0.
\)
Now, we exploit the symmetry of the Hessian at a general point $(v,\mu)\in\interior(\A)$ such that $F_X(v)\neq \b$ and $\E(F_Y)\neq \mu$. We obtain
\begin{align*}
\partial_\mu\partial_v \bar S((v, \mu),F_{X,Y}) &= \partial_\mu h_{11}(v,\mu) \big(F_X(v)-\b\big),\\
\partial_v\partial_\mu \bar S((v, \mu),F_{X,Y}) &=\big(\mu - \E(F_{Y})\big)
\big[\partial_vh_{22}(v,\mu)(1-F_X(v)) - h_{22}(v,\mu)f_X(v)\big].
\end{align*}
Again invoking \eqref{ass:27} of Proposition \ref{prop:MES} and the surjectivity of $\mT$ we get that  
\(
h_{22}\equiv \partial_v h_{22} \equiv \partial_c h_{11}\equiv0.
\)
Therefore, the only non-vanishing component of $\vh$ is $h_{11}(v,\mu) = g(v)$ for some function $g\colon\R\to\R$. 
Concluding as in the proof of Proposition \ref{prop:CoVaR}, $S$ fails to be strictly $\F$-consistent for $\mT$.
\end{proof}

\section{Monte Carlo simulations}\label{app:Monte Carlo Simulations}

Here, we investigate the finite-sample performance of our tests of $H_0^{=}$ and $H_0^{\lex}$. For distinct VaR forecasts, we use $\mathcal{T}_n$ (defined in \eqref{eq:T_n}) to test $H_0^{=}$, and $\mathcal{T}_n^{\OS}$ (defined in \eqref{eq:equiv NR}) to test $H_0^{\lex}$. For identical VaR forecasts, we use $\mathcal{T}_{2n}=\sqrt{n}\overline{d}_{2n}/\widehat{\sigma}_{22,n}^{1/2}$ to test both hypotheses.
We do so for $(\VaR,\CoVaR)$ and $(\VaR,\CoVaR,\CoES)$ forecasts generated from a model of asset returns that may reasonably approximate actual stock market dynamics. We leave out the pair $(\VaR, \MES)$ for brevity, because the results are qualitatively similar. Throughout, we consider one-step-ahead forecasts. For one-step-ahead forecasts, \citet{DM95}, \citet{GW06} and others recommend to use $m_n=0$ for $\widehat{\mOmega}_n$ in \eqref{eq:hat Omega_n} (where an empty sum is defined to be zero), implicitly assuming that $\{\vd_t\}$ is uncorrelated. 
We follow their suggestion here.

\subsection{Data generating process}\label{DGP}

We simulate $\{(X_t,Y_t)\}_{t=1,\ldots,n}$ from a bivariate GARCH model with GAS-driven $t$-copula. 
In particular,
we use GARCH(1,1) marginals
\begin{equation}\label{eq:GARCH}
\begin{split}
X_t &= \sigma_{x,t}\varepsilon_{x,t},\qquad \sigma_{x,t}^2 = \omega_x+\alpha_xX_{t-1}^2+\beta_{x}\sigma_{x,t-1}^2,\\
Y_t &= \sigma_{y,t}\varepsilon_{y,t},\qquad \sigma_{y,t}^2 = \omega_y+\alpha_yY_{t-1}^2+\beta_{y}\sigma_{y,t-1}^2,
\end{split}
\end{equation}
where $\omega_z>0$, $\alpha_z\geq0$, $\beta_z\geq0$ ($z\in\{x, y\}$). The innovations $(\varepsilon_{x,t}, \varepsilon_{y,t})$ are identically distributed with $\varepsilon_{x,t}\sim N(0,1)$ and $\varepsilon_{y,t}$ having a standardised Student's $t_{5}$-distribution.

For the dependence structure, consider the probability integral transforms (PITs) $\mU_t=\big(F_x(\varepsilon_{x,t}), F_y(\varepsilon_{y,t})\big)$, where $F_x$ and $F_y$ denote the cdfs of $\varepsilon_{x,t}$ and $\varepsilon_{y,t}$, respectively. Define $\mathfrak{F}_t=\sigma\big((X_t,Y_t), (X_{t-1},Y_{t-1}), \ldots\big)$. Then, we model the conditional distribution of $\mU_t\mid \mathfrak{F}_{t-1}$ with a $t$-copula with density $c(\,\cdot\,; \vartheta, \rho_t)\colon [0,1]^2\to[0,\infty)$, where $\vartheta>0$ is the (constant) degrees-of-freedom parameter, and $\rho_t\in(-1,1)$ is the time-varying correlation parameter. Letting the correlation vary over time while keeping the degrees of freedom constant is standard in the literature \citep{DP15,BC19}. To restrict $\rho_t$ to $(-1,1)$, we model the real-valued parameter $f_t$ with a GAS dynamic
and then set $\rho_t := \Delta(f_t)$ where $\Delta(x)=[1-\exp(-x)]/[1+\exp(x)]\in(-1,1)$. 
In particular, following \citet{CKL13}, we set
\begin{equation}\label{eq:GAS}
	f_{t}=\omega^{\dagger}+\alpha^{\dagger}s_{t-1}+\beta^{\dagger}f_{t-1}, \qquad 
	s_{t-1}:=\frac{\diff}{\diff f_{t-1}}\log c\big(\mU_{t-1}; \vartheta,\Delta(f_{t-1})\big).
\end{equation}

We choose the empirically plausible values $(\omega^{\dagger},\ \alpha^{\dagger},\ \beta^{\dagger},\ \vartheta)=(0.001,\ 0.1,\ 0.99,\ 5)$ for the dependence parameters, and $(\omega_x,\ \alpha_x,\ \beta_{x})=(\omega_y,\ \alpha_y,\ \beta_{y})=(0.001,\ 0.2,\ 0.79)$ for the marginal parameters \citep[see, e.g.,][]{DP15,Hog20a+}.

\subsection{Risk forecasts}\label{Risk Forecasts}

In practice, the most popular approach to compute (risk) forecasts is a moving-window approach. In this case, a window that is rolled through the sample is used as the basis for frequent (often daily) re-estimation of the model producing the risk forecasts (e.g., a bivariate GARCH with GAS-driven $t$-copula, or a simple white noise model). We adopt such a rolling-window scheme in the empirical application in Section~\ref{Empirical Application}. However, since here we perform 10\,000 replications for the simulation results to have low standard errors, such an approach would be computationally infeasible. Instead, we use a fixed-window approach here. That is, we obtain model parameter estimates from some `in-sample' period $\{(X_t,Y_t)\}_{t=-r+1,\ldots,0}$, and use these estimates to produce `out-of-sample' risk forecasts for $\{(X_t,Y_t)\}_{t=1,\ldots,n}$.

We describe how we do this next. First, we generate a trajectory $\{(X_t,Y_t)\}_{t=-r+1,\ldots,n}$ from the bivariate GARCH in \eqref{eq:GARCH} with GAS-driven $t$-copula. Then, based on the `in-sample' data $\{(X_t,Y_t)\}_{t=-r+1,\ldots,0}$, we estimate the marginal parameters $(\omega_x,\alpha_x,\beta_x)$ and $(\omega_y,\alpha_y,\beta_y)$ via standard Gaussian quasi-maximum likelihood estimation \citep{FZ04}. In particular, we obtain the estimated conditional variances $\widehat{\sigma}_{x,t}^2=\widehat{\omega}_x+\widehat{\alpha}_x X_{t-1}^2+\widehat{\beta}_{x}\widehat{\sigma}_{x,t-1}^2$ (and similarly for $\widehat{\sigma}_{y,t}^2$) and the standardized residuals $\widehat{\varepsilon}_{x,t}=X_t/\widehat{\sigma}_{x,t}$ and $\widehat{\varepsilon}_{y,t}=Y_t/\widehat{\sigma}_{y,t}$. 
By doing so, we also compute the estimated PITs $\widehat{\mU}_t=\big(\widehat{F}_x(\widehat{\varepsilon}_{x,t}), \widehat{F}_y(\widehat{\varepsilon}_{y,t})\big)$, where $\widehat{F}_x$ and $\widehat{F}_y$ are the empirical cdfs of the $\widehat{\varepsilon}_{x,t}$ and $\widehat{\varepsilon}_{y,t}$, respectively. Based on the $\widehat{\mU}_t$, we estimate $(\omega^{\dagger},\alpha^{\dagger},\beta^{\dagger},\vartheta)^\prime$ via maximum likelihood, as proposed by \citet{CKL13}. 

In line with the empirical application in the main paper, we forecast the conditional risk measures $\VaR_{t}(X_t) = \VaR_{\b}(F_{X_t\mid\mathfrak F_{t-1}})$, 
$\CoVaR_{t}(Y_t\vert X_t) = \CoVaR_{\a|\b}(F_{(X_t,Y_t) \mid \mathfrak F_{t-1}})$, and
$\CoES_{t}(Y_t\vert X_t) = \CoES_{\a|\b}(F_{(X_t,Y_t) \mid \mathfrak F_{t-1}})$. Here, $F_{(X_t,Y_t) \mid \mathfrak F_{t-1}}(x,y) = \p\{X_t\le x, Y_t\le y \mid \mathfrak F_{t-1}\} =: \p_{t-1}\{X_t\le x, Y_t\le y\}$ for $x,y\in\R$, and $\mathfrak{F}_{t-1} = \sigma\big((X_{t-1},Y_{t-1}), (X_{t-2}, Y_{t-2}), \ldots\big)$. To ease notation, we suppress the dependence of the risk measures on the risk levels $\alpha$ and $\beta$, which we set to be $\alpha=\beta=0.95$.

Now, we describe how we calculate the `out-of-sample' (conditional) risk forecasts for the observations $\{(X_t, Y_t)\}_{t=1,\ldots,n}$. We use the estimated parameters from the `in-sample' period to compute the (one-step-ahead) Value-at-Risk for the $X_t$ via
\begin{equation*}
	\widehat{\VaR}_t := \widehat{\VaR}_t(X_t) = \widehat{\sigma}_{x,t}\widehat{\VaR}_{x,\varepsilon},\qquad t=1,\ldots,n,
\end{equation*}
where $\widehat{\VaR}_{x,\varepsilon}$ is the empirical $\beta$-quantile of the $\{\widehat{\varepsilon}_{x,t}\}_{t=-r+1,\ldots,0}$. 
In terms of the conditional distribution, CoVaR solves $\p_{t-1}\{Y_{t}>\CoVaR_{t}(Y_t\vert X_t)\mid X_{t}\geq\VaR_{t}(X_t)\}=1-\alpha$. Using this and the fact that $\p_{t-1}\{X_{t}\geq\VaR_{t}(X_t)\}=1-\beta$, elementary calculations yield that $\CoVaR_{t}(Y_t\vert X_t)$ is implicitly defined by the equation
\begin{equation}\label{eq:CoVaR forecast}
	(1-\alpha)(1-\beta) = \int_{F_y(\CoVaR_{t}(Y_t\vert X_t)/\sigma_{y,t})}^{1}\int_{\beta}^{1}c\big((u_1,u_2), \vartheta, \rho_t\big)\D u_1\D u_2.
\end{equation}
Plugging in estimates (e.g., replacing $\vartheta$ with the `in-sample' estimate $\widehat{\vartheta}$, replacing $F_y$ with the `in-sample' empirical cdf $\widehat{F}_y$, etc.) and numerically solving \eqref{eq:CoVaR forecast} for $\CoVaR_{t}(Y_t\vert X_t)$, yields the forecasts $\widehat{\CoVaR}_t$ for $t=1,\ldots,n$. From a discretization of the integral in formula \eqref{eq:CoES}, we also obtain CoES predictions $\widehat{\CoES}_t$ for $t=1,\ldots,n$. We collect these (correctly specified) VaR and systemic risk forecasts in the vector $\widehat{\vr}_{t}=(\widehat{r}_t^{\VaR}, \widehat{\vr}_{t}^{\SR})=\big(\widehat{\VaR}_{t},\widehat{\CoVaR}_{t}, \widehat{\CoES}_{t}\big)$.

Clearly, it is very challenging to come up with two forecasts from distinct models, such that $H_{0}^{=}$ is satisfied for our data-generating process; see also \cite{ZhuTimmermann2020}. Hence, for simplicity, we confound the correctly specified forecasts $\widehat{\vr}_t$ by two multiplicative noises of equal magnitude to simulate under $H_0^{=}$. We use (strictly positive) multiplicative noise instead of additive noise, because this ensures that positive risk forecasts remain positive after confounding them. This is important, since the 0-homogeneous loss functions we use require positive (systemic) risk forecasts; see also Section~\ref{Description of Tests}. The noises may be thought of as uninformative (multiplicative) predictors. Specifically, consider $\widehat{\vr}_{t,(i)}=(\widehat{r}_{t,(i)}^{\VaR}, \widehat{\vr}_{t,(i)}^{\SR})=(\widehat{\VaR}_{t,(i)},\widehat{\CoVaR}_{t,(i)}, \widehat{\CoES}_{t,(i)})$ ($i=1,2$), where $\widehat{\VaR}_{t,(i)}=\widehat{\VaR}_{t}\cdot\epsilon_{t,(i)}^{\VaR}$, $\widehat{\CoVaR}_{t,(i)}=\widehat{\CoVaR}_{t}\cdot\epsilon_{t,(i)}^{\CoVaR}$ and $\widehat{\CoES}_{t,(i)}=\widehat{\CoES}_{t}\cdot\epsilon_{t,(i)}^{\CoES}$. Thus, the forecasts from the correctly specified model are contaminated by some mutually independent, positive multiplicative noises $\{\epsilon_{t,(i)}^{\VaR}\}_{t=1,\ldots,n}$, $\{\epsilon_{t,(i)}^{\CoVaR}\}_{t=1,\ldots,n}$ and $\{\epsilon_{t,(i)}^{\CoES}\}_{t=1,\ldots,n}$, each assumed to be serially independent and Weibull-distributed. Hence, they have common density $f(x)=(k/\lambda)(x/\lambda)^{k-1}e^{-(x/\lambda)^{k}}$ for $x>0$, where we choose $k=10$ for the shape parameter and $\lambda=0.3$ for the scale parameter. This results in a mean of $\lambda\Gamma(1+1/k)\approx 0.285$, so that the $\widehat{\vr}_{t,(i)}$ are seriously misspecified on average. Clearly, $H_0^{=}$ (and also $H_0^{\lex}$) holds for $\widehat{\vr}_{t,(1)}$ and $\widehat{\vr}_{t,(2)}$.

\subsection{Description of tests}\label{Description of Tests}

In evaluating the risk forecasts, we choose multi-objective scores with components that render the score differences 0-homogeneous. That is, the score differences remain unchanged when scaling all forecasts and observations with a positive constant. 
This leads to `unit-consistent' comparisons and, quite often, higher power of DM tests \citep{NZ17,Tay17,PZC19}. We also found in unreported simulations that 0-homogeneous score differences lead to DM tests with higher power than other degrees of homogeneity, e.g., 1-homogeneous scores. This may be explained as follows. Our data and forecasts are conditionally heteroscedastic. Invoking the 0-homogeneity, the loss differences $\vd_t$ then are homoscedastic. Thus, we obtain smaller `standard errors' in $\widehat{\mOmega}_n$ by using 0-homogenous instead of, say, 1-homogeneous score differences. This effect is likely not outweighed by the increased average score differences of 1-homogenous vis-\`{a}-vis 0-homogenous scores. (Recall that 1-homogeneous scores are more sensitive to outliers.) If this is indeed the case, then 0-homogenous scores give tests higher power. When using 0-homogeneous score differences, one has to assume additionally that risk forecasts are positive. However, this assumption is typically innocuous.

To achieve 0-homogeneity in the first component of our multi-objective score, we set $h(z)=\log (z)$ and $a^{\VaR}(x,y)=\log(x)$ in the $S^{\VaR}$-function from Theorem~\ref{thm:mo el} \citep[Example~4]{NZ17}. When comparing $(\VaR, \CoVaR)$ forecasts, we choose $g(z)=\log (z)$, $a(y)=\log(y)$ and $a^{\CoVaR}(x,y)=0$ in $S_v^{\CoVaR}$ from \eqref{eq:S_CoVaR}, leading to
\begin{equation}\label{eq:S CoVaR appl}
	S_v^{\CoVaR}\big(c,(x,y)\big) 
			= \one\{x>v\} \Big[\big(\one\{y\le c\} - \a\big) \log(c) +\one\{y\ge c\}\log(y)\Big].
\end{equation} 
When also $\CoES$ forecasts are of interest, we use $S_v^{(\CoVaR, \CoES)}$ from \eqref{eq:S_CoES} with $g(z)=0$, $a(y)=a^{\CoES}(x,y)=0$,
and $\phi(z)=-\log (z)$ ($z>0$). 
This choice leads to
\begin{equation}\label{eq:S CoES appl}
	S_v^{(\CoVaR, \CoES)}\big((c,e),(x,y)\big)=\frac{\one\{x>v\}}{1-\alpha}\Bigg[\one\{y>c\}\frac{y-c}{e}+(1-\alpha)\Big(\frac{c}{e}-1+\log (e)\Big)\Bigg].
\end{equation} 

To test $H_0^{=}$ and $H_0^{\lex}$, we simulate $R=10\,000$ trajectories $\{(X_t,Y_t)\}_{t=-r+1,\ldots,n}$ from the bivariate GARCH model \eqref{eq:GARCH} with GAS-driven $t$-copula. 
We fix $r=1000$ and let $n\in\{500, 1000\}$. For each trajectory, we fit the model, forecast VaR and systemic risk, and compute the score differences. Then, depending on the forecasts, we carry out the formal tests based on $\mathcal{T}_n$ (for a test of $H_0^{=}$) and $\mathcal{T}_n^{\OS}$ (for a test of $H_0^{\lex}$), or---when VaR forecasts are identical---based on $\mathcal{T}_{2n}$. In the latter case, $H_0^{=}$ ($H_0^{\lex}$) boils down to $\E[\overline{d}_{2n}]=0$ ($\E[\overline{d}_{2n}]\leq0$) and is rejected if $|\mathcal{T}_{2n}|>\Phi^{-1}(1-\nu/2)$ ($\mathcal{T}_{2n}>\Phi^{-1}(1-\nu)$); see Remark~\ref{rem:degen}.

For distinct VaR forecasts, we consider $\vr_{t,(1)}=\widehat{\vr}_{t,(1)}$ and $\vr_{t,(2)}=\widehat{\vr}_{t,(2)}$ under both $H_0^{=}$ and $H_0^{\lex}$. Since both forecasts are equally misspecified, we have $\E[\overline{\vd}_n]=\vzero$ when we evaluate either $(\VaR, \CoVaR)$ or $(\VaR, \CoVaR, \CoES)$. Under the alternative, we compare the forecasts $\vr_{t,(1)}=\widehat{\vr}_{t,(1)}=(\widehat{r}_{t,(1)}^{\VaR}, \widehat{\vr}_{t,(1)}^{\SR}) $ and $\vr_{t,(2)}=(\widehat{r}_{t,(2)}^{\VaR}, \widehat{\vr}_{t}^{\SR}) =(\widehat{\VaR}_{t,(2)}, \widehat{\CoVaR}_{t}, \widehat{\CoES}_{t}) $. The VaR forecasts are comparable, giving $\E[\overline{d}_{1n}]=0$. Yet, the systemic risk forecasts of $\vr_{t,(1)}$ are inferior to those of $\vr_{t,(2)}$, implying $\E[\overline{d}_{2n}]>0$. In particular, we simulate under the alternative of both $H_0^{=}$ and $H_0^{\lex}$. 

For identical VaR forecasts, we consider $\vr_{t,(1)}=\widehat{\vr}_{t,(1)}=(\widehat{r}_{t,(1)}^{\VaR}, \widehat{\vr}_{t,(1)}^{\SR}) $ and $\vr_{t,(2)}=(\widehat{r}_{t,(1)}^{\VaR}, \widehat{\vr}_{t,(2)}^{\SR}) $ under $H_0^{=}$ and $H_0^{\lex}$. Under the alternative, we compare $\vr_{t,(1)}=(\widehat{r}_{t,(1)}^{\VaR}, \widehat{\vr}_{t,(1)}^{\SR}) $ and $\vr_{t,(2)}=(\widehat{r}_{t,(1)}^{\VaR}, \widehat{\vr}_t^{\SR}) $, such that again $\E[\overline{d}_{2n}]>0$.

\subsection{Simulation results}\label{Simulation Results}

\begin{table}
	\caption{\label{tab:freq} Rejection frequencies (in \%) of $H_0^{=}$ and $H_0^{\lex}$ using Remark~\ref{rem:degen} (column `$\widehat{\VaR}_{t,(1)}$') and Theorem~\ref{thm:DM} (column `$\widehat{\VaR}_{t,(i)}$').}
	
	\centering
		\begin{tabular}{lllccccc}
			\toprule
	$n$	 	& Null				& Forecasts	& \multicolumn{2}{c}{(VaR, CoVaR)}					&&  \multicolumn{2}{c}{(VaR, CoVaR, CoES)} 	\\
												\cline{4-5} 					\cline{7-8}\\[-2ex]
				&     				&						& \multicolumn{2}{c}{$r_{t,(i)}^{\VaR}$} 				 && \multicolumn{2}{c}{$r_{t,(i)}^{\VaR}$} \\
												\cline{4-5} 					\cline{7-8}\\[-2ex]
				&     				&						&	$\widehat{\VaR}_{t,(1)}$ & $\widehat{\VaR}_{t,(i)}$ && $\widehat{\VaR}_{t,(1)}$ & $\widehat{\VaR}_{t,(i)}$  \\
			\midrule
	500 	&$H_0^{=}$		&	$\vr_{t,(1)}^{\SR}=\widehat{\vr}_{t,(1)}^{\SR}$	& \multirow{2}{*}{5.1} & \multirow{2}{*}{4.7} 	&& \multirow{2}{*}{4.0} & \multirow{2}{*}{4.5} 	\\
				&     				& $\vr_{t,(2)}^{\SR}=\widehat{\vr}_{t,(2)}^{\SR}$  \\
				\cline{3-8}\\[-2ex]
				&     		 		& $\vr_{t,(1)}^{\SR}=\widehat{\vr}_{t,(1)}^{\SR}$	& \multirow{2}{*}{83.6} & \multirow{2}{*}{76.0} 	&& \multirow{2}{*}{85.0} & \multirow{2}{*}{76.5} 	\\
				&     		 		& $\vr_{t,(2)}^{\SR}=\widehat{\vr}_{t}^{\SR}$	\\
				\cline{2-8}\\[-2ex]
				&$H_0^{\lex}$	&	$\vr_{t,(1)}^{\SR}=\widehat{\vr}_{t,(1)}^{\SR}$	& \multirow{2}{*}{5.3} & \multirow{2}{*}{4.7} 	&& \multirow{2}{*}{5.1} & \multirow{2}{*}{4.9} 	\\
				&     				& $\vr_{t,(2)}^{\SR}=\widehat{\vr}_{t,(2)}^{\SR}$  \\
				\cline{3-8}\\[-2ex]
				&     				& $\vr_{t,(1)}^{\SR}=\widehat{\vr}_{t,(1)}^{\SR}$	& \multirow{2}{*}{89.2} & \multirow{2}{*}{81.2} 	&& \multirow{2}{*}{90.7} & \multirow{2}{*}{81.8} 	\\
				&     				& $\vr_{t,(2)}^{\SR}=\widehat{\vr}_{t}^{\SR}$	\\
				\midrule

	1000 	&$H_0^{=}$  	&	$\vr_{t,(1)}^{\SR}=\widehat{\vr}_{t,(1)}^{\SR}$	& \multirow{2}{*}{5.2} & \multirow{2}{*}{4.4} 	&& \multirow{2}{*}{4.3} & \multirow{2}{*}{4.6} 	\\
				&     		  	& $\vr_{t,(2)}^{\SR}=\widehat{\vr}_{t,(2)}^{\SR}$  \\
				\cline{3-8}\\[-2ex]
				&     			  & $\vr_{t,(1)}^{\SR}=\widehat{\vr}_{t,(1)}^{\SR}$	& \multirow{2}{*}{95.3} & \multirow{2}{*}{92.8} 	&& \multirow{2}{*}{96.2} & \multirow{2}{*}{94.2} 	\\
				&     			  & $\vr_{t,(2)}^{\SR}=\widehat{\vr}_{t}^{\SR}$	\\

					\cline{2-8}\\[-2ex]
				&$H_0^{\lex}$	&	$\vr_{t,(1)}^{\SR}=\widehat{\vr}_{t,(1)}^{\SR}$	& \multirow{2}{*}{5.0} & \multirow{2}{*}{4.5} 	&& \multirow{2}{*}{4.8} & \multirow{2}{*}{4.5} 	\\
				&     	  		& $\vr_{t,(2)}^{\SR}=\widehat{\vr}_{t,(2)}^{\SR}$  \\
				\cline{3-8}\\[-2ex]
				&     			  & $\vr_{t,(1)}^{\SR}=\widehat{\vr}_{t,(1)}^{\SR}$	& \multirow{2}{*}{96.5} & \multirow{2}{*}{93.8} 	&& \multirow{2}{*}{96.6} & \multirow{2}{*}{94.9} 	\\
				&     			  & $\vr_{t,(2)}^{\SR}=\widehat{\vr}_{t}^{\SR}$	\\	
			\bottomrule
		\end{tabular}
\end{table}

We test $H_0^{=}$ and $H_0^{\lex}$ at significance level $\nu=5\%$.
The columns `$\widehat{\VaR}_{t,(i)}$' (`$\widehat{\VaR}_{t,(1)}$') in Table~\ref{tab:freq} present the results for distinct (identical) VaR forecasts. We draw the following conclusions:

\begin{enumerate}
	\item Size is adequate in all cases, even for $n=500$. This is encouraging, since effective sample sizes in risk forecast comparisons are small, and are smaller still when systemic risk is concerned.
	
	\item Comparing the results for $n=500$ and $n=1000$, we see the expected increase in power. 
	
	\item When comparing systemic risk forecasts, it is slightly easier to distinguish between two different (CoVaR, CoES) forecasts than if CoVaR alone is evaluated. For instance, under $H_0^{=}$ and for $n=500$ there is a statically significant difference between $\vr_{t,(1)}=(\widehat{\VaR}_{t,(1)}, \widehat{\CoVaR}_{t,(1)})$ and $\vr_{t,(2)}=(\widehat{\VaR}_{t,(2)}, \widehat{\CoVaR}_{t}) $ in 76.0\% of all cases, and when CoES is added to the evaluation (so that $\vr_{t,(1)}=(\widehat{\VaR}_{t,(1)}, \widehat{\CoVaR}_{t,(1)}, \widehat{\CoES}_{t,(1)}) $ and $\vr_{t,(2)}=(\widehat{\VaR}_{t,(2)}, \widehat{\CoVaR}_{t}, \widehat{\CoES}_{t}))$ this number is 76.5\%. One reason for this may be the increase in informational content of the forecasts, such that differences are easier to identify.
	
	\item Under both distinct and identical VaR forecasts, departures from $H_0^{\lex}$ are detected more often than from $H_0^{=}$. E.g., for $\vr_{t,(1)}=(\widehat{\VaR}_{t,(1)}, \widehat{\CoVaR}_{t,(1)}) $ and $\vr_{t,(2)}=(\widehat{\VaR}_{t,(2)}, \widehat{\CoVaR}_{t}) $, we reject $H_0^{=}$ with a percentage of 76.0\% for $n=500$. However, $H_0^{\lex}$ is rejected in 81.2\% of all cases for the same forecasts. This is as expected for one-sided tests. 
	
	\item Generally, it is easier to detect differences in the systemic risk forecasts $\vr_{t,(1)}^{\SR}$ and $\vr_{t,(2)}^{\SR}$, when the same VaR forecasts are used compared to when VaR forecasts are merely comparable. E.g., for $\vr_{t,(1)}=(\widehat{\VaR}_{t,(1)}, \widehat{\CoVaR}_{t,(1)})$ and $\vr_{t,(2)}=(\widehat{\VaR}_{t,(1)}, \widehat{\CoVaR}_{t})$, we reject $H_0^{=}$ with probability 83.6\% for $n=500$. 
	Yet, when comparing $\vr_{t,(1)}=(\widehat{\VaR}_{t,(1)}, \widehat{\CoVaR}_{t,(1)}) $ and $\vr_{t,(2)}=(\widehat{\VaR}_{t,(2)}, \widehat{\CoVaR}_{t})$, $H_0^{=}$ is rejected in only 76.0\% of all replications.	
	Even when taking into account the slightly lower size in the latter case, there appears to be a power difference. Intuitively, when VaR forecasts are identical, instead of merely comparable, the test can exclusively focus on differences in the systemic risk component, thus increasing power.
	
\end{enumerate}

Overall, the simulations show that our tests work quite well in that they keep size and have good power in detecting significant differences in forecast ability.

\section{The perils of using non-strict identification functions for backtesting}
\label{app:Remark 4.5}

For $t=1,\ldots,n$, consider $(X_t, Y_t)\overset{\text{i.i.d.}}{\sim}N(\vzero, \mOmega)$ with 
$\mOmega=\left(\begin{smallmatrix}
1 & 0.5 \\ 0.5 & 2
\end{smallmatrix}\right)$. 
Suppose the goal is to forecast $\VaR_{\b}(X_t)$ and $\CoVaR_{\a|\b}(Y_t|X_t)$ for $\a=\b=0.95$. Then, the correctly specified forecasts (conditional or unconditional---due to the i.i.d.\ nature this is the same) are $\widehat{\VaR}_{\b,t}=\VaR_{\b}(X_t)\approx1.64$ and $\widehat{\CoVaR}_{\a|\b,t}=\CoVaR_{\a|\b}(Y_t|X_t)\approx3.23$. We also consider the misspecified forecasts $\widehat{\VaR}_{\b^\prime,t}=\VaR_{\b^\prime}(X_t)\approx2.33$ and $\widehat{\CoVaR}_{\a^\prime|\b^\prime,t}=\CoVaR_{\a^\prime|\b^\prime}(Y_t|X_t)\approx2.23$, where $\a^\prime=0.75$ and $\b^\prime=0.99$. Note that in this case $(1-\a')(1-\b') = (1-\a)(1-\b)$, such that the (non-strict) identification function of \citet{Banulescu-RaduETAL2019} in \eqref{eq:Banulescu-Radu CoVaR id} vanishes in expectation, yet our strict one $\mV^{(\VaR, \CoVaR)}$ from \eqref{eq:V_CoVaR} does not.
We test the null hypothesis of correct unconditional calibration, which---in terms of our strict identification function---reads as
\[
	H_0 \colon \E\Big[\mV^{(\VaR, \CoVaR)}\big((\widehat{\VaR}_{t}, \widehat{\CoVaR}_{t}),(X_t,Y_t)\big)\Big] =\vzero \qquad\text{for all }t=1,2,\ldots 
\]
for $(\widehat{\VaR}_{t}, \widehat{\CoVaR}_{t})\in\big\{(\widehat{\VaR}_{\b,t}, \widehat{\CoVaR}_{\a|\b,t}), (\widehat{\VaR}_{\b^\prime,t}, \widehat{\CoVaR}_{\a^\prime|\b^\prime,t}) \big\}$. 
Following \citet{NZ17}, we test $H_0$ using a standard Wald-test based on 
\begin{align*}
	\overline{\mV}^{(\VaR, \CoVaR)} &= \frac{1}{n}\sum_{t=1}^{n}\mV^{(\VaR, \CoVaR)}\big((\widehat{\VaR}_{t}, \widehat{\CoVaR}_{t}),(X_t,Y_t)\big)\qquad \text{and}\\
	\overline{V}&=\quad\frac{1}{n}\sum_{t=1}^{n}V\big((\widehat{\VaR}_{t}, \widehat{\CoVaR}_{t}),(X_t,Y_t)\big),
\end{align*}
respectively.

Table~\ref{tab:Rem45} displays the rejection frequencies---calculated from 10\,000 replications---for the tests (at a 5\%-level) for $n\in\{500, 1000\}$. 
For our test statistic $\overline{\mV}^{(\VaR, \CoVaR)}$ size is close to the nominal level, and the misspecified forecasts are identified as such almost with certainty. However, with \citeauthor{Banulescu-RaduETAL2019}'s \citeyearpar{Banulescu-RaduETAL2019} test statistic $\overline{V}$, the null is rejected with about the same frequency for the correctly specified \textit{and} the misspecified forecasts. This is because the null that is actually tested with $\overline{V}$ is
\[
	H_0^{*} \colon \E\Big[V\big((\widehat{\VaR}_{t}, \widehat{\CoVaR}_{t}),(X_t,Y_t)\big)\Big] =0 \qquad\text{for all }t=1,2,\ldots 
\]
where $V$ is given in \eqref{eq:Banulescu-Radu CoVaR id}. Obviously, $H_0^{*}$ does not amount to a null of correct calibration, because it is also satisfied for specific misspecified forecasts, such as $(\widehat{\VaR}_{\b^\prime,t}, \widehat{\CoVaR}_{\a^\prime|\b^\prime,t})$. This highlights that traditional backtests should not be carried out using non-strict identification functions, because $H_0^{*}$ is too broad in that it accommodates possibly misspecified forecasts.

\begin{table}[t!]
	\caption{\label{tab:Rem45}Rejection frequencies (in \%) of $H_0$ based on $\overline{\mV}^{(\VaR, \CoVaR)}$ and $\overline{V}$ for $n\in\{500, 1000\}$. Results are displayed for correctly specified forecasts $(\widehat{\VaR}_{\b,t}, \widehat{\CoVaR}_{\a|\b,t})$ and misspecified forecasts $(\widehat{\VaR}_{\b^\prime,t}, \widehat{\CoVaR}_{\a^\prime|\b^\prime,t})$.}
	\centering
		\begin{tabular}{lccccc}
			\toprule
	$n$	 	&  \multicolumn{2}{c}{$\overline{\mV}^{(\VaR, \CoVaR)}$}					&&  \multicolumn{2}{c}{$\overline{V}$} 	\\
												\cline{2-3} 					\cline{5-6}\\[-2ex]
				&	 $\begin{pmatrix}\widehat{\VaR}_{\b,t}\\ \widehat{\CoVaR}_{\a|\b,t}\end{pmatrix}$ & $\begin{pmatrix}\widehat{\VaR}_{\b^\prime,t}\\ \widehat{\CoVaR}_{\a^\prime|\b^\prime,t}\end{pmatrix}$				 &&  $\begin{pmatrix}\widehat{\VaR}_{\b,t}\\ \widehat{\CoVaR}_{\a|\b,t}\end{pmatrix}$ & $\begin{pmatrix}\widehat{\VaR}_{\b^\prime,t}\\ \widehat{\CoVaR}_{\a^\prime|\b^\prime,t}\end{pmatrix}$	 \\
			\midrule
	500 	& 6.8 &   99.9 & &    28.9  &  28.1 \\
				\midrule          
	1000 	& 6.4 &   100  & &    8.1   &  8.3 \\
			\bottomrule
		\end{tabular}
\end{table}


Observe that particularly for $n=500$, size for the test based on $\overline{V}$ is very far from the nominal level of $5\%$. This may be explained as follows: From \eqref{eq:Banulescu-Radu CoVaR id} we see that the identification function is constant except when the $X$ component exceeds the VaR forecast \textit{and} the $Y$ component exceeds the CoVaR forecast. Under the null this only occurs with probability $(1-\alpha)(1-\beta)=(1-0.95)(1-0.95)=0.25\%$, such that the effective sample size is reduced from $n=500$ to $n\cdot0.25\%=1.25$. Thus, size distortions can be expected. Note that this issue is somewhat alleviated for our two-dimensional score $\mV^{(\VaR,\CoVaR)}$, where the effective sample size---at least in the first component of the score---is $n(1-\beta)=25$, leading to much better size for $n=500$.

\begin{figure}[b!]
	\centering
		\includegraphics[width=\textwidth]{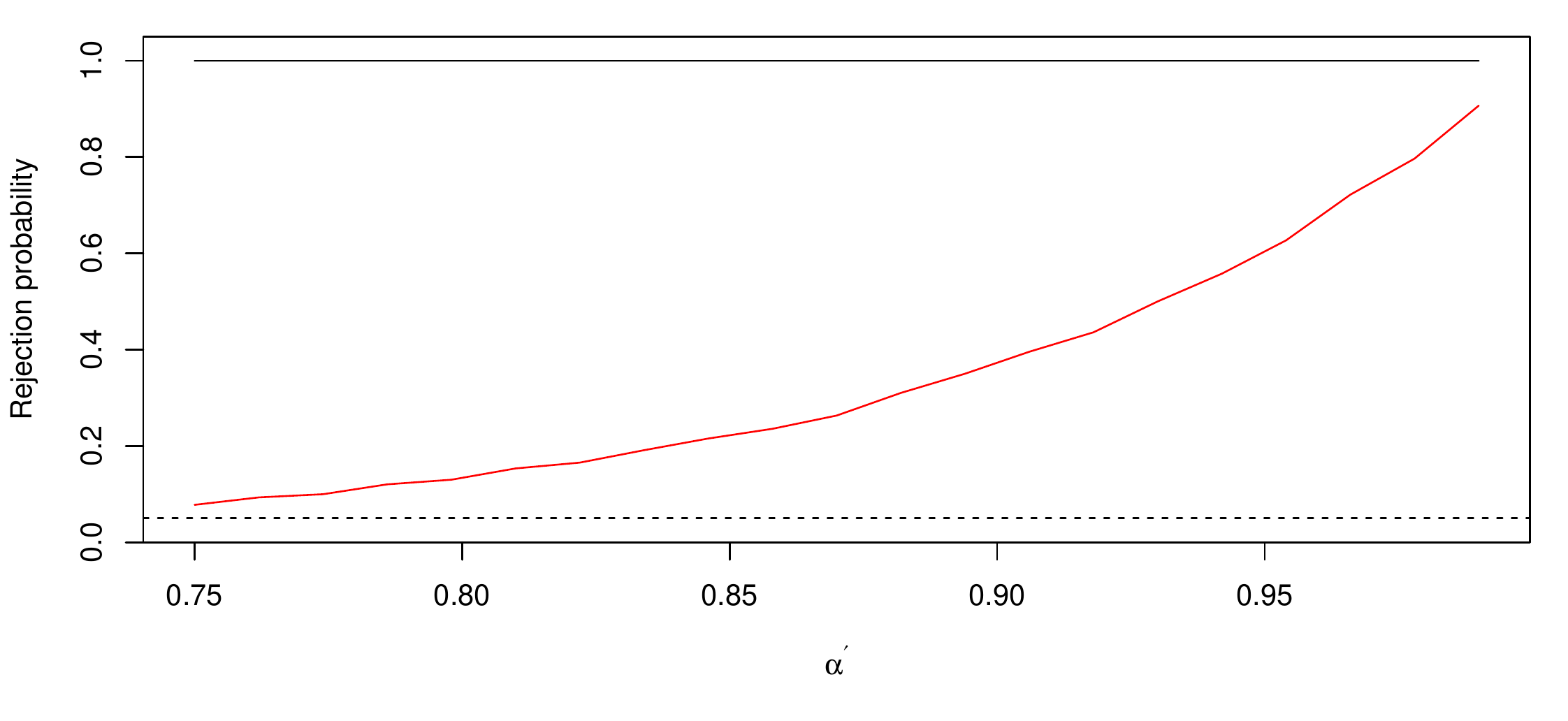}
	\caption{Rejection probabilities of $H_0$ as a function of $\alpha^\prime\in(0.75,0.99)$ for our $\overline{\mV}^{(\VaR, \CoVaR)}$ (black) and \citeauthor{Banulescu-RaduETAL2019}'s \citeyearpar{Banulescu-RaduETAL2019} $\overline{V}$ (red) based on $n=1000$. Nominal level of $5\%$ indicated by the dashed horizontal line.}
	\label{fig:Figure_comp}
\end{figure}

We emphasize that the non-strictness of \citeauthor{Banulescu-RaduETAL2019}'s \citeyearpar{Banulescu-RaduETAL2019} $V$ at \eqref{eq:Banulescu-Radu CoVaR id} is not only problematic when $(1-\alpha)(1-\beta)=(1-\alpha^\prime)(1-\beta^\prime)$ (which is already the case for uncountably many configurations), but in all cases where VaR is overpredicted (underpredicted) and CoVaR is underpredicted (overpredicted). This is because the different biases cancel each other out such that $V$ remains close to zero on average, leading to a loss of power. To illustrate this, we fix the probability levels of the misspecified forecasts at $\beta^\prime=0.99$ (as before) and let $\alpha^\prime$ vary between $(0.75, 0.95)$. Then, $(1-\alpha)(1-\beta)\neq(1-\alpha^\prime)(1-\beta^\prime)$, but VaR is still overpredicted and CoVaR is underpredicted. We again test $H_0$ using $\overline{\mV}^{(\VaR, \CoVaR)}$ and $\overline{V}$ for $(\widehat{\VaR}_{t}, \widehat{\CoVaR}_{t})\in\big\{(\widehat{\VaR}_{\b,t}, \widehat{\CoVaR}_{\a|\b,t}), (\widehat{\VaR}_{\b^\prime,t}, \widehat{\CoVaR}_{\a^\prime|\b^\prime,t}) \big\}$. These forecasts are misspecified for all $\alpha^\prime\in(0.75, 0.95)$, such that we expect high rejection frequencies.

Figure~\ref{fig:Figure_comp} plots the rejection probabilities as a function of $\alpha^\prime$ for $n=1000$. We see that $\overline{\mV}^{(\VaR, \CoVaR)}$ always identifies the forecasts as misspecified. As could be expected from Table~\ref{tab:Rem45}, for $\alpha^\prime=0.75$ the test based on $\overline{V}$ has only close to trivial power of $8.3\%$. However, the power increases only slowly for increasing $\alpha^\prime$. Even for $\alpha^\prime\in(0.95, 0.99)$, where both VaR and CoVaR are overpredicted, our proposal based on the strict identification function has markedly higher power than the test of \citet{Banulescu-RaduETAL2019}.

\end{appendix}


\end{bibunit}

\end{document}